\newcommand{\ubar}[1]{\underaccent{\bar}{#1}}
\theoremstyle{definition}
\newtheorem{definition}{Definition}[section]
\newtheoremstyle{mytheorem}
  {1pt plus 1.0pt minus 1.0pt}
  {1pt plus 1.0pt minus 1.0pt}
  {\itshape}
  {}
  {\itshape\bfseries}
  {.}
  {.2em}
  {\thmname{#1}\thmnumber{ #2} \thmnote{ {\the\thm@notefont(#3)}}}
\theoremstyle{mytheorem} 
\newtheorem{theorem}{Theorem}
\newtheorem{corollary}{Corollary}[theorem]
\newtheorem{lemma}[theorem]{Lemma}
\newtheorem{proposition}[theorem]{Proposition}
\theoremstyle{remark}
\newtheorem*{remark}{Remark}
\renewcommand*\env@matrix[1][*\c@MaxMatrixCols c]{%
  \hskip -\arraycolsep
  \let\@ifnextchar\new@ifnextchar
  \array{#1}}
\newcommand{\norm}[1]{\left\lVert#1\right\rVert}
\DeclarePairedDelimiter\abs{\lvert}{\rvert}%
\let\oldabs\abs
\def\abs{\@ifstar{\oldabs}{\oldabs*}}
\newcommand{\whencolumns}[2]{
#2
}
\newcommand{\whencolumns}[2]{
#1
}
\DeclareMathOperator{\spark}{spark}
\DeclarePairedDelimiter\ceil{\lceil}{\rceil}
\newcommand\blfootnote[1]{%
  \begingroup
  \renewcommand\thefootnote{}\footnote{#1}%
  \addtocounter{footnote}{-1}%
  \endgroup
}
\begin{document}
\title{\huge Source Coding Based Millimeter-Wave Channel Estimation with Deep Learning Based Decoding}


\author{\IEEEauthorblockN{Yahia Shabara \textit{Student Member, IEEE}, Eylem Ekici, \textit{Fellow, IEEE}, and C. Emre Koksal, \textit{Senior Member, IEEE}}}

\maketitle

\begin{abstract}
The \textit{speed} at which millimeter-Wave (mmWave) channel estimation can be carried out is critical for the adoption of mmWave technologies.
This is particularly crucial because mmWave transceivers are equipped with large antenna arrays to combat severe path losses, which consequently creates large channel matrices, whose estimation may incur significant overhead.
This paper focuses on the mmWave \textit{channel estimation} problem.
Our objective is to reduce the number of measurements required to reliably estimate the channel.
Specifically, channel estimation is posed as a ``\textit{source compression}'' problem in which measurements mimic an \textit{encoded} (compressed) version of the channel.
\textit{Decoding} the observed measurements, a task which is traditionally computationally intensive, is performed using a \textit{deep-learning-based} approach, facilitating a high-performance channel discovery.
Our solution not only outperforms state-of-the-art compressed sensing methods, but it also determines the lower bound on the number of measurements required for reliable channel discovery.
\end{abstract}
\begin{IEEEkeywords}
\whencolumns{\footnotesize}{}
Millimeter-Wave, Channel Estimation, Path Discovery, Sparse Recovery, Source Coding, Machine Learning.
\end{IEEEkeywords}
\blfootnote{This work was supported in part by the NSF CNS under Grant 1514260, Grant 1618566, Grant 1731698 and Grant 1814923.}

\IEEEpeerreviewmaketitle

\section{Introduction}
\IEEEPARstart{T}{he Rapid} increase in mobile data traffic has motivated the exploration of mmWave spectrum bands \cite{rappaport2013millimeter, pi2011introduction, akdeniz2014millimeter,cisco_2017}.
While mmWave communication promises orders of magnitude increase in data rates, it both i) suffers from severe path losses \cite{maccartney2013path} and ii) necessitates the use of power-hungry circuits to operate.
To overcome these problems, large-gain, highly-directional antenna arrays are proposed as a counter measure to path losses, along with less flexible, yet energy-efficient transceivers that no longer use fully-digital beamforming.
Large antenna arrays, however, create channel matrices with large dimensions, which are complex to estimate. When combined with limited transceiver capabilities, large scale channel estimation may take prohibitively long periods.
Reducing the number of measurements is thus a critical step towards facilitating mmWave networks. Fortunately, this does not necessarily degrade the quality of channel estimation due to the sparse nature of mmWave channels; a feature that has been revealed by empirical measurement studies and further adopted by statistical channel models \cite{akdeniz2014millimeter,rangan2014millimeter,haneda20165g}.

This work focuses on \textbf{\textit{the problem of mmWave channel estimation}} with the objective of decreasing the number of required measurements.
We treat this problem as that of path/beam discovery which is crucial for \textit{\textbf{initial link establishment}} between a transmitter (TX) and a receiver (RX) (also known as \textit{\textbf{Initial Access}}).
We solve this problem using a technique inspired by \textbf{\textit{binary source coding (data compression)}}.
Although binary codes are natively designed to compress binary data, we provide a foundation for the same codes to be used for compressing complex-valued data, as well.
We devise a method to obtain channel measurements such that they resemble a compressed version of the channel matrix.
To estimate the channel from the acquired measurements, we train a Deep Neural Network (DNN) that enables very high speed processing.
However, training DNNs that jointly process all measurements poses an overwhelming complexity.
Thus, we propose a novel computationally-tractable solution that sequentially processes the acquired measurements. This method is unique in the sense that it reduces the problem of estimating the \textit{channel matrix} as a whole into several smaller sub-problems of estimating the individual rows and columns of that matrix.
The key contributions of this work are as follows:
\begin{itemize}[noitemsep,topsep=0pt]
\item We show that lossless, fixed-rate, linear source codes can be used to design efficient channel measurements that can be \textit{uniquely} mapped to the underlying channels.
\item We accurately evaluate the number of measurements needed for reliable channel discovery (as opposed to a mere scaling law).
This number is dependent on the compression ratio of the chosen code.
\item We present a tight lower bound on the number of measurements needed to reliably discover the channel and provide a solution that achieves this bound.
\item We propose a high-performance DNN based measurement-to-channel mapping.
\item We show that our solution outperforms the state-of-the-art compressed sensing based solutions and the IEEE 802.11ad beam alignment method.
\end{itemize}

The mmWave channel estimation problem can generally be divided into two intertwined parts.
The first is: \textbf{\textit{how to obtain ``good'' measurements that can be used to reliably discover the channel?}}
and the second is: \textit{\textbf{how to map these measurements to corresponding channel estimates?}}
Motivated by our proposed solution, we name these two parts \textit{``Channel Encoding''} and \textit{``Measurement Decoding''}, respectively.
Encoding and decoding are intertwined because a selection of a specific decoding method often dictates (i) how the measurements are obtained, and (ii) the number of measurements for which this specific decoding method would yield ``good'' performance.
The dissociation of encoding and decoding as two sub-problems can be seen across almost all mmWave channel estimation research, albeit not always explicitly mentioned.
This distinction, however, facilitates the identification of key aspects upon which we could improve the quality of channel estimation.

A well-known classification of encoding paradigms is encoding \textit{with} vs. \textit{without} feedback.
Non-feedback encoding is better suited for simultaneous multi-user channel estimation, hence is scalable, while feedback-based encoding operates better at low SNR \cite{chiu2019active}.
Different decoding algorithms are also needed for these two types.
This paper focuses on encoding without feedback.

The state-of-the-art mmWave channel estimation algorithms rely on compressed Sensing (CS) to reduce the number of channel measurements\cite{CS4Wireless_TipsAndTricks, venugopal2017channel, alkhateeb2014channel,rodriguez2018frequency}.
Other approaches include: i) measurements with hierarchical beam patterns that sequentially narrow down the angular direction(s) which contain strong propagation paths, ii) measurements with overlapped beam patterns where each measurement combines signals received from a randomly selected set of angular directions \cite{hassanieh2018fast}, and iii) machine learning based algorithms for sparse recovery of mmWave channels \cite{li2019deep, he2018deep, huang2018deep, alkhateeb2018deep}.
Further, in \cite{shabaraBeamToN} we first introduced the idea of exploiting \textit{binary codes} for tackling mmWave channel estimation. In particular, we exploit the capability of error discovery of channel codes and construct an analogy to path discovery in mmWave channels.

\noindent \textit{Notations:} $x$ is a scalar quantity, while $\boldsymbol{x}$ is a vector and $\boldsymbol{X}$ is a matrix. The transpose of a matrix is denoted by $\boldsymbol{X}^T$, while $\boldsymbol{X}^*$ denotes its conjugate and $\boldsymbol{X}^H$ denotes the conjugate transpose.

\section{Related Work}
\textbf{Initial Access:} The ``Initial Access'' problem is concerned with finding the angular bearings of one or more propagation paths between a pair of \whencolumns{TX/RX}{TX and RX} nodes, without prior knowledge about previous channel values.
In mobile environments, these angular directions are expected to change after Initial Access. ``Beam Tracking'' methods are commonly used to correct for smaller angular changes and maintain the viability of active link(s) \cite{li2017analog,hashemi2018efficient}.
Nonetheless, due to the narrow beams at both TX and RX, established communication links are prone to blockage (by objects in the communication environment, and even the users themselves).
Hence, the initial link establishment stage might need to be repeated multiple times during every communication session.
This results in high overhead to establish coherent beams during the course of the session, if the initial access process is inefficient.
This paper focuses on the Initial Access problem.

\textbf{Compressed Sensing (CS):}
In CS theory, the main objective is to recover an unknown \textit{sparse vector} $\boldsymbol{q^a}$ using a small number (compared to the sparse vector dimensions) of linear measurements.
Measurements in CS, denoted by $\boldsymbol{y},$ are modeled as $\boldsymbol{y} = \boldsymbol{B} \boldsymbol{q^a}$, where $\boldsymbol{B}$ is the sensing matrix.
Hence, $\boldsymbol{B}$ is a linear transformation that amounts to encoding $\boldsymbol{q^a}$. Sparse recovery algorithms, on the other hand, amount to decoding $\boldsymbol{y}$.
To obtain ``good'' measurements (which best preserve the information contained in the channel matrix), the sensing matrix need to be stochastically optimized based on criteria like the $\spark{\hspace{-2pt}(\boldsymbol{B})}$ (i.e., minimum number of linearly dependent columns), the mutual coherence and the Restricted Isometry Property.

Since mmWave channel matrices are sparse, and since channel measurements are linear operations, CS became a dominant approach for tackling mmWave channel estimation problems. The main caveat here is that the standard CS problem is that of a sparse vector recovery, while mmWave channel estimation is a sparse matrix recovery.
This distinction poses some challenges in tackling mmWave channel estimation under the umbrella of CS.
To formulate MIMO channel estimation as a CS problem, a vectorization step is carried out (i.e., columns of matrices are stacked on top of each other to form one long vector).
Nonetheless, unlike standard CS problems in which elements of the sensing matrices are directly chosen and optimized, the mmWave sensing matrix is a function of the transmit precoding and receive combining vectors. 
This adds an extra layer of complexity which is often ignored under the premise that since CS often requires random sensing matrices, then random beamforming is an obvious necessity.
However, it is not immediately clear how a specific choice of precoders and combiners would affect the structure of $\boldsymbol{B}$, and therefore, the performance of sparse recovery.
Extending the design principles of sensing matrices from core CS theory to mmWave channel estimation is thus not straightforward and remains an open area of research.

Existing research on CS-based mmWave channel estimation relies on \textit{\textbf{random}} arbitrary choices of precoding and combining vectors, e.g. uniformly distributed phase shifts \cite{Alkhateeb_2015_HowManyMeasurements, lu2019comparison}.
When this solution is incorporated in mmWave channel estimation, it translates into designing antenna beam patterns of highly irregular shapes (see Fig. \ref{fig:rx_CS}). Such beam patterns are sensitive to variations of received signal power, thermal noise and resolution of ADCs and phase-shifters.
Our proposed source-coding-based solution overcomes these limitations by imposing better, well-structured antenna patterns, where, in each measurement, a specific angular direction is either included (with constant beamforming gain) or is excluded.
This provides better resilience to i) the presence of sidelobes, ii) variations in received signal power along any available path(s), iii) channel noise and iv) quantization error of ADCs and phase shifters.
Furthermore, the deterministic nature of our source-coding-based measurements allows us to provide theoretical guarantees for channel recovery at a precise number of measurements.
The source coding analogy also allows us to draw theoretical tight lower bounds on the number of measurements.

On the contrary, the number of required measurements in CS is commonly characterized as an order of magnitude.
For instance, several state-of-the-art sparse recovery algorithms require $O(L\log(\frac{n}{L}))$ measurements, where $n$ is the number of dimensions of the sparse vector and $L {\ll} n$ is its sparsity level \cite{CS4Wireless_TipsAndTricks, alkhateeb2014channel}.
This, however, is just a scaling law, which by definition, works in the asymptotic regime and is missing the constant scaling coefficient.
Compare this to our solution, which accurately specifies the required number of measurements (based on $n$ and $L$).

Developing efficient sparse recovery algorithms for CS-based mmWave channel estimation is a rich area of research.
Various algorithms have different computational complexities, recovery performance, favorable range of signal to noise ratio (SNR), etc.
A comparison between several classes of sparse recovery algorithms is provided in \cite{lu2019comparison}. These include convex relaxation (e.g. $\ell_1$ norm minimization), greedy iteration (e.g. Orthogonal Matching Pursuit (OMP)) and Bayesian Inference.
Other algorithms also include Approximate Message Passing (AMP) \cite{wen2016bayes} and its variants \cite{mo2018channel},
as well as machine learning based sparse recovery \cite{ma2019deep}.

\textbf{Machine Learning:}
Deep learning is very powerful in extracting patterns from large amounts of data.
It has been widely used in problems of computer vision, speech recognition and natural language processing.
Recently, it has also been applied to problems in communications \cite{wang2017deep}, including, but not limited to channel estimation \cite{he2018deep, huang2018deep, alkhateeb2018deep}.
For instance, in \cite{alkhateeb2018deep} the beamforming vectors at the TX and RX are \textit{``learned''} based on uplink pilot signals simultaneously received at multiple base stations.
The base stations share their received information on a cloud, on which data processing is performed.
This idea, however, is critically dependent on a dense deployment of base stations.
In \cite{he2018deep, huang2018deep}, deep learning is leveraged to ease the burden of heavy computations that would otherwise be required for measurement processing.

\textbf{Coding:}
Exploiting source codes for mmWave channel estimation has not been studied before.
Our earlier work in \cite{shabaraBeamToN} drew an analogy between \textit{path discovery} in mmWave channels and \textit{error discovery} in \textit{Linear Block Channel Codes} (LBC).
There exists a \textit{duality} between the error discovery problem of channel codes and linear source compression.
That is, we can use LBCs as \textit{linear} source compression codes, as well.
Nonetheless, the channel coding analogy did not naturally lend itself to characterizing the lower bound on the required number of measurements. This paper differs from \cite{shabaraBeamToN} in the following:
\begin{itemize}[noitemsep,topsep=0pt]
\item Channel measurements are envisaged as compressed versions of the channel, which are obtained based on lossless, fixed-rate, linear source codes.
\item The lower bound on the achievable number of measurements is accurately characterized using the entropy of the direction of the strong reflectors (in a stochastic spatial model). This not only provides a precise metric to quantify the efficiency of a used code, but it also provides a benchmark for evaluating other measurement schemes. 
\item A DNN-based measurement decoding is proposed and evaluated against the more computationally complex ``search'' method of \cite{shabaraBeamToN}.
\item A comparison to compressed sensing based mmWave channel estimation is provided, demonstrating the superiority of our proposed approach.
\end{itemize}

\textbf{Hashed Beams:} An idea of direction inclusion/exclusion used to generate antenna patterns was adopted in \cite{hassanieh2018fast}.
Specifically, every measurement combines signals coming from a \textbf{\textit{randomly}} chosen set of angular directions. This describes the encoding part and it resembles a \textit{random} binary code. For decoding, a threshold-based decision determines whether a strong propagation path exists (if a path exists, it lies at one of the directions included in this measurement).
The direction which was most frequently included in the measurements that revealed a strong path is declared as the angular direction of the strongest channel path.
This method discovers one path, and requires $O(L\log(n))$ measurements.
In our proposed approach, however, the angular directions whose respective beams are overlapped are \textbf{\textit{precisely determined}} using a carefully chosen code.
We also use an elaborate decoding method that is capable of discovering multiple paths.
It also guarantees a lower number of measurements since a randomly chosen code is not expected to outperform a carefully designed one.

\section{Motivating Example}
\begin{figure}[t]
\centering
\includegraphics[width=0.4\linewidth]{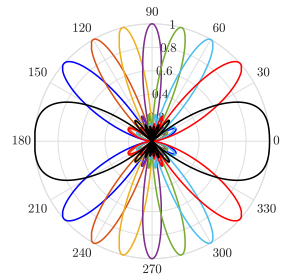}
\caption{\small Antenna sectors corresponding to angular directions $\{d_i\}_{i = 1}^{8}$.}
\label{fig:antennaSectors} 
\end{figure}

\begin{table}[t]
\caption{\small Channel measurements $\boldsymbol{y^s}$ corresponding to all $\boldsymbol{q^a} {\in} \mathcal{Q}^a$}
\label{table:ChannelToMeasurmentMap}
\centering
 \begin{tabular}{||c | c | c ||} 
 \hline
\multicolumn{2}{||c|}{Angular Channel $\boldsymbol{q^a}^T$} &
\multicolumn{1}{ c||}{Channel measurement ${\boldsymbol{y}^{\boldsymbol{s}}}^T$} \\
 \hline\hline
$d_0$ &$[0 \: 0 \: 0 \: 0 \: 0 \: 0 \: 0 \: 0]$ & $[0 \: 0 \: 0 \: 0]$\\ 
 \hline
$d_1$ & $[1 \: 0 \: 0 \: 0 \: 0 \: 0 \: 0 \: 0]$ & $[1 \: 0 \: 0 \: 0]$\\
 \hline
$d_2$ & $[0 \: 1 \: 0 \: 0 \: 0 \: 0 \: 0 \: 0]$ & $[0 \: 1 \: 0 \: 0]$\\
 \hline
$d_3$ & $[0 \: 0 \: 1 \: 0 \: 0 \: 0 \: 0 \: 0]$ & $[0 \: 0 \: 1 \: 0]$\\
 \hline
\whencolumns
{
\vdots & \vdots & \vdots \\
}
{
$d_4$ & $[0 \: 0 \: 0 \: 1 \: 0 \: 0 \: 0 \: 0]$ & $[0 \: 0 \: 0 \: 1]$\\
 \hline
$d_5$ & $[0 \: 0 \: 0 \: 0 \: 1 \: 0 \: 0 \: 0]$ & $[1 \: 1 \: 0 \: 0]$\\
 \hline
$d_6$ & $[0 \: 0 \: 0 \: 0 \: 0 \: 1 \: 0 \: 0]$ & $[0 \: 1 \: 1 \: 0]$\\
 \hline
$d_7$ & $[0 \: 0 \: 0 \: 0 \: 0 \: 0 \: 1 \: 0]$ & $[0 \: 0 \: 1 \: 1]$\\}
 \hline
$d_8$ & $[0 \: 0 \: 0 \: 0 \: 0 \: 0 \: 0 \: 1]$ & $[1 \: 1 \: 0 \: 1]$\\ 
\hline
\end{tabular}
\end{table}

Consider an RX equipped with an antenna array which can form $8$ distinct beams. These 8 beams divide the angular space into resolvable directions, i.e., $d_1, d_2, \dots, d_8$, as shown in Fig. \ref{fig:antennaSectors}.
The RX needs to establish a Line of Sight (LoS) communication link with TX.
This requires some sort of ``searching'' over the angular space at both TX and RX.
For ease of illustration, let us reduce the link establishment problem to that of ``Angle of Arrival (AoA) discovery'' at RX by assuming that TX transmits its signals omnidirectionally.
Let the path gain of LoS be denoted by $\alpha$, which can take arbitrary values. For simplicity of notation, let us assume $\alpha = 1$.

Our \textbf{\textit{objective}} is: Find the specific direction $d_{i^*}$ which contains the LoS path to TX using the \textbf{\textit{least possible number of measurements}}.
To do so, we envisage the measurement process as \textbf{\textit{lossless}}, \textbf{\textit{fixed-rate}} \textbf{\textit{channel compression}}.
This enables \textit{harnessing the power of source compression codes to minimize the number of measurements}.
It also enables deriving lower bounds on the number of measurements, using which we can accurately find the LoS (or conclude it is blocked).
We propose a measurement approach which has a \textbf{\textit{predetermined}} measurement sequence that (1) does not require feedback and (2) is capable of finding the LoS path, no matter in which $d_i$ it exists. Therefore, a constant number of measurements, $m$, is needed for all $d_i$.

\begin{figure*}[t]
\centering
\includegraphics[width=1\linewidth]{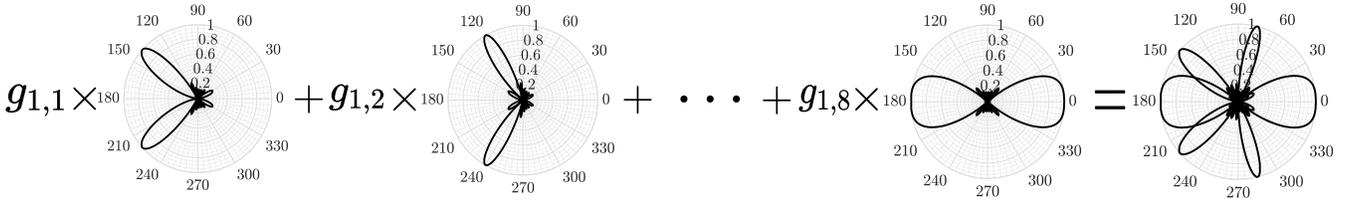}
\caption{\small Generating the beam pattern required for the $1^{st}$ measurement ($y^s_1$)}
\label{fig:beamforming_demo}
\end{figure*}

The key idea of LoS discovery using non-feedback linear source coding is to:
1) Construct a \textit{binary} codebook that represents the angular channel,
2) Find a proper fixed-rate linear source code that losslessly compresses all codewords in that codebook, and
3) Use this code to design the measurements.
These steps can be elucidated as follows:
\textbf{i)} Constructing the codebook is as easy as finding all possible \textit{binary} vectors that represent the LoS position.
Since $\alpha {=} 1$, this codebook is exactly the set of all possible channel vectors.
Let the channel between TX and RX be denoted by $\boldsymbol{q^a}$, and let $\mathcal{Q}^a$ be the set of all possible channels.
The channel $\boldsymbol{q^a} {\in} \mathcal{Q}^a$ has $8$ components; each one represents the path gain corresponding to a unique angular sector as shown in Fig. \ref{fig:antennaSectors}.
Table \ref{table:ChannelToMeasurmentMap} shows all possible $\boldsymbol{q^a}$ in our setup (for arbitrary gain values, simply replace the `1's in Table \ref{table:ChannelToMeasurmentMap} with $\alpha$).
\textbf{ii)} Choose the linear source code, denoted by its generator matrix $\boldsymbol{G}$ as:
\begin{equation}
\renewcommand{\arraystretch}{0.7}
\label{eqn:codeChoice}
\small
  \boldsymbol{G} =
  \begingroup
  \setlength\arraycolsep{6pt}
  \begin{pmatrix}
  1 & 0 & 0 & 0 & 1 & 0 & 0 & 1\\
  0 & 1 & 0 & 0 & 1 & 1 & 0 & 1\\
  0 & 0 & 1 & 0 & 0 & 1 & 1 & 0\\
  0 & 0 & 0 & 1 & 0 & 0 & 1 & 1\\
  \end{pmatrix}
  \endgroup
\end{equation}

To compress $\boldsymbol{q^a}$, we simply need to find the matrix multiplication $\boldsymbol{y^s} {=} \boldsymbol{G} \boldsymbol{q^a}$ (see Table \ref{table:ChannelToMeasurmentMap}).
\textbf{iii)} Design the measurements such that $\boldsymbol{y^s}$ is imitated by the measurement results.
\textbf{\textit{This is done by beamforming at RX}}.
Notice that the $i^{\text{th}}$ measurement, i.e., $y^s_i$ (the $i^{\text{th}}$ component of $\boldsymbol{y^s}$) is the multiplication of the $i^{\text{th}}$ row of $\boldsymbol{G}$ by $\boldsymbol{q^a}$.
Mathematically, this is just adding all elements $q^a_j$ of $\boldsymbol{q^a}$ which corresponds to $g_{i,j} {=} 1$, ($g_{i,j}$ is the element at row $i$, and column $j$ in $\boldsymbol{G}$). That is
\begin{equation}
\label{eqn:beamformingObjective}
\small y^s_i = \sum_{\substack{j = 1}}^8 q^a_j \times g_{i,j} = \sum_{j:\;g_{i,j}=1} q^a_j.
\end{equation}
Hence, measurement $i$ should only contain the directions $d_j$ whose corresponding $g_{i,j}$ equals $1$, and exclude the rest (Notice that we can map the $i^{\text{th}}$ row of $\boldsymbol{G}$ to the $i^{\text{th}}$ measurement and the $j^{\text{th}}$ column to the $j^{\text{th}}$ sector (direction $d_j$)).
Essentially, this means that in each of the measurements $y^s_i$, we combine the signals received at a specific set of AoA directions
This can be realized by carefully shaping the antenna pattern using beamforming.
Fig. \ref{fig:beamforming_demo} highlights this process for the $1^{\text{st}}$ measurement in which only the direction $d_1, d_5$ and $d_8$ are included.
\textit{The measurement results $\forall d_i$ are shown in Table \ref{table:ChannelToMeasurmentMap}}.
Note that the number of required measurements is $4$ for all $d_i$.

\textbf{Lower Bound:}
A fundamental question that arises here is: \textit{Can we find a better fixed-rate, lossless source code (other than the one given in Eq. (\ref{eqn:codeChoice})) that would produce fewer measurements, and hence increase the efficiency of the measurement process}?
To answer this question, we need to find \textbf{the minimum expected number of measurements} required to discover the channel using our proposed source coding solution.
This number is identical to the minimum average code length (over all fixed-rate, lossless codes).
The minimum average code length is well-known to be lower bounded by the \textbf{\textit{Shannon Entropy}}; denoted by $H_2$ and defined as
\begin{equation}
\label{eqn:entropy}
\small H_2\left(\boldsymbol{q^a}\right) 
= \small \sum_{\boldsymbol{q^a} \in \mathcal{Q}^a} \mathbb{P} \left( \boldsymbol{q^a} \right) \log_2\left( \frac{1}{\mathbb{P} \left( \boldsymbol{q^a} \right)} \right)
\end{equation}
Calculating $H_2\left(\boldsymbol{q^a}\right)$ requires knowledge of the probability distribution $\mathbb{P} \left( \boldsymbol{q^a} \right)$.
Fixed-rate codes, however, do not account for the frequency of $\boldsymbol{q^a}$ (hence, the mapping to equal-length codes).
By limiting the space of codes to be over fixed-rate codes, we can improve the bound to be
\begin{equation}
\label{eq:LB}
\ceil*{\log_2\left(\vert\mathcal{Q}^a\vert \right)} \geq H_2\left(\boldsymbol{q^a}\right)
\end{equation}
where $\vert\mathcal{Q}^a\vert = 9$ (recall that there exists $9$ possible scenarios for $\boldsymbol{q^a}$ as shown in Table \ref{table:ChannelToMeasurmentMap}).
This tighter bound is obtained by assuming a \textit{Uniform} distribution,  which is the entropy maximizing distribution, over the channel space $\mathcal{Q}^a$.
Eq. (\ref{eq:LB}) reveals that our chosen code achieves the lower bound of $4$ measurements.
We provide a formal discussion on the lower bound in Section \ref{sec:LB}.

\begin{remark}
This Motivating Example only dealt with a simplified channel model, with only one channel path and a fixed path gain of $\alpha = 1$.
However, in the rest of the paper, we will consider generalized channel models with possibly several paths of arbitrary path gain values, i.e., $\alpha {\in} \mathbb{C}$.
\end{remark}

\section{System Model}
\label{sec:systemModel}
\begin{figure}[t]
\centering
\includegraphics[width=1\linewidth]{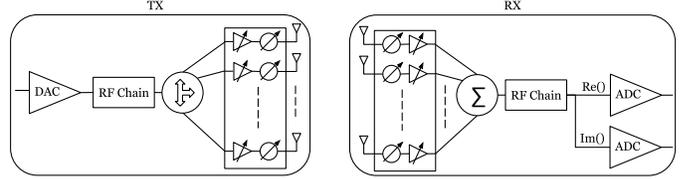}
\caption{\footnotesize Transceiver architecture: At TX, an $n_t$-way power splitter divides the transmit signal which is then passed through variable-gain amplifiers and phase-shifters. A single DAC is required since TX sends real valued signals. At RX, the acquired signal is passed through a similar network of power amplifiers and phase-shifters before being combined and fed to a single RF chain. Two ADCs are required to obtain I/Q components of received signals.}
\label{fig:architecture} 
\end{figure}
We consider point-to-point mmWave channels with $n_t$ and $n_r$ antennas at TX and RX, respectively.
Antennas at TX and RX form Uniform Linear Arrays (ULA). Generalization to Uniform Planar Arrays is straightforward but not considered in this paper for simplicity. Every antenna element is connected to a phase-shifter and a low-power variable-gain amplifier (VGA)\footnote{The use of VGAs in analog transceivers is common in practice. For instance, in IEEE 802.11ad \cite{IEEE80211ad} both phase and amplitude components are used to specify antenna weights, and commercial devices like Wilocity Wil6200 offer this capability. VGAs are also used along with phase-shifters in practice to help compensate for their phase-dependent insertion loss \cite{li201360}.}.
On the TX side, a single RF chain feeds its ULA through an $n_t$-way power splitter, while on the RX side, the outputs of the ULA, after being processed by amplifiers and phase-shifters, are then linearly combined using an adder and fed through to a single RF chain with in-phase (I) and quadrature (Q) channels.
Two mid-tread ADCs with $2^b{+}1$ levels are used to quantize the I and Q components of the received signal. The term $b$ loosely denotes the number of bits that describe the ADC resolution.
Fig. \ref{fig:architecture} depicts the transceiver architecture.

We assume single-tap channels where all channel paths have just one significant tap.
We also adopt a channel clustering model where paths between TX and RX form clusters in the angular domain \cite{rangan2014millimeter, rappaport2013millimeter}.
Let $L$ denote the number of available channel clusters.
Due to the sparse nature of mmWave channels, only a limited number of clusters exist\footnote{Prior knowledge about the number of clusters can be obtained from statistical channel information, which in turn are obtained from channel measurement campaigns. For instance, measurements carried out in New York City revealed that an average number of $2$ or $3$ clusters exists in mmWave channels at $28$ and $73$ GHz \cite{akdeniz2014millimeter}.}, where $L \ll n_r,n_t$.
Since distinct paths within each cluster cannot typically be resolved, we assume that each cluster contains only one path.
Each channel path (e.g., $l^{th}$ path) is attributed with an AoD $\theta_l$, an AoA $\phi_l$ and a path gain $\alpha_l$.
Let $\alpha_l^b \in \mathbb{C}$ denote the baseband path gain such that
\begin{equation}
\label{eqn:basebandPathGain}
\alpha_l^b = \alpha_l \sqrt{n_r n_t} e^{-j \frac{2 \pi \rho_l}{\lambda_c}},
\end{equation}
where $\rho_l$ is the path length and $\lambda_c$ is the carrier wavelength.
We define the \textit{directional cosines} of the AoD and AoA of the $l^{th}$ path as $\Omega_{tl} {\triangleq} \cos \left( \theta_l \right)$ and $\Omega_{rl} {\triangleq} \cos \left( \phi_l \right)$, respectively.
The transmit and receive spatial signatures at an arbitrary directional cosine $\Omega$ is denoted by $\boldsymbol{e_t}\left( \Omega \right)$ and $\boldsymbol{e_r}\left( \Omega \right)$, receptively.
We define $\boldsymbol{e_t}\left( \Omega \right)$ (and similarly $\boldsymbol{e_r}\left( \Omega \right)$) as:
\begin{equation}
\small
\boldsymbol{e_t} \left( \Omega \right) {=}
\frac{1}{\sqrt{n_t}}
\left(
1 ,
e^{ - j 2 \pi   \Delta_t \Omega} ,
e^{ - j 2 \pi 2 \Delta_t \Omega} ,
\hdots ,
e^{ - j 2 \pi (n_t-1) \Delta_t \Omega}
\right)^T 
\end{equation}\normalsize
where $\Delta_t$ and $\Delta_r$ are the antenna separations at TX and RX ULAs, normalized by $\lambda_c$.

Let $\boldsymbol{Q} \in \mathbb{C}^{n_r {\times} n_t}$ denote the channel matrix such that
\begin{equation}
\small \boldsymbol{Q} = \sum_{l=1}^{L} \alpha_l^b \boldsymbol{e_r}\left( \Omega_{rl} \right) \boldsymbol{e}_{\boldsymbol{t}}^H\left( \Omega_{tl} \right).
\end{equation}
The corresponding angular channel of $\boldsymbol{Q}$, whose rows and columns divide the channel into resolvable RX and TX angular bins, respectively, is denoted by $\boldsymbol{Q^a}$ and can be obtained as:
\begin{equation}
\label{eqn:Qa_from_Q}
\boldsymbol{Q^a} = \boldsymbol{U}_{\boldsymbol{r}}^H \boldsymbol{Q} \boldsymbol{U_t}.
\end{equation}
If $n_t$ or $n_r$ equals $1$, $\boldsymbol{Q}$ and $\boldsymbol{Q^a}$ are reduced to vectors which we denote by $\boldsymbol{q}$ and $\boldsymbol{q^a}$, respectively.
The matrices $\boldsymbol{U_t}$ and $\boldsymbol{U_r}$ are the transmit and receive \textit{unitary Discrete Fourier Transform} (\textit{DFT}) matrices whose columns form an orthonormal basis for the transmit and receive signal spaces $\mathbb{C}^{n_t}$ and $\mathbb{C}^{n_r}$, respectively.
The definition of $\boldsymbol{U_t}$ (likewise for $\boldsymbol{U_r}$) is given as \cite[Chapter 7.3.4]{tse2005fundamentals}
\begin{equation}
\boldsymbol{U_t} \triangleq
\begin{pmatrix}
\boldsymbol{e_t} \left( 0 \right) &
\boldsymbol{e_t} \left( \frac{1}{L_t} \right) &
\hdots &
\boldsymbol{e_t} \left( \frac{n_t-1}{L_t} \right)
\end{pmatrix},
\end{equation}
where $L_t {=} n_t \Delta_t$ ($L_r {=} n_r \Delta_r$) denote the length of the TX (RX) antenna array, normalized by $\lambda_c$.

Similar to \cite{alkhateeb2014channel, ChEstScniter14, RACE_2017}, we assume perfect sparsity where channel paths lie along AoD and AoA directions defined in $\boldsymbol{U_t}$ and $\boldsymbol{U_r}$. Hence, each path only contributes to a single component of $\boldsymbol{Q^a}$.
Thus, only $L$ (possibly less) non-zero components in $\boldsymbol{Q^a}$ exists.
The baseband model is
\begin{equation}
\boldsymbol{u_b} = \boldsymbol{Q}\boldsymbol{x} + \boldsymbol{n}
\end{equation}
where $\boldsymbol{u_b}$ is the received vector at RX front-end while $\boldsymbol{n} {\sim} \mathcal{CN}\left( \boldsymbol{0} , N_0 \boldsymbol{I}_{n_r} \right)$ is an i.i.d. complex Gaussian noise vector.
TX sends pilot symbols $s$, with power $P$, which are processed using precoders $\boldsymbol{f}_j {\in} \mathbb{C}^{n_t}$ to obtain the transmit vectors $\boldsymbol{x} {=} \boldsymbol{f}_j s$.
Hence, the transmit SNR is
\begin{equation}
\label{eqn:SNR}
\small \text{SNR} \triangleq \frac{P}{N_0} \times \mu,
\end{equation}
where $\mu$ is the average path loss (which depends on the carrier frequency, atmospheric conditions, average distance between TX and RX). Note that $\text{SNR}$ and $\mu$ are not path dependent.
The rx-combining vectors $\boldsymbol{w}_i {\in} \mathbb{C}^{n_r}$ are used to obtain the received symbols $u_{i,j}$ such that
\begin{equation}
u_{i,j} = \boldsymbol{w}_i^H \boldsymbol{Q} \boldsymbol{f}_j s  + \boldsymbol{w}_i^H \boldsymbol{n}
\label{eqn:error-free_symbols}
\end{equation}
where $i \in \{1,\dots,m_r\}, j \in \{1,\dots,m_t \}$.
Finally, a quantized version $u^s_{i,j}$ of $u_{i,j}$ is obtained such that
\begin{equation}
\label{eqn:noisyMeasurements}
u^s_{i,j} = \left[ \boldsymbol{w}_i^H \boldsymbol{Q} \boldsymbol{f}_j s + \boldsymbol{w}_i^H \boldsymbol{n} \right]_+,
\end{equation}
where $[\cdot]_+$ represents the quntization function.
The noise component, normalized by $\norm{\boldsymbol{w}_i}$ has a complex Gaussian distribution, i.e., $\frac{\boldsymbol{w}_i^H \boldsymbol{n}}{\norm{\boldsymbol{w}_i}} \sim \mathcal{CN}\left( 0 , N_0 \right)$.

Let $y^s_{i,j} = \boldsymbol{w}_i^H \boldsymbol{Q} \boldsymbol{f}_j s$ denote the error-free measured symbols and let $z_{i,j} = u^s_{i,j} - y^s_{i,j}$ denote the \textit{measurement error} which includes both \textbf{channel noise} and \textbf{quantization error}.

\subsection{Problem Formulation}
The problem we need to solve is to minimize the number of measurements $m = m_t \times m_r$ such that $\boldsymbol{Q}$ can be reliably reconstructed.
This problem can be mathematically stated as:
\begin{mini!}|l|[2]
{\substack{\boldsymbol{w_i}, \boldsymbol{f_j}, \mathcal{D} }}
         {m_t \times m_r} {}{P1:}
\addConstraint{y^s_{i,j} }{{=} \boldsymbol{w}_i^H \boldsymbol{Q} \boldsymbol{f}_j}
\addConstraint{\mathcal{D}(\left\lbrace y^s_{i,j}}\right\rbrace){ {=} \boldsymbol{Q}^a} \label{eqn:decodingFn}
\end{mini!}
Note that $y^s_{i,j}$ exists $\forall i,j \in \{1,\dots,m_r\} \times \{1,\dots,m_t\}$. That is, measurements are taken using all combinations of $\boldsymbol{f}_j$ and $\boldsymbol{w}_i$. We also use $s = 1$.
The design variables are the tx-precoders $\boldsymbol{f_j}$, the rx-combiners $\boldsymbol{w_i}$ and the decoding function $\mathcal{D}$.
We do not explicitly consider the impact of errors in this formulation but its effect will be studied in Section \ref{sec:errorAnalysis}.
Note also that due to the use of VGAs at each antenna element, the constant modulus constraint on $\boldsymbol{f_j}$ and $\boldsymbol{w_i}$, that is often incorporated in analog beamforming designs, is not needed.

\section{Source-Coding-Based Measurements}
\label{sec:lowerBound}
In this section, we formally introduce \textit{mm-wave beam discovery} as a source coding problem.
We initially focus on channels with single-transmit, multiple-receive antennas.
Specifically, we provide the conditions under which a chosen fixed-rate source code can be used to uniquely \textit{``encode''} channel vectors in $\mathbb{C}^{n_r}$ into measurement vectors of fewer components.
This setting is \textbf{\textit{identical to that of multiple-transmit, single-receive antennas.}}
In Section \ref{sec:DNN}, we show how to use DNNs to \textit{``decode''} the measurements and 
obtain an estimate for the observed channel.
Then, in Section \ref{sec:multiTxRx}, we consider general channels with multiple TX and RX antennas.
Now, let us start with the following discussion on source codes. 

\subsection{Source Codes}
Let $C$ be a binary \textit{linear} source code with encoding and decoding functions denoted by $\mathcal{E}_{\mathbb{F}_2}$ and $\mathcal{D}_{\mathbb{F}_2}$, respectively.
We refer to $C$ as the encoding-decoding function pair $\left(\mathcal{E}_{\mathbb{F}_2},\mathcal{D}_{\mathbb{F}_2}\right)$.
The subscript $\mathbb{F}_2$ denotes the finite field of two elements $\mathsf{0}_{\mathbb{F}_2}$ and $\mathsf{1}_{\mathbb{F}_2}$ (also referred to as $GF(2)$) over which the code $C$ is defined.
Later on, we will drop the subscripts to simplify notation as long as they can be inferred from the context.

\begin{definition}[Linear Source Code]
A source code $C$ whose encoding function $\mathcal{E}_{\mathbb{F}_2}$ is a linear function of the source sequences is called a \textbf{\textit{linear source code}}.
\end{definition}

Let $\boldsymbol{s}$ be a source sequence of length $n$ where $\boldsymbol{s} \in \mathcal{S} \subseteq \left\lbrace \mathsf{0}_{\mathbb{F}_2},\mathsf{1}_{\mathbb{F}_2} \right\rbrace^n$, and let $\boldsymbol{c_s} \in \mathcal{I}_{\mathcal{S}} \subseteq \left\lbrace \mathsf{0}_{\mathbb{F}_2},\mathsf{1}_{\mathbb{F}_2} \right\rbrace^m$ be its associated binary representation under $C$ where $\mathcal{I}_{\mathcal{S}}$ is the image of $\mathcal{S}$ under $\mathcal{E}_{\mathbb{F}_2}$.
Thus, using a linear source code $C$, we can find the representation of $\boldsymbol{s}$ under $C$ using
\begin{equation}
\boldsymbol{c_s} = \boldsymbol{G} \boldsymbol{s},
\end{equation}
where $\boldsymbol{G} \in \left\lbrace \mathsf{0}_{\mathbb{F}_2},\mathsf{1}_{\mathbb{F}_2} \right\rbrace ^{m \times n}$ is called the \textbf{\textit{generator matrix}}. Note that linearity guarantees fixed-rate since the code length is a constant value (equals the number of rows of $\boldsymbol{G}$).

The decoding function $\mathcal{D}_{\mathbb{F}_2}$, maps sequences $\boldsymbol{c_s}$ to a corresponding source sequence $\hat{\boldsymbol{s}} \in \hat{\mathcal{S}} \subseteq \left\lbrace \mathsf{0}_{\mathbb{F}_2},\mathsf{1}_{\mathbb{F}_2} \right\rbrace ^n$.
Suppose that $\mathcal{S}$ is the set of all sequences such that if $\boldsymbol{s}_1, \boldsymbol{s}_2 \in \mathcal{S}$, we have that $ \boldsymbol{s}_1 \neq \boldsymbol{s}_2 \xLongleftrightarrow{\text{iff}} \boldsymbol{c_s}_1 \neq \boldsymbol{c_s}_2$.
In other words, $\mathcal{E}_{\mathbb{F}_2}: \mathcal{S} \rightarrow \mathcal{I}_{\mathcal{S}}$ is injective (one-to-one).
Consequently, if we define the function $\mathcal{D}_{\mathbb{F}_2}$ over $\mathcal{I}_{\mathcal{S}}$ as the inverse function of $\mathcal{E}_{\mathbb{F}_2}$, i.e., $\mathcal{E}_{\mathbb{F}_2}^{-1} \triangleq \mathcal{D}_{\mathbb{F}_2} : \mathcal{I}_{\mathcal{S}} {\rightarrow} \mathcal{S}$ , then we have that
$\hat{\boldsymbol{s}} = \mathcal{D}_{\mathbb{F}_2}\left( \mathcal{E}_{\mathbb{F}_2} \left( \boldsymbol{s} \right) \right) = \boldsymbol{s}, \; \; \forall \boldsymbol{s} \in \mathcal{S}$.

\subsection{MmWave Beam Discovery}
\label{sec:LB_mmwave}
Let $\boldsymbol{q^a} \in \mathbb{C}^{n_r}$ denote the angular \textbf{\textit{channel vector}} between TX and RX. Define $\boldsymbol{q}^a_s \in \left\lbrace 0,1 \right\rbrace^{n_r}$ to be the \textbf{\textit{support vector}} associated with $\boldsymbol{q}^a$ such that
$\boldsymbol{q}^a_s = \begin{pmatrix}
q^a_{s_1} &
q^a_{s_2} &
\dots &
q^a_{s_{n_r}}
\end{pmatrix}^T$
where $q^a_{s_i} = 1$ if  $q^a_i \neq 0$ and $q^a_{s_i} = 0$ otherwise.
More generally, a support vector can be defined as:
\begin{definition} [Support vector]
The support vector $\boldsymbol{v_s}$ associated with an arbitrary $n-$dimensional vector $\boldsymbol{v} \in \mathbb{C}^n$ is a binary vector of the same size that identifies the non-zero components of $\boldsymbol{v}$ and whose components, $v_{s i}$, are defined as $v_{s i} = 1$ if $v_i \neq 0 $ and $v_{s i} = 0$ if $v_i = 0 $.
\end{definition}

We further define the set of non-zero indexes $\mathcal{X}_{\boldsymbol{v}}$ of an arbitrary vector $\boldsymbol{v}$ as follows:
\begin{definition}[Set of Non-Zero Indexes $\mathcal{X}_{\boldsymbol{v}}$]\label{def:setsOfIndexes}
For any arbitrary $n-$dimensional vector $\boldsymbol{v}$, we define $\mathcal{X}_{\boldsymbol{v}}$ as the set of indexes of its non-zero components, i.e.,
$\mathcal{X}_{\boldsymbol{v}} = \left\lbrace i \vert v_i \neq 0 \; , 0 {\leq} i {\leq} n{-}1 \right\rbrace$.
\end{definition}
Hence, if $\boldsymbol{v}_s$ is the support vector corresponding to $\boldsymbol{v}$, then we have that $\mathcal{X}_{\boldsymbol{v}} =\mathcal{X}_{\boldsymbol{v}_s}$, since $v_i = 0 {\iff} {v}_{s_i} = 0$.
Now, let $\mathcal{Q}^a$ be the set containing all possible channel vectors $\boldsymbol{q}^a$.
Also let $\mathcal{Q}^a_s$ be the set of all support vectors $\boldsymbol{q}^a_s$ such that their corresponding channels $\boldsymbol{q}^a {\in} \mathcal{Q}^a$.
An interesting behavior we have for these sets is as follows: If we have a channel $\boldsymbol{q}^{\boldsymbol{a}}_1$ whose support vector $\boldsymbol{q}^a_{s_1} \in \mathcal{Q}^a_s$, then removing any non-zero component(s) from $\boldsymbol{q}^{\boldsymbol{a}}_1$ (due to blockage for example) would still yield a valid channel $\boldsymbol{q}^{\boldsymbol{a}}_2 \in \mathcal{Q}^a$, whose support vectors $\boldsymbol{q}^{\boldsymbol{a}}_{s_2}$ also belongs to $\mathcal{Q}^a_s$.
We call this the \textbf{\textit{inclusion property}}.

\begin{definition}\label{remark:inclusion}
[Inclusion Properties of $\mathcal{Q}^a_s$]
\begin{enumerate}[(i)]
\item Let $\boldsymbol{q}^a_{s_1}, \boldsymbol{q}^a_{s_2} \in \left\lbrace0,1\right\rbrace^{n_r}$ such that
$\mathcal{X}_{\boldsymbol{q}^a_{s_2}} \subseteq \mathcal{X}_{\boldsymbol{q}^a_{s_1}}$.
If $\boldsymbol{q}^a_{s_1} \in \mathcal{Q}^a_s$, then $\boldsymbol{q}^a_{s_2} \in \mathcal{Q}^a_s$. \label{prop1}
\item $\boldsymbol{0} \in \mathcal{Q}^a_s$. In fact, this is a consequence of property \ref{prop1} above since for any $\boldsymbol{q}^a_s \in \mathcal{Q}^a_s$, we have that $\mathcal{X}_{\boldsymbol{0}} {=} \varnothing {\subseteq} \mathcal{X}_{\boldsymbol{q}^a_s}$.
\end{enumerate}
\end{definition}

Now, we are ready to present the theorem that establishes the conditions that need to be satisfied by a linear source code so that each possible channel $\boldsymbol{q^a}$ would result in a \textbf{\textit{unique}} measurement vector $\boldsymbol{y^s}$. Impairments under noise are not addressed in this theorem.
\begin{theorem}\label{thm:uniquenessOfMeasurements}
Consider a \textbf{\textit{binary}} linear source code $C$ whose encoding function $\mathcal{E}$ (defined by the binary generator matrix $\boldsymbol{G}$) is an injective function defined over $\mathcal{Q}^a_s \in \left\lbrace 0,1 \right\rbrace ^{n_r}$.
If we consider $\boldsymbol{G}$ to be defined over the complex field, then for all channel vectors $\boldsymbol{q}^a_1, \boldsymbol{q}^a_2 \in \mathcal{Q}^a \subseteq \mathbb{C}^{n_r}$ we have $\boldsymbol{q}^a_1 \neq  \boldsymbol{q}^a_2 \textit{ if and only if } \boldsymbol{G} \boldsymbol{q}^a_1 = \boldsymbol{y}_1^{\boldsymbol{s}} \neq \boldsymbol{y}_2^{\boldsymbol{s}} = \boldsymbol{G} \boldsymbol{q}^a_2 $.
\end{theorem}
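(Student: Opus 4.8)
The plan is to prove the nontrivial (\emph{only if}) direction, namely that distinct complex channels produce distinct measurements, since the \emph{if} direction is immediate: $\boldsymbol{G}$ is a well-defined map over $\mathbb{C}$, so equal inputs force equal outputs. I would therefore argue by contraposition, assuming $\boldsymbol{G}\boldsymbol{q}^a_1 = \boldsymbol{G}\boldsymbol{q}^a_2$ over $\mathbb{C}$ and deriving $\boldsymbol{q}^a_1 = \boldsymbol{q}^a_2$. By complex linearity this collapses to showing that $\boldsymbol{G}\boldsymbol{d} = \boldsymbol{0}$ implies $\boldsymbol{d} = \boldsymbol{0}$ for $\boldsymbol{d} = \boldsymbol{q}^a_1 - \boldsymbol{q}^a_2$. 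Since $\mathcal{X}_{\boldsymbol{d}} \subseteq \mathcal{X}_{\boldsymbol{q}^a_1} \cup \mathcal{X}_{\boldsymbol{q}^a_2}$, the entire theorem reduces to a single claim: writing $\boldsymbol{g}_j$ for the $j$-th column of $\boldsymbol{G}$, the columns $\{\boldsymbol{g}_j : j \in \mathcal{X}_{\boldsymbol{q}^a_1} \cup \mathcal{X}_{\boldsymbol{q}^a_2}\}$ are linearly independent over $\mathbb{C}$.

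I expect the heart of the argument, and the main obstacle, to be bridging the two fields: the hypothesis only gives injectivity of $\mathcal{E}$ over $\mathbb{F}_2$ (a statement about arithmetic modulo $2$), while the conclusion lives over $\mathbb{C}$. The bridge I would isolate as a lemma: any set of binary columns that is $\mathbb{F}_2$-linearly independent is automatically $\mathbb{C}$-linearly independent. I would prove it contrapositively. A complex dependence among real $\{0,1\}$-vectors splits, via real and imaginary parts, into a real dependence; because the vectors have integer entries, the null space of the associated integer matrix is spanned by rational vectors, so the dependence can be taken with integer coefficients; dividing out the largest common power of $2$ leaves a primitive integer dependence with at least one odd coefficient, and reducing modulo $2$ yields a nontrivial $\mathbb{F}_2$-dependence among the very same columns. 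This step is precisely where the binary $\{0,1\}$ structure of $\boldsymbol{G}$ is indispensable, as the statement fails for general complex matrices.

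With the lemma available, it remains to show the union columns are $\mathbb{F}_2$-independent, which I would obtain from $\mathbb{F}_2$-injectivity together with the inclusion property of $\mathcal{Q}^a_s$ (Definition~\ref{remark:inclusion}). Suppose not: some nonempty $S \subseteq \mathcal{X}_{\boldsymbol{q}^a_1} \cup \mathcal{X}_{\boldsymbol{q}^a_2}$ satisfies $\bigoplus_{j \in S} \boldsymbol{g}_j = \boldsymbol{0}$ over $\mathbb{F}_2$. I would split $S$ into $A = S \cap \mathcal{X}_{\boldsymbol{q}^a_1}$ and $B = S \setminus \mathcal{X}_{\boldsymbol{q}^a_1} \subseteq \mathcal{X}_{\boldsymbol{q}^a_2}$; these are disjoint with nonempty union, hence distinct. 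Since $A \subseteq \mathcal{X}_{\boldsymbol{q}^a_1}$ and $B \subseteq \mathcal{X}_{\boldsymbol{q}^a_2}$ with both $\mathcal{X}_{\boldsymbol{q}^a_1}, \mathcal{X}_{\boldsymbol{q}^a_2} \in \mathcal{Q}^a_s$, the inclusion property places the support vectors of $A$ and $B$ in $\mathcal{Q}^a_s$ as well. Rewriting $\bigoplus_{j\in S}\boldsymbol{g}_j = \boldsymbol{0}$ as $\bigoplus_{j\in A}\boldsymbol{g}_j = \bigoplus_{j\in B}\boldsymbol{g}_j$ then exhibits two distinct members of $\mathcal{Q}^a_s$ with identical images under $\mathcal{E}$, contradicting the assumed $\mathbb{F}_2$-injectivity. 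Thus the union columns are $\mathbb{F}_2$-independent, hence (by the lemma) $\mathbb{C}$-independent, which forces $\boldsymbol{d} = \boldsymbol{0}$, i.e. $\boldsymbol{q}^a_1 = \boldsymbol{q}^a_2$, completing the contrapositive.
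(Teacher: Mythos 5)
Your proposal is correct and follows the same overall decomposition as the paper: reduce the claim to $\mathbb{C}$-linear independence of the columns $\{\boldsymbol{g}_j : j \in \mathcal{X}_{\boldsymbol{q}^a_1} \cup \mathcal{X}_{\boldsymbol{q}^a_2}\}$, establish their $\mathbb{F}_2$-independence from injectivity plus the inclusion property, and then lift independence from $\mathbb{F}_2$ to $\mathbb{C}$. Your $A$/$B$ split of a hypothetical dependent set $S$ is exactly the paper's construction of the auxiliary support vectors $\boldsymbol{q}^a_{s_3}$ (support $\mathcal{X}_{\boldsymbol{q}^a_{s_1}} \cap S$) and $\boldsymbol{q}^a_{s_4}$ (support $S \setminus \mathcal{X}_{\boldsymbol{q}^a_{s_1}}$), just stated more cleanly; it also handles the edge case where one of $A$, $B$ is empty, since $\boldsymbol{0} \in \mathcal{Q}^a_s$. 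The one genuine divergence is the proof of the field-lifting lemma (Lemma \ref{lemma:vecIndOverFFAndR}): the paper extracts a full-rank $m \times m$ minor over $\mathbb{F}_2$ and observes that the corresponding integer determinant reduces mod $2$ to a nonzero element of $\mathbb{F}_2$, hence is odd and nonzero over $\mathbb{R}$; you instead pass from a complex dependence to a real one via real and imaginary parts, then to a rational and integer one via the rationality of the null space, strip the common power of $2$ to get a primitive relation, and reduce mod $2$. Both are valid; the paper's determinant route is shorter given the Leibniz formula, while your route avoids determinants entirely and makes more transparent exactly where the integrality of the $\{0,1\}$ entries is used.
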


\begin{proof}
Let $\boldsymbol{q}^a_1, \boldsymbol{q}^a_2 \in \mathcal{Q}^a$, and let $\boldsymbol{y}^s_i = \boldsymbol{G} \boldsymbol{q}^a_i$.
Now, assume that $\boldsymbol{q}^a_1 \neq \boldsymbol{q}^a_2$. Then, we have that
\begin{align}
\boldsymbol{y}^s_1 - \boldsymbol{y}^s_2 &= \boldsymbol{G} \boldsymbol{q}^a_1 - \boldsymbol{G} \boldsymbol{q}^a_2
= \boldsymbol{G} \underbrace{\left( \boldsymbol{q}^a_1 - \boldsymbol{q}^a_2 \right)}_{ = \boldsymbol{v}} = \boldsymbol{G} \boldsymbol{v}\\
&=  \sum_{i=1}^{n_r}  v_i \times \boldsymbol{g}_i 
= \sum_{i \in \mathcal{X}_{\boldsymbol{v}}} v_i \times \boldsymbol{g}_i \label{eqn:ys1-ys2}
\end{align}
where $\boldsymbol{g}_i$ is the $i^{\text{th}}$ column of $\boldsymbol{G}$.
\textbf{\textit{To show that $\boldsymbol{y}^s_1 - \boldsymbol{y}^s_2 \neq \boldsymbol{0}$, we need to show that all vectors $\boldsymbol{g}_i \; \forall i \in \mathcal{X}_{\boldsymbol{v}}$, are linearly independent.}} Otherwise, if such vectors $\boldsymbol{g}_i$ are linearly \textbf{\textit{dependent}}, then  $ \exists v_i \in \mathbb{R}$ for $i \in \mathcal{X}_{\boldsymbol{v}}$ such that $\boldsymbol{y}^s_1 - \boldsymbol{y}^s_2 = \boldsymbol{G} \boldsymbol{v} = \boldsymbol{0}$.

In fact, we can show a stronger statement: ``all vectors $\boldsymbol{g}_i \; \forall i \in \mathcal{X}_{\boldsymbol{q}^a_1} \cup \mathcal{X}_{\boldsymbol{q}^a_2} \supseteq \mathcal{X}_{\boldsymbol{v}}$, are linearly independent''.
Note that $\mathcal{X}_{\boldsymbol{q}^a_1}$ and $\mathcal{X}_{\boldsymbol{q}^a_2}$ are the sets of indexes of the non-zero components of $\boldsymbol{q^a_1}$ and $\boldsymbol{q^a_2}$, respectively (recall Definition \ref{def:setsOfIndexes}) and that
$\mathcal{X}_{\boldsymbol{q}^a_1} = \mathcal{X}_{{\boldsymbol{q}^a_{s1}}}$ and $\mathcal{X}_{\boldsymbol{q}^a_2} = \mathcal{X}_{{\boldsymbol{q}^a_{s2}}}$.

\begin{itemize}
\item First, let us show that $\mathcal{X}_{\boldsymbol{v}}$ is a subset of
$\mathcal{X}_{\boldsymbol{q}^a_1} \cup  \mathcal{X}_{\boldsymbol{q}^a_2}$.

Since $v_i = q^a_{1,i} - q^a_{2,i} \; \; \forall \; 1 \leq i \leq n_r$, then $q^a_{i,1} = q^a_{i,2} = 0 \Longrightarrow v_i = 0$.
Therefore, we have
\begin{align}
& \mathcal{X}_{\boldsymbol{q}^a_1}^c \cap  \mathcal{X}_{\boldsymbol{q}^a_2}^c  = \left\lbrace \mathcal{X}_{\boldsymbol{q}^a_1} \cup  \mathcal{X}_{\boldsymbol{q}^a_2} \right\rbrace ^c
\subseteq \mathcal{X}_{\boldsymbol{v}}^c
\end{align}
Then, by taking the complements of both sides we obtain the required result (note that $\{\cdot\}^c$ denotes a set complement).
\item Second, we show that vectors in the set $\mathcal{G} \triangleq \left\lbrace \boldsymbol{g}_i \vert i \in \mathcal{X}_{\boldsymbol{q}^a_1} \cup  \mathcal{X}_{\boldsymbol{q}^a_2} \right\rbrace$ are linearly independent over $\mathbb{F}_2$:
Assume towards contradiction that $\mathcal{G}$ is linearly dependent over $\mathbb{F}_2$. Hence, there exists a set $\mathcal{G}_D \subseteq \mathcal{G}$ such that any $\boldsymbol{g}_{i_0} \in \mathcal{G}_D$ can be written as a linear combination of all other vectors in $\mathcal{G}_D$, i.e.,
\begin{equation}
\boldsymbol{g}_{i_0} =
\sum_{\substack{ j :  j \neq i_0 \\\boldsymbol{g}_{j} \in \mathcal{G}_D}} \boldsymbol{g}_{j} \mod 2. \label{eqn:contradictionStatementOf_GD}
\end{equation}
Note that over $\mathbb{F}_2$, we can assume, without loss of generality (W.L.O.G.), that the coefficients of the linear combination above are $1$'s.
Hence, we have that 
\begin{equation}
\sum_{\substack{ j : \boldsymbol{g}_{j} \in \mathcal{G}_D}} \boldsymbol{g}_{j} \mod 2 = 0 
\end{equation}
Next, assume that $\exists \boldsymbol{q}^a_{s_3},\boldsymbol{q}^a_{s_4} {\in} \mathcal{Q}^a_s$ such that $\mathcal{X}_{\boldsymbol{v}_s}  {=} \left\lbrace j \vert \boldsymbol{g}_j \in \mathcal{G}_D \right\rbrace$ where $\boldsymbol{v}_s {=} \boldsymbol{q}^a_{s_3} {-} \boldsymbol{q}^a_{s_4} \mod 2$.

Then, since $\boldsymbol{G}$ is injective over $\mathcal{Q}^a_s$, then we have that
\begin{align}
\small
\boldsymbol{G} \boldsymbol{v}_s \mod 2  &= 
\sum_{\substack{ j  \in \mathcal{X}_{\boldsymbol{v}_s} }} \boldsymbol{g}_{j} \mod 2\\
&=
\sum_{\substack{ j : \boldsymbol{g}_{j} \in \mathcal{G}_D}} \boldsymbol{g}_{j} \mod 2
\neq \boldsymbol{0}\\
\xLongleftrightarrow{\textit{iff}}
&\; \boldsymbol{v_s} \mod 2 \neq \boldsymbol{0}
\end{align}
But, if $\mathcal{G}_D$ is non-empty, then $\boldsymbol{v}_s \neq 0$. Hence, we arrive at a contradiction to Eq. (\ref{eqn:contradictionStatementOf_GD}).
Therefore, the set $\mathcal{G}$ is linearly independent over $GF(2)$.
It remains to show that such $\boldsymbol{q}^a_{s_3}$ and $\boldsymbol{q}^a_{s_4}$ indeed exist.
Let us construct $\boldsymbol{q}^a_{s_3}$ as follows:\\
First, let $\boldsymbol{q}^a_{s_3} = \boldsymbol{q}^a_{s_1}$, then, reset its $i^\text{th}$ component to $0$ ($q^a_{s_3,i} = 0$) if $\boldsymbol{g}_i \not\in \mathcal{G}_D$.
Similarly, set $\boldsymbol{q}^a_{s_4} = \boldsymbol{q}^a_{s_2}$, then, reset the $i^\text{th}$ component to $0$ ($q^a_{s_4,i} = 0$) if $\boldsymbol{g}_i \not\in \mathcal{G}_D$ OR if $q^a_{s_1,i} = 1$.
Then, by construction, we have $\mathcal{X}_{\boldsymbol{q}^a_{s_3}} \subseteq \mathcal{X}_{\boldsymbol{q}^a_{s_1}}$ and $\mathcal{X}_{\boldsymbol{q}^a_{s_4}} \subseteq \mathcal{X}_{\boldsymbol{q}^a_{s_2}}$. Hence, by the inclusion property (recall Definition \ref{remark:inclusion}) we have $\boldsymbol{q}^a_{s_3} , \boldsymbol{q}^a_{s_4} \in \mathcal{Q}^a_s$ since both $\boldsymbol{q}^a_{s_1}, \boldsymbol{q}^a_{s_2} \in \mathcal{Q}^a_s$. Also, it is easy to see that $q^a_{s_3,j} - q^a_{s_4,j} \mod  2 = 1 \; \forall j : \boldsymbol{g}_j \in \mathcal{G}_D$.

\item Third, by Lemma \ref{lemma:vecIndOverFFAndR} below, the set $\mathcal{G}$, now taken over $\mathbb{R}$, is linearly independent.
\end{itemize}

Therefore, in Eq. (\ref{eqn:ys1-ys2}), it follows that $\boldsymbol{y}_1^{\boldsymbol{s}} - \boldsymbol{y}_2^{\boldsymbol{s}} \neq \boldsymbol{0}$ \textit{if and only if} $\boldsymbol{q}^a_1 - \boldsymbol{q}^a_2 \neq 0$, which concludes the proof.
\end{proof}

\begin{lemma}\label{lemma:vecIndOverFFAndR}
Any set of $n-$dimensional linearly independent vectors over $\mathbb{F}_2$ are also linearly independent over $\mathbb{C}$ if we interpret their $\mathsf{0}_{\mathbb{F}_2}$ and $1_{\mathbb{F}_2}$ components to be real scalars.
\end{lemma}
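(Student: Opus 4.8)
The plan is to prove the contrapositive: if a set of binary vectors $\boldsymbol{g}_1, \dots, \boldsymbol{g}_k$ (with their $\mathsf{0}_{\mathbb{F}_2}$ and $1_{\mathbb{F}_2}$ entries read as the real scalars $0$ and $1$) is linearly \emph{dependent} over $\mathbb{C}$, then it is already linearly dependent over $\mathbb{F}_2$. The key observation driving the argument is that the vectors have integer entries, so any complex dependency can be replaced by an integer one and then reduced modulo $2$, where the original $\mathbb{F}_2$ structure lives.

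First I would note that a nontrivial $\mathbb{C}$-linear dependency among vectors with rational entries forces a nontrivial $\mathbb{Q}$-linear dependency. Indeed, the dependency coefficients are exactly the kernel of the matrix $\boldsymbol{M}$ whose columns are the $\boldsymbol{g}_i$, and since $\boldsymbol{M}$ has rational entries, Gaussian elimination stays within $\mathbb{Q}$; hence its rational and complex null spaces have the same dimension. So a nonzero complex kernel yields rationals $c_1, \dots, c_k$, not all zero, with $\sum_{i} c_i \boldsymbol{g}_i = \boldsymbol{0}$.

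Next I would clear denominators and cancel common factors. Multiplying through by a common denominator gives integers $a_1, \dots, a_k$, not all zero, with $\sum_i a_i \boldsymbol{g}_i = \boldsymbol{0}$, and after dividing by $\gcd(a_1,\dots,a_k)$ I may assume this gcd equals $1$; in particular at least one $a_i$ is odd. Reducing modulo $2$ then gives $\sum_i (a_i \bmod 2)\, \boldsymbol{g}_i \equiv \boldsymbol{0} \pmod 2$, which is a dependency over $\mathbb{F}_2$, and it is \emph{nontrivial} precisely because some $a_i$ is odd. This exhibits the $\boldsymbol{g}_i$ as linearly dependent over $\mathbb{F}_2$, completing the contrapositive and hence the lemma.

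The one delicate point—the step I would treat most carefully—is the passage from a complex dependency to a \emph{primitive} integer one: one must justify that the coefficients may be taken rational (rationality of the kernel of an integer matrix) and that after normalizing by the gcd at least one coefficient survives reduction modulo $2$. Equivalently, and perhaps more transparently, I could argue directly from rank: $\mathbb{F}_2$-independence of the $k$ vectors produces a choice of $k$ coordinates giving a $k\times k$ submatrix that is invertible over $\mathbb{F}_2$, so its integer determinant is odd and therefore nonzero over $\mathbb{C}$; the vectors restricted to those coordinates are then $\mathbb{C}$-independent, and a fortiori so are the full-length vectors. Either route ultimately reduces the lemma to the elementary fact that an odd integer cannot equal $0$.
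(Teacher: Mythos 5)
Your proof is correct, and your primary route is genuinely different from the paper's. The paper argues in the forward direction: it extracts an $m\times m$ minor of the matrix of column vectors that is invertible over $\mathbb{F}_2$, and then invokes a separate lemma showing that the integer determinant of the corresponding $\{0,1\}$-matrix is congruent to $1$ modulo $2$, hence odd, hence nonzero over $\mathbb{C}$ --- which is precisely the ``more transparent'' alternative you sketch in your closing paragraph. Your main argument instead proves the contrapositive: a complex dependency forces a rational one (rank of a rational matrix is unchanged under field extension), clearing denominators and dividing by the gcd yields a primitive integer dependency with at least one odd coefficient, and reduction modulo $2$ then produces a nontrivial $\mathbb{F}_2$-dependency. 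Both arguments bottom out in the same elementary fact that an odd integer is nonzero, but they apply it to different objects: the paper to a determinant, you to a coefficient of a primitive integer relation. Your route avoids determinants and the Leibniz-formula discussion the paper needs for its auxiliary lemma, at the cost of the (correctly flagged and correctly handled) care needed in passing from $\mathbb{C}$ to $\mathbb{Q}$ to a primitive $\mathbb{Z}$-relation; the paper's route is shorter once one accepts the compatibility of determinants with the reduction map $\mathbb{Z}\to\mathbb{F}_2$. Both are sound.
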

The proof is provided in Appendix \ref{append:FirstLemmaProof}.

\subsection{Beamforming Design}
Now, we focus our attention on the design of beamforming vectors $\boldsymbol{w_i}$, such that the measurement vector $\boldsymbol{y^s}$ is such that $\boldsymbol{y^s} = \boldsymbol{G} \boldsymbol{q^a}$.
Obviously, $\boldsymbol{w_i}$ depends on the chosen source code. Specifically, we want $\boldsymbol{w_i}$ to satisfy
\begin{equation} \label{eq:measurement_design_obj}
\boldsymbol{w_i}^H \boldsymbol{q} = \boldsymbol{y^s_i} = \boldsymbol{g_i} \boldsymbol{q^a}
= \sum_{j = 1}^{n_r} g_{i,j} q^a_j
\end{equation}
where $\boldsymbol{g_i}$ is the $i^{\text{th}}$ row of $\boldsymbol{G}$, and $g_{i,j}$ is the $j^{\text{th}}$ element of $\boldsymbol{g_i}$.
Recall that if $n_r$ antennas exist at RX, then there exists $n_r$ resolvable angular directions. Let us call these directions $d_1,d_2, \dots d_{n_r}$.
We want $\boldsymbol{w_i}$ to combine the received signal components at specific angular directions. Those angular directions are determined by $\boldsymbol{g_i} = \begin{pmatrix}
g_{i,1}, g_{i,2}, \dots, g_{i,n_r} \end{pmatrix}$.
Specifically, we want $\boldsymbol{w_i}$ to include the signal at directions $d_j$ for all $j$ such that $g_{i,j} = 1$.

Recall that $\boldsymbol{q} = \boldsymbol{U_r} \boldsymbol{q^a}$. Hence, we can rewrite Eq. (\ref{eq:measurement_design_obj}) as
$\boldsymbol{w_i}^H \boldsymbol{q} = \boldsymbol{w_i}^H \boldsymbol{U_r} \boldsymbol{q^a} = \boldsymbol{g_i} \boldsymbol{q^a}$.
Thus, we need to design $\boldsymbol{w_i}$ such that $\boldsymbol{w_i}^H \boldsymbol{U_r} = \boldsymbol{g_i}$, which can be rewritten as:
\begin{align}
\boldsymbol{w_i}^H \begin{pmatrix}
\boldsymbol{e_r} \left( 0 \right) &
\boldsymbol{e_r} \left( \frac{1}{L_r} \right) &
\hdots &
\boldsymbol{e_r} \left( \frac{n_r-1}{L_r} \right)\end{pmatrix} = \boldsymbol{g_i}.
\end{align}
Since the columns of $\boldsymbol{U_r}$ constitute an orthonormal basis, a very simple design of $\boldsymbol{w_i}$ is as a summation of the columns $\boldsymbol{e_r}\left(\frac{j-1}{L_r}\right)$ such that $g_{i,j} = 1$. In other words, we can design $\boldsymbol{w_i}$ as:
\begin{equation}
\boldsymbol{w_i} = \sum_{j: g_{i,j} = 1} \boldsymbol{e_r}\left( \frac{j-1}{L_r}\right)
\end{equation}

\begin{remark}
The adopted beamforming design is ideal under the perfect sparsity assumption (which we adopt). That is when channel paths lie at the angular directions defined in $\boldsymbol{U_r}$.
In practice, however, channel paths arrive at arbitrary angles in $[0, 2\pi]$. This makes each path contribute to multiple components in $\boldsymbol{q^a}$, hence, $\boldsymbol{q^a}$ is not perfectly sparse.
This happens due to (i) antenna side-lobes, and (ii) beam-overlap.
To resolve this problem, we can use side-lobe suppression techniques, e.g. Taylor Window, as well as large antenna arrays, which can form pencil-beam antenna patterns that avoid the beam-overlap problem. These come at the expense of a slight reduction in beam resolution.
We leave this investigation for a future study and only focus on the main idea of using source-coding-based measurements.
\end{remark}

\subsection{On the lower bound on the number of measurements}
\label{sec:LB}
In Theorem \ref{thm:uniquenessOfMeasurements}, we showed that a linear source code $C$ which can \textbf{uniquely} encode all $\boldsymbol{q}^a_s {\in} \mathcal{Q}^a_s$ can be used to design a framework that uniquely measures all $\boldsymbol{q}^a {\in} \mathcal{Q}^a$.
Let the compression ratio of the code $C$ be denoted by $r_c$ such that $r_c = \frac{m}{n_r}$.
where $m$ and $n_r$ are the number of rows and columns of $C$'s generator matrix $\boldsymbol{G}$, respectively.

Reducing the number of measurements is a fundamental objective for the mm-wave beam discovery problem. In light of Theorem \ref{thm:uniquenessOfMeasurements}, we can see that finding a source code with a high compression rate (small $r_c$) is crucial for attaining such an objective.
In the following discussion, we try to better understand the nature of this lower bound in the context of our proposed solution.
\begin{corollary}
\label{corr:LB}
Let $\ubar{m}$ denote the lowest possible number of measurements for mm-wave beam discovery using lossless, fixed-rate source coding. Then, we have
\begin{equation}
\ubar{m} \geq \ceil*{\log_2\left(\sum_{i = 0}^{L} {n_r \choose i}\right)} \geq H_2\left( \boldsymbol{q}^a_s \right)
\end{equation}
where $H_2\left(\cdot\right)$ is the binary entropy function.
\end{corollary}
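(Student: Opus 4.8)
The plan is to relate the minimum measurement count $\ubar{m}$ to the cardinality of the support set $\mathcal{Q}^a_s$ through Theorem \ref{thm:uniquenessOfMeasurements}, and then compare that cardinality against the entropy using a standard maximal-entropy argument.

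First I would pin down $|\mathcal{Q}^a_s|$ exactly. By the sparsity assumption every valid channel has at most $L$ non-zero components, so each support vector has Hamming weight at most $L$. Conversely, for any subset of at most $L$ of the $n_r$ resolvable directions there is a valid channel supported exactly on those directions, and the inclusion property (Definition \ref{remark:inclusion}) guarantees that \emph{every} lower-weight support obtained by blockage is again in $\mathcal{Q}^a_s$. Hence $\mathcal{Q}^a_s$ is precisely the set of binary vectors of weight at most $L$, and counting by weight yields $|\mathcal{Q}^a_s| = \sum_{i=0}^{L}{n_r \choose i}$.

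For the left inequality, recall that by Theorem \ref{thm:uniquenessOfMeasurements} any lossless code usable for beam discovery must have an encoding map that is injective over $\mathcal{Q}^a_s$. An injective map from $\mathcal{Q}^a_s$ into the length-$m$ codeword space $\left\lbrace 0,1 \right\rbrace^m$ forces $2^m \geq |\mathcal{Q}^a_s|$, so every such code obeys $m \geq \log_2 |\mathcal{Q}^a_s|$; since $m$ is an integer this sharpens to $m \geq \ceil*{\log_2 |\mathcal{Q}^a_s|}$. Minimizing over all admissible codes then gives $\ubar{m} \geq \ceil*{\log_2\left(\sum_{i=0}^{L}{n_r \choose i}\right)}$. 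The right inequality is the textbook fact that the Shannon entropy of any distribution on a finite set of size $N$ is at most $\log_2 N$ (the uniform distribution being the maximizer), so $H_2(\boldsymbol{q}^a_s) \leq \log_2 |\mathcal{Q}^a_s| \leq \ceil*{\log_2 |\mathcal{Q}^a_s|}$, which closes the chain.

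The main obstacle I anticipate lies entirely in the first step — correctly characterizing $\mathcal{Q}^a_s$. The two counting inequalities are routine, but they rest on the claim that $\mathcal{Q}^a_s$ is the \emph{full} set of weight-$\leq L$ vectors rather than a proper subset. Mere feasibility of $L$-path channels only supplies the weight-exactly-$L$ supports; it is the inclusion property that must be invoked to sweep in all the lower-weight supports that partial blockage can produce, and this is the conceptual crux of the argument.
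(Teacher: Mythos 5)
Your proof is correct and reaches the same three ingredients as the paper's (the inclusion-property count $\lvert\mathcal{Q}^a_s\rvert = \sum_{i=0}^{L}\binom{n_r}{i}$, the bound $\ubar{m}\geq\ceil*{\log_2\lvert\mathcal{Q}^a_s\rvert}$, and the max-entropy comparison), but the mechanism you use for the left inequality differs from the paper's. The paper argues via source-coding theory: the length of any lossless code is lower bounded by $H_2(\boldsymbol{q}^a_s)$, and since a fixed-rate code cannot exploit the distribution of $\boldsymbol{q}^a_s$, the bound is pushed up to $\sup_{\mathbb{P}}H_2(\boldsymbol{q}^a_s)=\log_2\lvert\mathcal{Q}^a_s\rvert$ (attained by the uniform distribution), after which the ceiling is taken by integrality. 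You instead give a direct pigeonhole argument: injectivity of the encoder on $\mathcal{Q}^a_s$ into $\{0,1\}^m$ forces $2^m\geq\lvert\mathcal{Q}^a_s\rvert$. Your route is more elementary and arguably tighter logically --- it makes explicit why the worst-case (uniform) distribution is the operative one for a fixed-rate code, a point the paper justifies only by the informal remark that such codes ``do not take the probability distribution into account.'' The paper's route, on the other hand, keeps the entropy interpretation front and center, which is what motivates stating the bound in terms of $H_2(\boldsymbol{q}^a_s)$ in the first place and ties the corollary to the Shannon lower bound discussed in the motivating example. Your identification of the characterization of $\mathcal{Q}^a_s$ as the conceptual crux is apt: that is exactly the step the paper settles by invoking the inclusion property of Definition \ref{remark:inclusion}.
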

\begin{proof}
Suppose that $C$ is a linear lossless fixed-rate source code which can uniquely compress all $\boldsymbol{q}^a_s \in \mathcal{Q}^a_s$.
By Theorem \ref{thm:uniquenessOfMeasurements}, we have that the number of measurements needed for estimating the mm-wave channel is equal to $m$ (the length of encoded channel support vectors).
Since the length of compressed sequences for any such code is lower bounded by $H_2\left( \boldsymbol{q}^a_s \right)$, then we have $\ubar{m} \geq H_2\left( \boldsymbol{q}^a_s \right)$. Moreover, since fixed-rate source codes do not take the probability distribution (i.e., frequency) of $\boldsymbol{q^a_s}$ into account, then we have
$\ubar{m} \geq \sup_{\mathbb{P}\left(\boldsymbol{q}^a_s \right)} H_2\left( \boldsymbol{q}^a_s \right) \geq H_2\left( \boldsymbol{q}^a_s \right), \text{  where}$
\begin{align}
\small \sup_{\mathbb{P}\left(\boldsymbol{q}^a_s \right)} \sum_{\boldsymbol{q^a} \in \mathcal{Q}^a} \mathbb{P} \left( \boldsymbol{q}^a_s \right) \log_2\left( \frac{1}{\mathbb{P} \left( \boldsymbol{q}^a_s \right)} \right)
\small  &= \log_2\left(\vert\mathcal{Q}^a_s\vert\right) \label{eq:supsol}
\end{align}
The result of solving the $\sup$ problem in Eq. (\ref{eq:supsol}) is $\mathbb{P}\left(\boldsymbol{q}^a_s \right) = \frac{1}{\vert \mathcal{Q}^a_s \vert} \; \forall \boldsymbol{q}^a_s$ since the uniform distribution maximizes the entropy.
Since the number of measurement has to be an integer, we take the ceil of right hand side of Eq. (\ref{eq:supsol}).
Finally, by the inclusion property in Definition \ref{remark:inclusion}, we have $\vert \mathcal{Q}^a_s \vert = \sum_{i = 0}^L {n_r \choose i}$, which concludes the proof.
\end{proof}

\subsection{Channel Estimation Error}
\label{sec:errorAnalysis}
In Theorem \ref{thm:uniquenessOfMeasurements}, we have shown how to obtain unique measurements $\boldsymbol{y^s}\; \forall\; \boldsymbol{q^a}{\in} \mathcal{Q}^a$.
Recall that $\boldsymbol{y^s}{=}\boldsymbol{G} \boldsymbol{q^a}$ is an error-free measurement vector.
In practice, however, measurements are never error-free. Measurements errors are bound to happen due to the effects of \textbf{\textit{thermal and quantization noise}}, among others factors. The quality of channel estimates obtained using error-corrupted measurements is essentially degraded, which calls for a deeper understanding of the effects of such errors.
A crucial question we try to answer here is: \textbf{\textit{Do small perturbations/imperfections in channel measurements make channel estimates considerably deviate from their true values?}}
In this section, we shed some light on this problem by deriving an upper bound on channel estimation error as a function of measurement error.
We also show that for a special class of generator matrices, the channel estimation error, measured using the $\ell_2-$norm, is smaller than or equal to the $\ell_2$ norm of the measurement error.

We denote error-corrupted measurements using $\boldsymbol{u^s}$ such that
\begin{equation}
\boldsymbol{u^s} = \boldsymbol{y^s} + \boldsymbol{z},
\end{equation}
where $\boldsymbol{z}$ is the measurement error (recall Eq. (\ref{eqn:noisyMeasurements}) and the discussion that follows).
Assume that we can \textbf{perfectly} decode any measurement vector into its corresponding channel. That is, for any measurement vector $\boldsymbol{y^s}$, we can find a corresponding $\boldsymbol{q^a}$ such that $\boldsymbol{y^s} {=} \boldsymbol{G} \boldsymbol{q^a}$ (measurement decoding will be further discussed in Section \ref{sec:DNN}).
Let us also denote the \textit{channel estimate} obtained using error-corrupted measurements $\boldsymbol{u^s}$ by $\hat{\boldsymbol{q}}^{\boldsymbol{a}}$, i.e., $\boldsymbol{u^s} {=} \boldsymbol{G} \hat{\boldsymbol{q}}^{\boldsymbol{a}}$.
The following proposition provides an upper bound on the channel estimation error in terms of measurements errors.

\begin{proposition}\label{thm:estimationErrorBound}
Assume perfect measurement decoding, and let $\sigma_{\text{min}}\left( \cdot \right)$ denote the minimum singular value of a given matrix. Then, the channel estimation error is upper bounded as:
\begin{equation}
\norm{\hat{\boldsymbol{q}}^{\boldsymbol{a}} - \boldsymbol{q^a}}_2
\leq
\frac{1}{\sigma_{\text{min}}\left(\boldsymbol{G}\right)} \norm{\boldsymbol{z}}_2
\end{equation}
\end{proposition}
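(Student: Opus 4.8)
The plan is to reduce the proposition to a single contraction estimate for the linear map $\boldsymbol{G}$. First I would subtract the two perfect-decoding relations $\boldsymbol{u^s} = \boldsymbol{G}\hat{\boldsymbol{q}}^{\boldsymbol{a}}$ and $\boldsymbol{y^s} = \boldsymbol{G}\boldsymbol{q^a}$, and combine them with $\boldsymbol{u^s} = \boldsymbol{y^s} + \boldsymbol{z}$ to obtain $\boldsymbol{z} = \boldsymbol{G}\bigl(\hat{\boldsymbol{q}}^{\boldsymbol{a}} - \boldsymbol{q^a}\bigr)$. Writing $\boldsymbol{e} \triangleq \hat{\boldsymbol{q}}^{\boldsymbol{a}} - \boldsymbol{q^a}$ for the channel estimation error, the target inequality becomes $\norm{\boldsymbol{e}}_2 \leq \frac{1}{\sigma_{\text{min}}(\boldsymbol{G})}\norm{\boldsymbol{G}\boldsymbol{e}}_2$, i.e. a lower-Lipschitz (bounded-below) bound for $\boldsymbol{G}$ applied to the specific vector $\boldsymbol{e}$.

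The second step is the standard singular value decomposition argument. Writing $\boldsymbol{G} = \boldsymbol{U}\boldsymbol{\Sigma}\boldsymbol{V}^H$ and setting $\boldsymbol{a} = \boldsymbol{V}^H\boldsymbol{e}$, so that $\norm{\boldsymbol{a}}_2 = \norm{\boldsymbol{e}}_2$ by unitarity of $\boldsymbol{V}$, I would expand $\norm{\boldsymbol{G}\boldsymbol{e}}_2^2 = \sum_i \sigma_i^2 \abs{a_i}^2$ and lower-bound each surviving singular value by the smallest one to get $\norm{\boldsymbol{G}\boldsymbol{e}}_2 \geq \sigma_{\text{min}}(\boldsymbol{G})\norm{\boldsymbol{e}}_2$; rearranging yields the claim.

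The catch, and the main obstacle, is that $\boldsymbol{G}$ is a compression matrix with $m < n_r$, hence rank-deficient with a nontrivial null space, so its literal smallest singular value is $0$ and the naive bound is vacuous for an arbitrary $\boldsymbol{e}$. This is exactly where the perfect-decoding hypothesis and Theorem \ref{thm:uniquenessOfMeasurements} must enter: both $\boldsymbol{q^a}$ and $\hat{\boldsymbol{q}}^{\boldsymbol{a}}$ lie in $\mathcal{Q}^a$, so $\boldsymbol{e}$ is supported on $\mathcal{X}_{\boldsymbol{q}^a}\cup\mathcal{X}_{\hat{\boldsymbol{q}}^{\boldsymbol{a}}}$, and the proof of Theorem \ref{thm:uniquenessOfMeasurements} already establishes that the columns of $\boldsymbol{G}$ indexed by this union are linearly independent over $\mathbb{C}$. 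I would therefore apply the SVD bound to the active-column submatrix $\boldsymbol{G}_{\mathcal{X}}$ rather than to $\boldsymbol{G}$ itself: since $\boldsymbol{G}_{\mathcal{X}}$ is injective, its smallest singular value is strictly positive, and $\norm{\boldsymbol{G}\boldsymbol{e}}_2 = \norm{\boldsymbol{G}_{\mathcal{X}}\boldsymbol{e}_{\mathcal{X}}}_2 \geq \sigma_{\text{min}}(\boldsymbol{G}_{\mathcal{X}})\norm{\boldsymbol{e}}_2$ carries through. The one point needing careful phrasing is that $\sigma_{\text{min}}(\boldsymbol{G})$ in the statement must be read as the minimum singular value over the subspace on which $\boldsymbol{G}$ acts injectively, i.e. the smallest nonzero singular value picked out by the support of $\boldsymbol{e}$; with that reading the restricted SVD estimate delivers the stated bound.
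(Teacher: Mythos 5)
Your first two steps are exactly the paper's own proof: it likewise writes $\boldsymbol{z}=\boldsymbol{G}\bigl(\hat{\boldsymbol{q}}^{\boldsymbol{a}}-\boldsymbol{q^a}\bigr)$ and then concludes via $\norm{\boldsymbol{G}\boldsymbol{v}}_2\geq\sigma_{\text{min}}(\boldsymbol{G})\norm{\boldsymbol{v}}_2$. Where you go further is your third paragraph, and the worry you raise there is not pedantry --- it points at a genuine gap in the paper's argument. Since $\boldsymbol{G}$ is $m\times n_r$ with $m<n_r$, it has a nontrivial kernel; under the convention the paper itself adopts in the proof of Proposition \ref{prop:minSingularValStdForm} (singular values taken as the $m$ square roots of the eigenvalues of $\boldsymbol{G}\boldsymbol{G}^T$), $\sigma_{\text{min}}(\boldsymbol{G})$ is strictly positive and therefore cannot lower-bound the quotient $\norm{\boldsymbol{G}\boldsymbol{v}}_2/\norm{\boldsymbol{v}}_2$ over all $\boldsymbol{v}$. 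The paper applies that inequality to an unrestricted error vector without comment. Your repair --- restricting to the support union $\mathcal{X}_{\boldsymbol{q^a}}\cup\mathcal{X}_{\hat{\boldsymbol{q}}^{\boldsymbol{a}}}$, on which the proof of Theorem \ref{thm:uniquenessOfMeasurements} guarantees the corresponding columns of $\boldsymbol{G}$ are linearly independent, and using $\sigma_{\text{min}}(\boldsymbol{G}_{\mathcal{X}})>0$ of that submatrix --- is the right fix, and it is the only place where the perfect-decoding and sparsity hypotheses do quantitative work.

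Be aware, though, that this repair changes the constant rather than merely reinterpreting it: the worst-case restricted constant $\min_{\mathcal{X}}\sigma_{\text{min}}(\boldsymbol{G}_{\mathcal{X}})$ over feasible supports can be strictly smaller than $\sigma_{\text{min}}(\boldsymbol{G})$, so the proposition as literally stated can fail. For the $(15,11)$ Hamming syndrome code used in Section \ref{sec:eval}, $\boldsymbol{G}\boldsymbol{G}^T=4\boldsymbol{I}_4+4\boldsymbol{J}$ (with $\boldsymbol{J}$ the all-ones matrix) gives $\sigma_{\text{min}}(\boldsymbol{G})=2$, whereas the two-column submatrix formed by the columns $(1,0,0,0)^T$ and $(1,1,0,0)^T$ has minimum singular value $\sqrt{(3-\sqrt{5})/2}\approx 0.62$; an error vector supported on those two bins and aligned with the bottom singular vector of that submatrix (feasible whenever $L\geq 1$) then satisfies $\norm{\hat{\boldsymbol{q}}^{\boldsymbol{a}}-\boldsymbol{q^a}}_2\approx 1.62\,\norm{\boldsymbol{z}}_2>\tfrac{1}{2}\norm{\boldsymbol{z}}_2$. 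The corrected bound should therefore carry $\min_{\mathcal{X}}\sigma_{\text{min}}(\boldsymbol{G}_{\mathcal{X}})$ in place of $\sigma_{\text{min}}(\boldsymbol{G})$, and Proposition \ref{prop:minSingularValStdForm} does not survive the substitution either, since a column submatrix of the form in Eq. (\ref{eqn:minSingularValStdForm}) need not contain the identity block.
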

\begin{proof}
Let us start by writing $\boldsymbol{z}$ as:
$\boldsymbol{z} = \boldsymbol{u^s} - \boldsymbol{y^s} = \boldsymbol{G} \left( \hat{\boldsymbol{q}}^{\boldsymbol{a}} - \boldsymbol{q^a} \right)$. Therefore, we have
\begin{align}
\small
\Longrightarrow \norm{\boldsymbol{z}}_2 &= \norm{\boldsymbol{G} \left( \hat{\boldsymbol{q}}^{\boldsymbol{a}} - \boldsymbol{q^a} \right)}_2\\
&= \norm{\left( \hat{\boldsymbol{q}}^{\boldsymbol{a}} - \boldsymbol{q^a} \right)}_2 
\frac{ \norm{\boldsymbol{G} \left( \hat{\boldsymbol{q}}^{\boldsymbol{a}} - \boldsymbol{q^a} \right)}_2}{\norm{\left( \hat{\boldsymbol{q}}^{\boldsymbol{a}} - \boldsymbol{q^a} \right)}_2}\\
&\geq  \norm{\left( \hat{\boldsymbol{q}}^{\boldsymbol{a}} - \boldsymbol{q^a} \right)}_2  \sigma_{\text{min}}\left( \boldsymbol{G} \right) \label{eq:rearrange}
\end{align}
Finally, by rearranging (\ref{eq:rearrange}), we obtain the required statement
\begin{equation*}
\norm{\hat{\boldsymbol{q}}^{\boldsymbol{a}} - \boldsymbol{q^a}}_2
\leq
\frac{1}{\sigma_{\text{min}}\left(\boldsymbol{G}\right)} \norm{\boldsymbol{z}}_2 \qedhere
\end{equation*}
\end{proof}
Now, we see that if $\sigma_{\text{min}}\left(\boldsymbol{G}\right) {\geq} 1$, then the channel estimation error (measured using the $\ell_2-$norm) is smaller than or equal to the $\ell_2-$norm of the measurement error, i.e., $\norm{\hat{\boldsymbol{q}}^{\boldsymbol{a}} - \boldsymbol{q^a}}_2 \leq \norm{\boldsymbol{z}}_2$. 
This, in fact, is the case for the class of generator matrices introduced in the following proposition
\begin{proposition}
\label{prop:minSingularValStdForm}
Let $\boldsymbol{I}_{m }$ be the $m{\times}m$ identity matrix. Then, $\sigma_{\text{min}}\left(\boldsymbol{G}\right) {\geq} 1$ for $\boldsymbol{G}$ of the form:
\begin{equation}
\label{eqn:minSingularValStdForm}
\boldsymbol{G} = 
\begin{pmatrix}[c|c]
\boldsymbol{I}_{m} & \boldsymbol{P}_{m \times n-m}
\end{pmatrix}
\end{equation}
\end{proposition}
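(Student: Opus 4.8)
The plan is to reduce the claim to a statement about the eigenvalues of a small Gram matrix. Since $\boldsymbol{G} = \left(\boldsymbol{I}_{m} \mid \boldsymbol{P}\right)$ is an $m \times n_r$ matrix with $m \leq n_r$ and full row rank (guaranteed by the identity block), it has exactly $m$ nonzero singular values, and these are the square roots of the eigenvalues of the $m \times m$ matrix $\boldsymbol{G}\boldsymbol{G}^H$; in particular $\sigma_{\text{min}}^2\left(\boldsymbol{G}\right) = \lambda_{\text{min}}\left(\boldsymbol{G}\boldsymbol{G}^H\right)$. It is worth stressing here that one must form $\boldsymbol{G}\boldsymbol{G}^H$ and \emph{not} $\boldsymbol{G}^H\boldsymbol{G}$: the latter is $n_r \times n_r$ and, because $\boldsymbol{G}$ is wide, is singular, which would spuriously suggest $\sigma_{\text{min}} = 0$. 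This bookkeeping is really the only point demanding care.

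First I would compute the product in block form. Writing $\boldsymbol{G}^H = \left(\boldsymbol{I}_{m} \mid \boldsymbol{P}\right)^H$ as a stacked block gives
\[
\boldsymbol{G}\boldsymbol{G}^H = \boldsymbol{I}_{m}\boldsymbol{I}_{m}^H + \boldsymbol{P}\boldsymbol{P}^H = \boldsymbol{I}_{m} + \boldsymbol{P}\boldsymbol{P}^H,
\]
where I used that $\boldsymbol{I}_m$ and $\boldsymbol{P}$ have real (binary) entries so $\boldsymbol{I}_m^H = \boldsymbol{I}_m$. Next I would invoke that $\boldsymbol{P}\boldsymbol{P}^H$ is positive semidefinite, since it is a Gram matrix: $\boldsymbol{x}^H \boldsymbol{P}\boldsymbol{P}^H \boldsymbol{x} = \norm{\boldsymbol{P}^H \boldsymbol{x}}_2^2 \geq 0$ for every $\boldsymbol{x}$. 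Hence all eigenvalues of $\boldsymbol{P}\boldsymbol{P}^H$ are nonnegative, and adding the identity shifts each eigenvalue up by one, so every eigenvalue of $\boldsymbol{G}\boldsymbol{G}^H$ equals $1 + \lambda_i\left(\boldsymbol{P}\boldsymbol{P}^H\right) \geq 1$. Therefore $\lambda_{\text{min}}\left(\boldsymbol{G}\boldsymbol{G}^H\right) \geq 1$, and taking square roots yields $\sigma_{\text{min}}\left(\boldsymbol{G}\right) \geq 1$, as required.

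An equivalent eigenvalue-free route I could take instead uses the Rayleigh-quotient characterization $\sigma_{\text{min}}\left(\boldsymbol{G}\right) = \min_{\norm{\boldsymbol{y}}_2 = 1,\, \boldsymbol{y} \in \mathbb{C}^{m}} \norm{\boldsymbol{G}^H \boldsymbol{y}}_2$. For any unit vector $\boldsymbol{y}$, the block structure gives $\boldsymbol{G}^H \boldsymbol{y} = \left(\boldsymbol{y}^H \mid \boldsymbol{y}^H \boldsymbol{P}\right)^H$, so
\[
\norm{\boldsymbol{G}^H \boldsymbol{y}}_2^2 = \norm{\boldsymbol{y}}_2^2 + \norm{\boldsymbol{P}^H \boldsymbol{y}}_2^2 \geq \norm{\boldsymbol{y}}_2^2 = 1,
\]
and minimizing over $\boldsymbol{y}$ delivers the bound directly. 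Either argument is essentially immediate; there is no genuine obstacle, and the only thing to get right is the rectangular-matrix setup, so that the identity block—rather than a spurious null space—is what controls the minimum singular value.
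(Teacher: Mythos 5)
Your proof is correct and rests on the same core fact as the paper's: the Gram matrix of $\boldsymbol{G}$ is the identity plus the positive semidefinite contribution of the appended columns, so its eigenvalues are at least $1$. The paper reaches this by appending the columns of $\boldsymbol{P}$ one at a time and invoking eigenvalue monotonicity at each step, whereas you compute $\boldsymbol{G}\boldsymbol{G}^H = \boldsymbol{I}_m + \boldsymbol{P}\boldsymbol{P}^H$ for the whole block at once --- a streamlined presentation of essentially the same argument.
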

See Appendix \ref{append:SecondPropositionProof} for proof.
\begin{remark}
It is not difficult to obtain generator matrices of the form in Eq. (\ref{eqn:minSingularValStdForm}). For instance, syndrome source codes can be manipulated using row and column operations over the binary field to produce equivalent codes with $\boldsymbol{G}$ as in Eq. (\ref{eqn:minSingularValStdForm}).
\end{remark}

\section{Measurement Decoding}
\label{sec:DNN}
\begin{figure*}[t]
\centering
\captionsetup[subfigure]{}
\begin{subfigure}{.33\textwidth}
  \centering
  \captionsetup{justification=centering}
\includegraphics[width=0.95\linewidth]{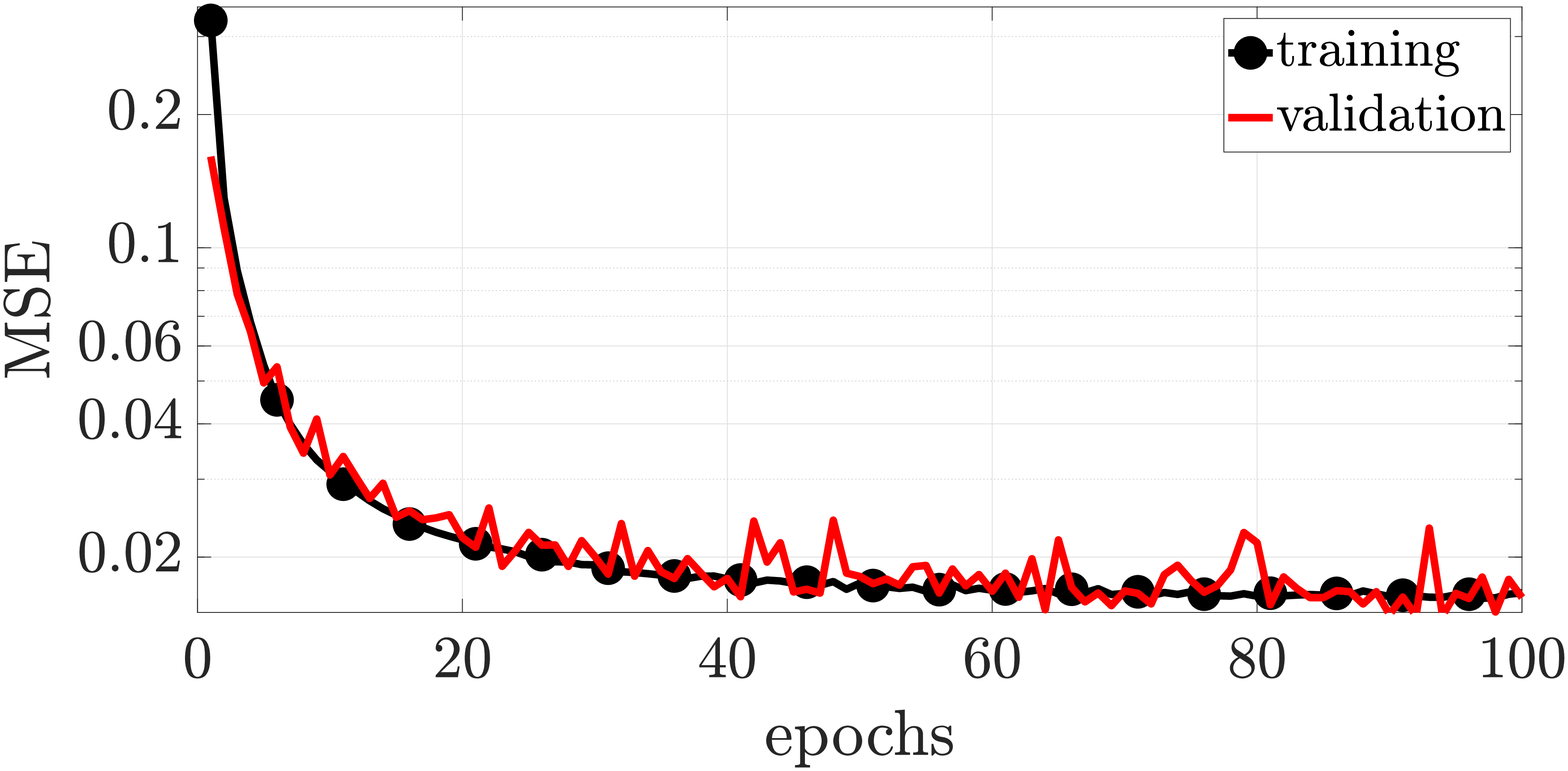}
\caption{\small Training vs. validation loss.}
\label{fig:trainingPerformance} 
\end{subfigure}%
\begin{subfigure}{.33\textwidth}
  \centering
  \captionsetup{justification=centering}
\includegraphics[width=0.95\linewidth]{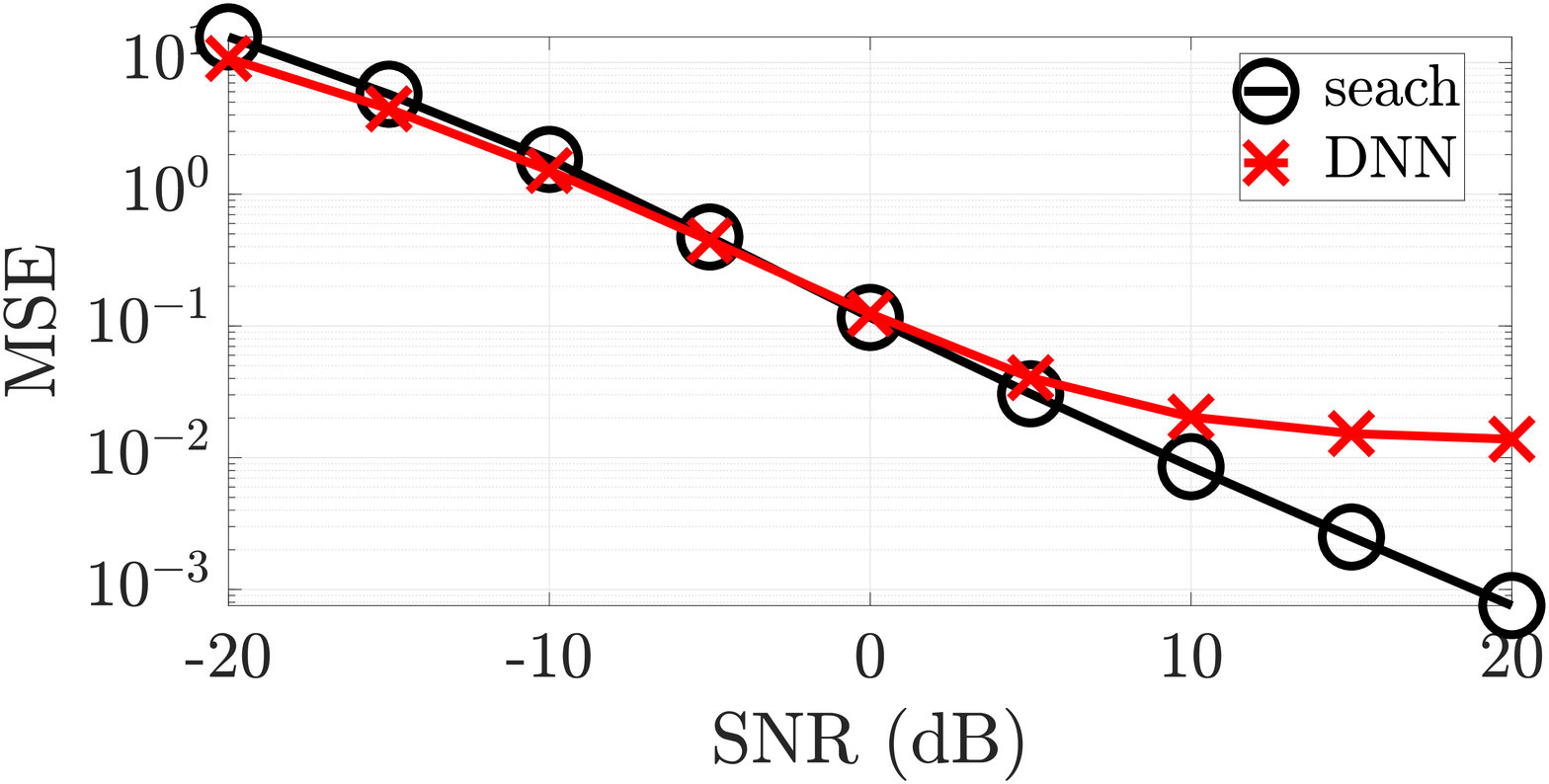}
\caption{\small Mean Squared Error (MSE).}
\label{fig:loss_DNNvsSearch} 
\end{subfigure}%
\begin{subfigure}{.33\textwidth}
  \centering
  \captionsetup{justification=centering}
\includegraphics[width=0.95\linewidth]{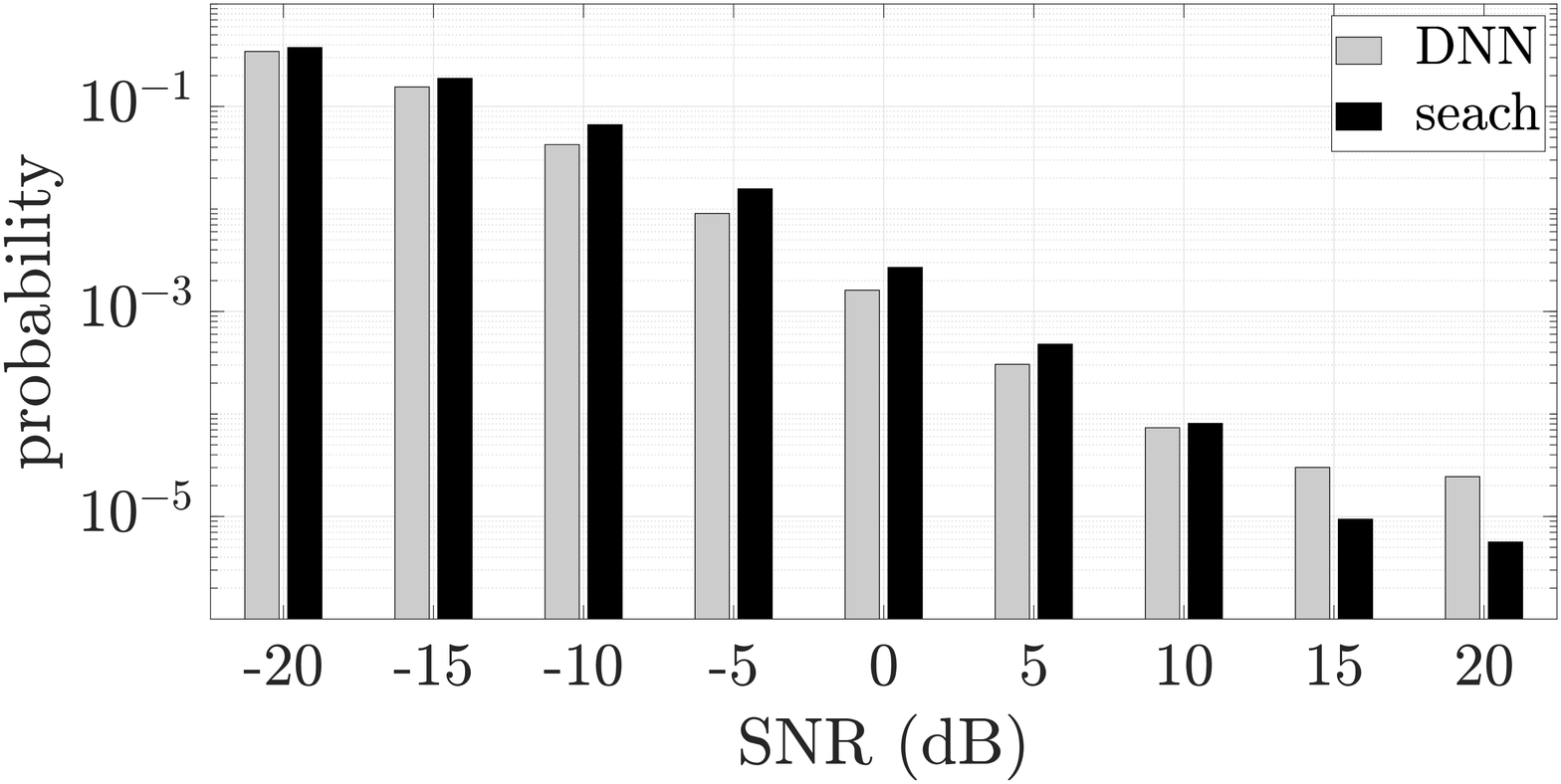}
\caption{\small Probability of path misdetection.}
\label{fig:k_ge1_DNNvsSearch}
\end{subfigure}%
\caption{Evaluation of DNN-based measurement to channel mapping.}
\label{fig:beamDetectionProb_DNNvsSearch}
\end{figure*}
Designing channel measurements that have one-to-one correspondence with $\boldsymbol{q}^a$ is only part of the solution.
Equally important, however, is the ability to ``decode'' $\boldsymbol{y^s}$ back to $\boldsymbol{q}^a$, i.e., figuring out what the function $\mathcal{D}(\cdot)$, in Eq. (\ref{eqn:decodingFn}), is.
The one-to-one correspondence between $\boldsymbol{q}^a$ and $\boldsymbol{y^s}$ guarantees that there exists an inverse function that maps $\boldsymbol{y^s}$ back to $\boldsymbol{q^a}$. Nevertheless, since we can only obtain $\boldsymbol{u^s}$; an error-corrupted version of $\boldsymbol{y^s}$, we cannot exactly regenerate $\boldsymbol{q^a}$, but rather, an estimate $\hat{\boldsymbol{q}}^a$.
Given that measurement errors occur, our objective is to obtain $\hat{\boldsymbol{q}}^a$ such that its \textbf{\textit{distance}} to $\boldsymbol{q^a}$ is as small as possible (i.e., minimize the estimation error).
We use the $l_2$ norm as a distance measure between $\boldsymbol{q}^a$ and $\hat{\boldsymbol{q}}^a$, defined as $\delta \left( \boldsymbol{q^a} , \hat{\boldsymbol{q}}^{\boldsymbol{a}} \right) \triangleq \norm{\boldsymbol{q^a} - \hat{\boldsymbol{q}}^{\boldsymbol{a}}}_2$

Optimal measurement decoding requires solving an $\ell_0$-norm minimization problem \cite{CS4Wireless_TipsAndTricks}.
This problem is not convex and its solution requires heavy computations, which is intractable for channels with large dimensions and/or relatively high sparsity level.
An example for optimal measurement decoding is the ``search'' decoding, proposed in \cite{shabaraBeamToN}, which requires a combinatorial search over the column subspaces of $\boldsymbol{G}$, and whose complexity is of order $O\left(n_r^L\right)$.
Again, this is prohibitive for large antenna arrays ($n_r$) and large $L$.
Another solution is the ``look-up'' table method in \cite{shabaraBeamToN}, where quantized measurement-channel pairs are stored in memory. Here, the channel vector whose corresponding stored measurement is closest to the collected measurement is selected. The main disadvantage with this method is that the table size increases dramatically with the number of measuremtns and ADC quantization resolution.
\textbf{Motivated by the drawbacks of the look-up table and search methods, we propose an alternative \textit{Machine Learning (ML)} based approach that uses \textit{Deep Neural Networks (DNN)}.} DNNs in particular are commonly used as function approximation algorithms, hence they provide an appealing light-weight, data-driven, alternative solution for the measurement decoding problem.
Our main goal here is to reduce the computational complexity while still maintaining reliable measurement decoding.

\subsection{DNN-based mapping}
\label{sec:DnnBasedMapping}
ML is widely used to solve very complex problems through \textit{learning}. We focus on \textit{supervised learning} to solve the decoding problem,
which is a multi-dimensional non-linear regression problem for which neural networks is a powerful tool.
Specifically, we use a fully connected classical DNN with an input layer of $m$ nodes (equal to the measurement dimensions) and an output layer of $n_r$ nodes (equal to the channel dimensions).
The DNN model is designed to handle real-valued input-output data. But, on the contrary, both the channel $\boldsymbol{q^a}$ and measurements $\boldsymbol{y^s}$ are complex-valued.
To overcome this problem, observe that $\boldsymbol{y^s}$ can be written as $\boldsymbol{y}_R^{\boldsymbol{s}} {+} j \boldsymbol{y}_I^{\boldsymbol{s}}$ and $\boldsymbol{q^a}$ as $\boldsymbol{q}_R^{\boldsymbol{a}} {+} j \boldsymbol{q}_I^{\boldsymbol{a}}$ (i.e., in terms of their real and imaginary components).
And notice that $\boldsymbol{y}_R^{\boldsymbol{s}} {=} \boldsymbol{G} \boldsymbol{q}_R^{\boldsymbol{a}}$ and $\boldsymbol{y}_I^{\boldsymbol{s}} {=} \boldsymbol{G}  \boldsymbol{q}_I^{\boldsymbol{a}}$.
Therefore, we can construct an estimate $\hat{\boldsymbol{q}}^{\boldsymbol{a}}$ using its real and imaginary components, i.e., $\hat{\boldsymbol{q}}^{\boldsymbol{a}}_R {+} j \hat{\boldsymbol{q}}^{\boldsymbol{a}}_I$, where $\hat{\boldsymbol{q}}^{\boldsymbol{a}}_R$ and $\hat{\boldsymbol{q}}^{\boldsymbol{a}}_I$ are estimated using $\boldsymbol{y}_R^{\boldsymbol{s}}$ and $\boldsymbol{y}_I^{\boldsymbol{s}}$ as inputs to the DNN model, respectively.
Therefore, our DNN takes the measurement vectors $\boldsymbol{y^s}$ as inputs and produces the corresponding channel estimates $\hat{\boldsymbol{q}}^{\boldsymbol{a}}$ at its output, but it does so in two different steps, handling the real and imaginary parts separately. For ease of notation, we will not use real and imaginary components to refer to inputs and outputs of the DNN models but it should be understood that this is how we handle it.
The number of hidden layers and their corresponding number of nodes are design parameters that depend on the sizes of the input and output, and the relationship governing them.
For all hidden layers, we use the rectified linear (ReLU) activation function while for the output layer we use the linear activation function.
We also use the \textit{ADAM} optimizer \cite{adam} for training and the Mean Squared Error (MSE) loss function to quantify the model error\footnote{We use Keras API \cite{chollet2015keras} to build, train, test and use the DNN model we propose. Our pre-trained models and DNN-related codes are available in \cite{simCodes}.}.

\textbf{Model training:}
Although we do not have a closed form expression for mapping $\boldsymbol{y^s}$ to $\boldsymbol{q^a}$ (hence the need for an algorithmic solution), generating training data is actually straightforward. This is because the reverse direction (i.e., mapping $\boldsymbol{q^a}$ to $\boldsymbol{y^s}$) is just a simple linear transformation.
Training data is generated as follows: For every $\boldsymbol{q}^a_s \in \mathcal{Q}^a_s$ (recall that $\mathcal{Q}^a_s$ is the set of all possible channel support vectors defined in Section \ref{sec:LB_mmwave}), we generate $n_s = 300$ random channels, $\boldsymbol{q}^a$, by choosing the non-zero components of $\boldsymbol{q^a}$ to be uniformly distributed in $[-\alpha^b_{\text{max}},\alpha^b_{\text{max}}]$ where $\alpha^b_{\text{max}}$ (recall Eq. (\ref{eqn:basebandPathGain})) is the maximum magnitude of baseband path gains, which can be obtained using channel statistics. Note that we can set $\alpha^b_{\text{max}}$ to be the maximum ADC quantized value of $\vert \alpha_l^b\vert$.
Thus, the total number of input-output samples we have is $n_s \times \abs{\mathcal{Q}^a_s}$.
We use $70\%$ of these samples for training and the remaining $30\%$ for validation. Training is done using $200$ epochs with batches of size $32$. We monitor the validation error to make sure that the model does not over-fit the training data. If over-fitting is observed (which is indicated by a persistent increase in validation error at the end of every epoch), we stop the training process and only keep the model which produced the least validation error.
DNN training is done offline, and a trained DNN model is stored in memory to be used when needed.

\begin{figure*}[t]
\centering
\includegraphics[width=0.9\linewidth]{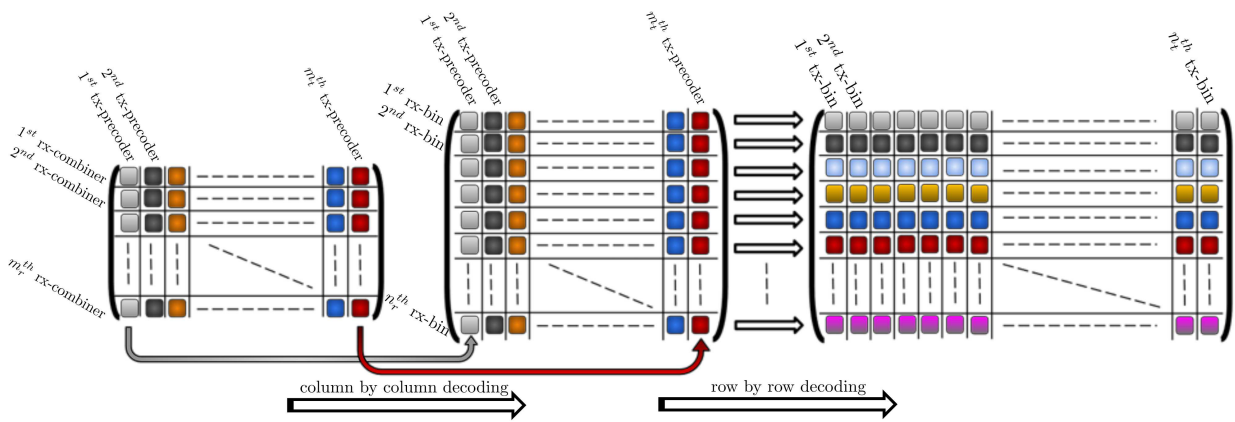}
\caption{\small Measurement decoding for channels with multiple TX/RX antennas is done in two steps.
Given the matrix $\boldsymbol{Y^s}$ whose $y^s_{i,j}th$ component is $= \boldsymbol{w}_i^H \boldsymbol{Q} \boldsymbol{f}_j$ (shown on the left), we first do a column by column decoding where the $j^{th}$ decoded column of $\boldsymbol{Y^s}$ is $\boldsymbol{q}^a_{rx,j}$.
Then, in the second step we decode the intermediary matrix row by row to produce $\hat{\boldsymbol{Q}}^{\boldsymbol{a}}$ (shown on the right).
The two decoding functions of first and second steps are dependent on the source codes used to design $\boldsymbol{w_i}$'s and $\boldsymbol{f_j}$'s, respectively.}
\label{fig:multiTxRx}
\end{figure*}

\subsection{DNN Model Assessment}
\label{sec:modelAssessment}
To argue the reliability of DNN-based mapping, we test it using a channel with $n_t {=} 1$, $n_r {=} 23$ and a maximum of $3$ paths i.e., $L {\leq} 3$. 
We compare its performance against the ``search'' method of \cite{shabaraBeamToN}.
Based on the described channel parameters, only $m {=} 11$ measurements are needed to discover its paths (more details about this particular example are discussed in Section \ref{sec:eval}).
We design a DNN model with an input layer of $m{=} 11$ nodes and an output layer of $n_r {=} 23$ nodes. The model also has $5$ hidden layers with $1024, 512, 512, 128$ and $128$ nodes, respectively.
We train the DNN model using data generated as described in Section \ref{sec:DnnBasedMapping}.
Fig. \ref{fig:trainingPerformance} shows the average MSE loss of both training and validation data sets for $100$ epochs.
Training achieves validation error of ${\approx} 0.0143$ (averaged across all samples of validation data).
The figure also shows close MSE values for training and validation. This indicates that the model generalizes well to measurements it had not seen before, which guarantees reliability for arbitrary measurements.

These initial results are promising. However, they are obtained using error-free measurements. This prompts us to test the resilience of DNN-based mapping against noisy measurements. We also compare its performance against the search method proposed in \cite{shabaraBeamToN}.
To do so, we generate a testing data set in the same way we created the training data. We also generate sets of uniformly distributed noise vectors where each noise set is drawn at a different value of transmit SNR from $-20$ to $20$ dB.
The noise vectors are then added to the inputs (channel measurements) of the testing data set then passed through the trained model. The decoded $\hat{\boldsymbol{q}}^{\boldsymbol{a}}$ is recorded at the output.
Similarly, we use the ``search'' method to decode the same noise-corrupted measurements.

For evaluation, we use i)  average MSE, as well as ii) the probability of path misdetection (i.e., no path discovery).
We say a path is correctly discovered if the path gain of its corresponding component in $\hat{\boldsymbol{q}}^{\boldsymbol{a}}$ is among the $L {=} 3$ strongest components in $\hat{\boldsymbol{q}}^{\boldsymbol{a}}$.
Fig. \ref{fig:loss_DNNvsSearch} shows the average MSE obtained using the search and DNN-based mapping methods on a log scale.
We see that at low SNR, DNN-based decoding outperforms the search method. This indicates that the DNN model is more resilient against measurement errors. At high SNR, however, the DNN's MSE saturates at ${\approx} 0.014$ which is the same value we obtain for validation during model training using noise-free inputs (not that the MSE value at which DNN-based mapping saturates can be made lower by further improvement of the DNN model).
The search method's MSE, on the other hand, keeps improving as SNR increases, nevertheless, for values below $10^{-2}$ the improvement is marginal.
The probability of path misdetection, shown in Fig. \ref{fig:k_ge1_DNNvsSearch}, confirms the performance trend of the MSE. Specifically, we see that at low SNR, the DNN-based model outperforms the search method (i.e., has lower probability of misdetection) while at high SNR we see that the search method is better.

\textbf{Computational complexity:}
As we have previously discussed, the search method requires high computational power. Precisely, ${n_r \choose L}$ iterations with one matrix inversion and two matrix multiplication operations are performed per iteration, which then produces a vector of length $m$. Finally, an additional step of finding the minimum $l_2-$norm of all $n_r \choose L$ vectors is performed.
On the other hand, the DNN-based mapping just requires $N_{k}$ linear computations for the hidden and output layers, where $N_k$ is the number of nodes at the $k^{\text{th}}$ layer. These computations are of the form $\sum_{i=1}^{N_{k-1}} w_i a_i$ where $a_i$ is the value passed from the $i^{th}$ node of the previous layer and $w_i$ is the weight on its link. 
For this particular example, the search method and the DNN-based method were implemented on the same machine and on average the search method's execution time was $11.2$ ms compared to $47 \; \mu$s for the DNN model.

\section{Multiple Transmit and Receive Antennas}
\label{sec:multiTxRx}
So far, we only dealt with channels of single-transmit, multiple-receive antennas. Recall that this setting is almost identical to multiple-transmit, single-receive antenna channels, 
except that in the former setting we seek to design $\boldsymbol{w_i}$'s to estimate the angular channel at RX, while in the latter, we design $\boldsymbol{f_j}$'s to estimate the angular channel at TX.
In this section, we extend the channel setting to be of \textbf{\textit{multiple-transmit, multiple-receive}} antennas.
We build on the design principles and decoding methods of single transmit antenna channels and show how measurements are obtained and decoded to estimate the entire $n_r {\times} n_t$ channel.

\subsection{Measurements}
\label{sec:multipleTXRXMeasurements}
Unlike the single transmit antenna scenario where TX sends signals omnidirectionally, it can now focus its transmission on narrow angular directions.
However, from RX's point of view, no matter which set of directions the TX is transmitting into, it can only see a number of $n_r$ resolvable bins; only $L$ of which may have paths to TX.
The same is true from TX's perspective, where the TX can only see $n_t$ resolvable bins, only $L$ of which may have paths to the receiver\footnote{Recall that the directions at which the TX is transmitting and the RX is receiving are determined by their antenna beam patterns which are in turn determined by $\boldsymbol{f_j}$ and $\boldsymbol{w_i}$, respectively (see Fig. \ref{fig:beamforming_demo}).}.
Thus, for an arbitrary tx-precoder, the receiver would need to measure the channel using the same set of $\boldsymbol{w_i}$'s it needs for the $n_t = 1$ setting. Upon decoding, the result would be $n_r$ angular rx bins (corresponding to the particular $\boldsymbol{f_j}$ used at TX).
Similarly, for an arbitrary rx-combiner, the transmitter would need the same set of $\boldsymbol{f_j}$'s it needs for the $n_r = 1$ setting to find its respective tx bins.
To find such $\boldsymbol{f_j}$'s and $\boldsymbol{w_i}$'s, we invoke Theorem \ref{thm:uniquenessOfMeasurements}.

Let $\boldsymbol{f}_j \; \forall j {\in} \{ 1, \dots, m_t \}$ be the tx-precoding vectors and $\boldsymbol{w}_i \; \forall i {\in} \{1, \dots, m_r\}$ be the rx-combining vectors.
Then, the channel measurements are obtained as follows:
The transmitter sends a number of $m_r$ pilot symbols using \textbf{\textit{each}} of its $m_t$ precoders.
On the receiver side, for every tx-precoder, $m_r$ channel measurements are obtained using the distinct $m_r$ rx-combiners.
Recall that $u_{i,j} = y^s_{i,j} + \boldsymbol{w}_i^H \boldsymbol{n}$
where $y^s_{i,j} = \boldsymbol{w}_i^H \boldsymbol{Q} \boldsymbol{f}_j$ (see Eq. (\ref{eqn:error-free_symbols})).
Let us arrange the $m_r$ measurements corresponding to the $j^{th}$ tx-precoder in $\boldsymbol{y}^s_j$ and define $\boldsymbol{Y}^s$ as
\begin{equation}
\boldsymbol{Y}^s \triangleq
\begin{pmatrix}
\boldsymbol{y}^s_1 & \boldsymbol{y}^s_2 & \dots &
\boldsymbol{y}^s_{m_t}
\end{pmatrix}
\end{equation}
Thus, $\boldsymbol{Y^s}$ contains all $m_t{\times}m_r$ channel measurements necessary to discover all available paths.

\subsection{Decoding $\boldsymbol{Y}^s$}

To obtain $\hat{\boldsymbol{Q}}^a$ from $\boldsymbol{Y}^s$, we perform multiple SIMO decoding operations\footnote{Alternatively, we could have trained a large DNN model which accepts all measurements $\boldsymbol{Y^s}$ and outputs an estimate $\hat{\boldsymbol{Q}}^{\boldsymbol{a}}$. Adopting this strategy, however, results in overwhelming training complexity since this model would need to be trained with a massive training data set of size $n_s \times |\mathcal{Q}^a_s| = n_s \sum_{i = 0}^L {n_r \times n_t \choose i}$, where $\mathcal{Q}^a_s$ now is the set of all support vectors of size $n_r n_t {\times} 1$ that represent the $n_r {\times} n_t$ vectorized channel matrices.
Compare this to our solution of using 2 DNN models trained with data sets of sizes $n_s \sum_{i = 0}^L {n_r \choose i}$ and $n_s \sum_{i = 0}^L {n_t \choose i}$, respectively.},
as described in Section \ref{sec:DNN}.
This procedure is highlighted in the diagram in Fig. \ref{fig:multiTxRx} and is detailed as follows:
\begin{enumerate}[(i)]
\item Decode every $\boldsymbol{y}^s_j \; \forall j \{1,2,\dots,m_t\}$ to obtain $\boldsymbol{q}^a_{rx,j}$.
Recall that $\boldsymbol{y}^s_j$ is the measurement vector corresponding to the $j^{th}$ tx-precoder.
Thus, $\boldsymbol{q}^a_{rx,j}$, is the $n_r {\times} 1$ mm-wave channel observed at RX due to the TX signal transmission through the angular directions featured by $\boldsymbol{f}_j$.
\item After Step (i), we obtain a sequence of $m_t$ ``measurement'' components corresponding to each rx-bin.
Each of these components is produced using a distinct tx-precoder. Let us denote these sequences by $\boldsymbol{y}^s_{tx,k}$ ($1 {\times} m_t$ row vectors), where $k \in \{1,2,\dots,n_r\}$.
\item Decode each $\boldsymbol{y}^s_{tx,k}$ to obtain $\boldsymbol{q}^a_{tx,k}$ ($1 {\times} n_t$ row vectors) whose components constitute all the tx-bins corresponding to the $k^{th}$ rx-bin.
\item Stack all $\boldsymbol{q}^a_{tx,k}$ to obtain $\hat{\boldsymbol{Q}}^a$ (each representing the $k^\text{th}$ row of $\hat{\boldsymbol{Q}}^a$).
\end{enumerate}

\section{Performance Evaluation}
\label{sec:eval}
We evaluate the performance of our proposed coding-based solution under various simulation scenarios.
Specifically, we consider $23 {\times} 23$ and $15 {\times} 32$ multi-path channels.
For both channel settings, we assume the existence of a maximum of $3$ paths\footnote{Note that information on $L$ can be obtained from statistical channel models \cite{akdeniz2014millimeter, sur2016beamspy}).}.
We also consider $15{\times} 31$ single-path channels.
The single-path assumption is appropriate for LoS scenarios where the path gain of the LoS is significantly higher than the gains of Non-LoS (NLoS) paths ($\approx20$ dB higher \cite{akdeniz2014millimeter}).

To understand how our solution compares to the state-of-the-art, we implement a compressed-sensing-based channel estimation solution, as well as the IEEE 802.11ad's (WiGig) beam discovery method.
Note that, while compressed sensing is a generic solution that can be applied to multi-path channels (similar to our solution), the WiGig method is designed to discover one channel path, hence, we only use it for the $15\times 31$ single-path channel.
Our results demonstrate that our proposed solution is more resilient to errors compared to both CS and 802.11ad, and produces higher quality estimates.
Furthermore, we study the effect of ADC resolution on channel estimation performance.
This is important because ADC's power consumption is directly proportional to their resolution.
Hence, it is necessary to understand the resolution limit beyond which only minimal gains, in terms of channel estimation performance, can be achieved.

\begin{figure}[t]
\centering
\includegraphics[width=0.45\linewidth]{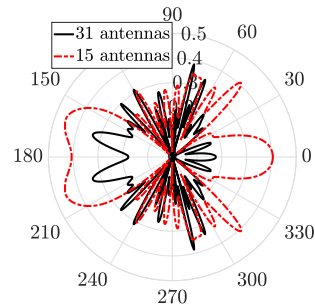}
\caption{\footnotesize Antenna pattern example using CS.}
\label{fig:rx_CS} 
\end{figure}

\subsection{Performance Metrics}
We adopt performance metrics that highlight:
(i) the measurement overhead,
(ii) accuracy of path discovery
(iii) quality of the estimated path gains, and
(iv) the impact of the channel estimates on achievable data rate.
These metrics are evaluated numerically, using Monte Carlo simulations (averaged over $10^5$ simulation runs) and evaluated against different values of SNR.
We define the performance metrics as follows:
\textbf{\textit{i) Number of measurements}}: Given by $m_t{\times}m_r$.
\textbf{\textit{ii) Probability of path discovery:}} Paths are said to exist at the strongest $L$ components in the estimated channel $\hat{\boldsymbol{Q}^a}$.
\textbf{\textit{iii) MSE}} (Normalized): Defined as the squared value of the Frobenius-norm of the estimation error, $\boldsymbol{Q^a} - \hat{\boldsymbol{Q}}^{\boldsymbol{a}}$, normalized by the Frobenius-norm of $\boldsymbol{Q^a}$, i.e., $\small \frac{\norm{\boldsymbol{Q^a} - \hat{\boldsymbol{Q}}^{\boldsymbol{a}}}_F^2}{\norm{\boldsymbol{Q^a}}_F^2}$.
\textbf{\textit{iv) Outage Rate}}: Denoted by $R_{\text{out}}$ and defined as
$ R_{\text{out}} \triangleq \mathbb{E}\left[ \left( 1- \mathbbm{1}_{\{ \text{out}\}} \right) \times C_{\boldsymbol{Q}} \right]$
where $C_{\boldsymbol{Q}}$ is the MIMO channel capacity of the channel $\boldsymbol{Q}$ \cite{tse2005fundamentals},
and $\mathbbm{1}_{\{ \text{out}\}}$ is the indicator function that takes a value of $1$ in case of outage and $0$ otherwise. We assume that an outage occurs if any of the strong channel paths were misdetected.

\begin{figure*}[t]
\centering
\captionsetup[subfigure]{aboveskip=-0.25pt,belowskip=-0.25pt}
\begin{subfigure}[t]{0.33\textwidth}
\centering
\includegraphics[width=0.95\linewidth]{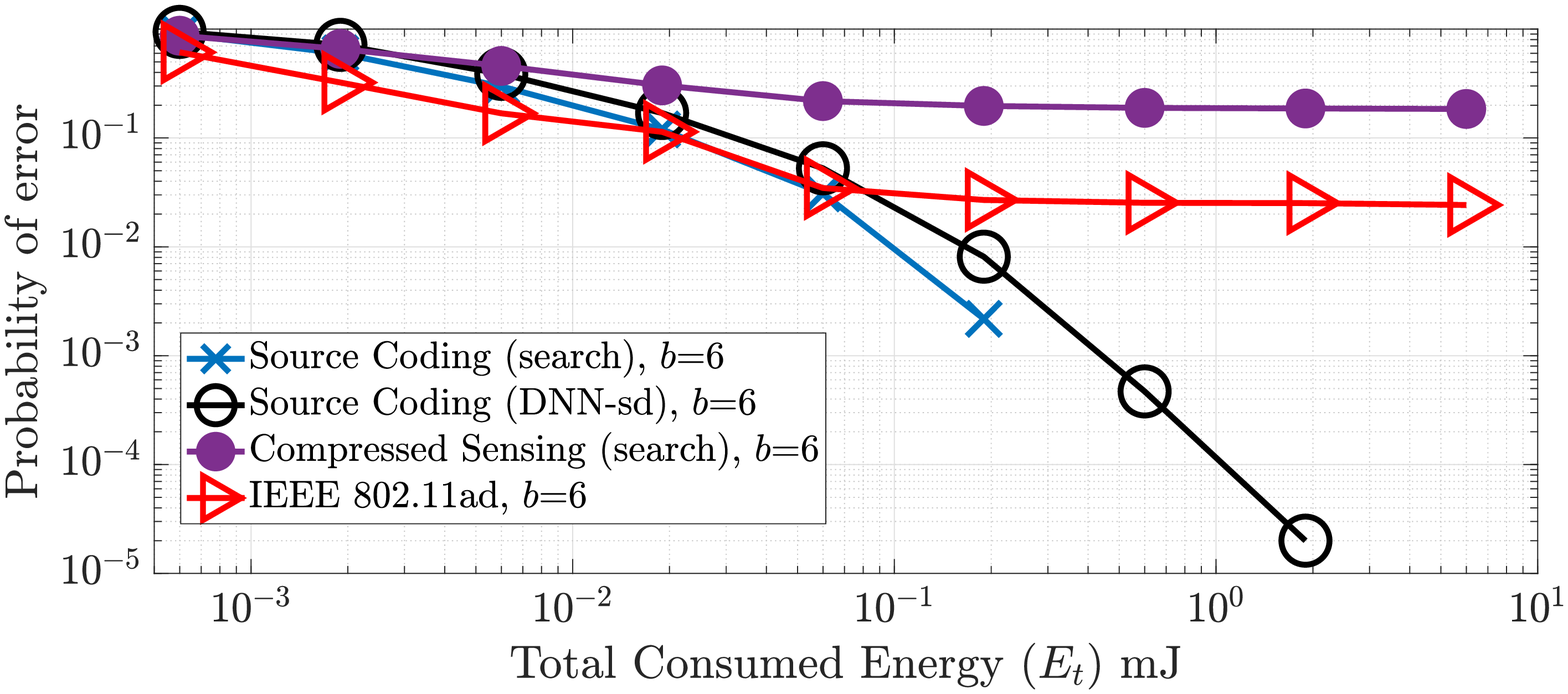}
\caption{\footnotesize Probability of Path Misdetection ($1 {-} \mathbb{P}(k \geq 1)$).}
\label{fig:compareToIEEE802.11ad_kge1} 
\end{subfigure}%
\begin{subfigure}[t]{0.33\textwidth}
\centering
\includegraphics[width=0.95\linewidth]{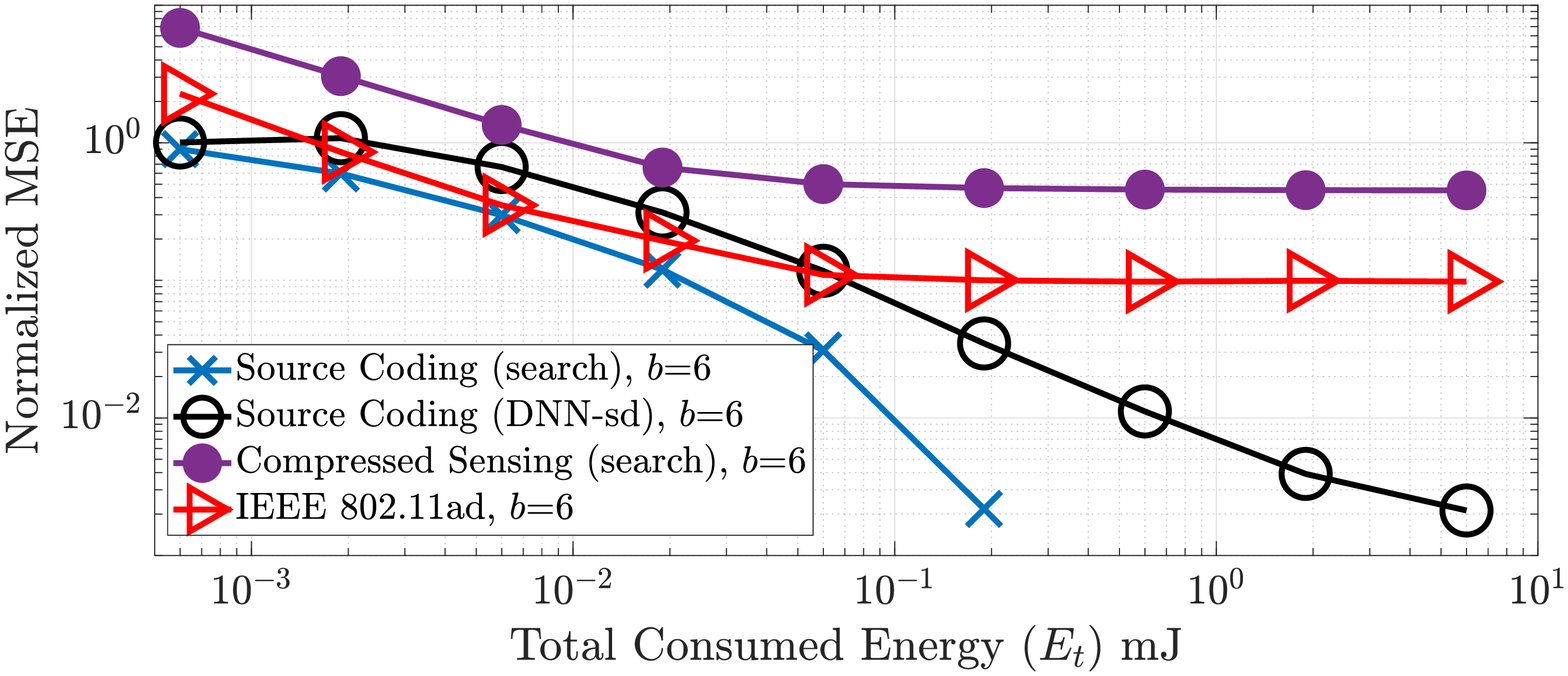}
\caption{\footnotesize Normalized MSE.}
\label{fig:compareToIEEE802.11ad_MSE} 
\end{subfigure}%
\begin{subfigure}[t]{0.33\textwidth}
\centering
\includegraphics[width=0.95\linewidth]{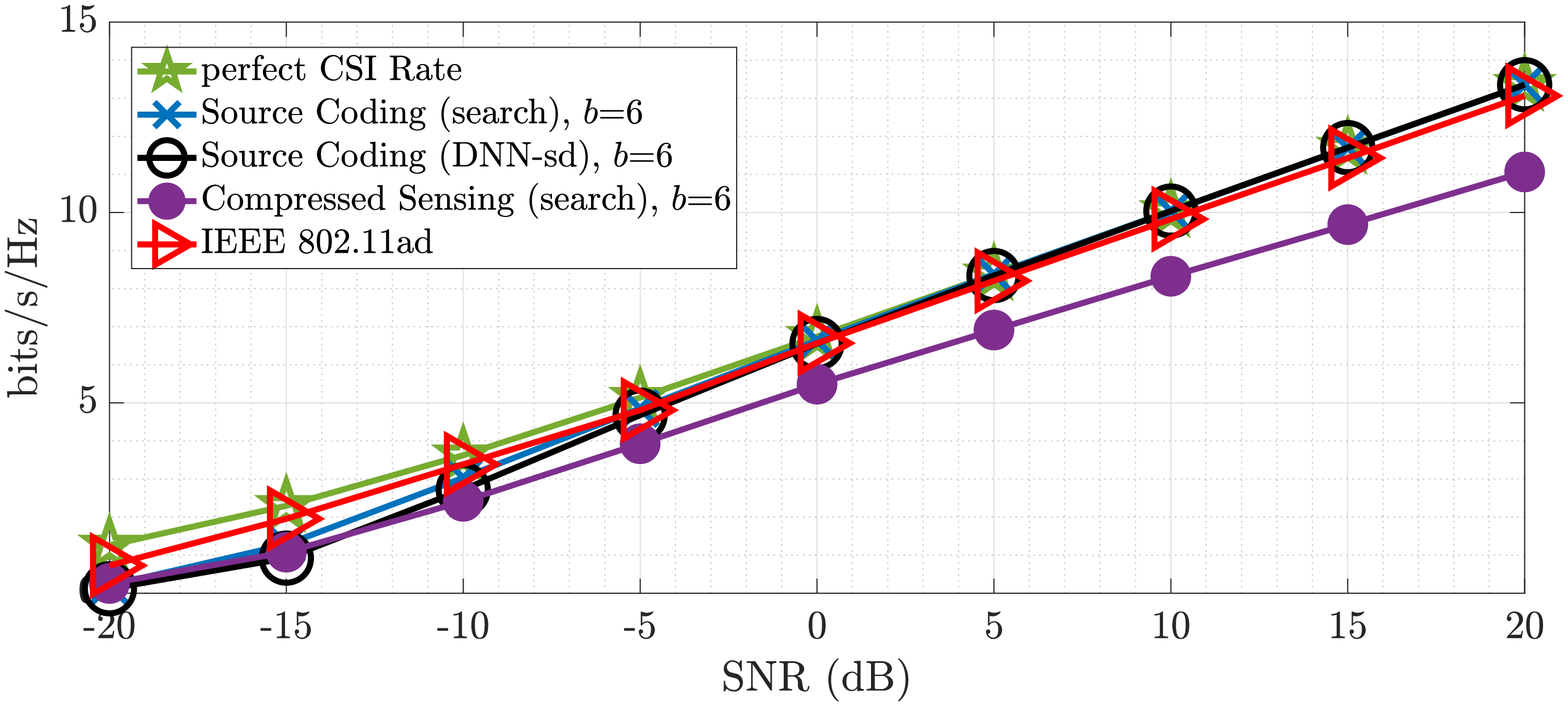}
\caption{\footnotesize Outage Rate ($R_{\text{out}}$).}
\label{fig:compareToIEEE802.11ad_Rout}
\end{subfigure}
\centering
\caption{\small Performance of single-path $15{\times}31$ channels.}
\label{fig:15x31_b=6}
\end{figure*}

\begin{figure*}[t]
\centering
\begin{subfigure}{.33\textwidth}
  \centering
  \captionsetup{justification=centering}
\includegraphics[width=0.95\linewidth]{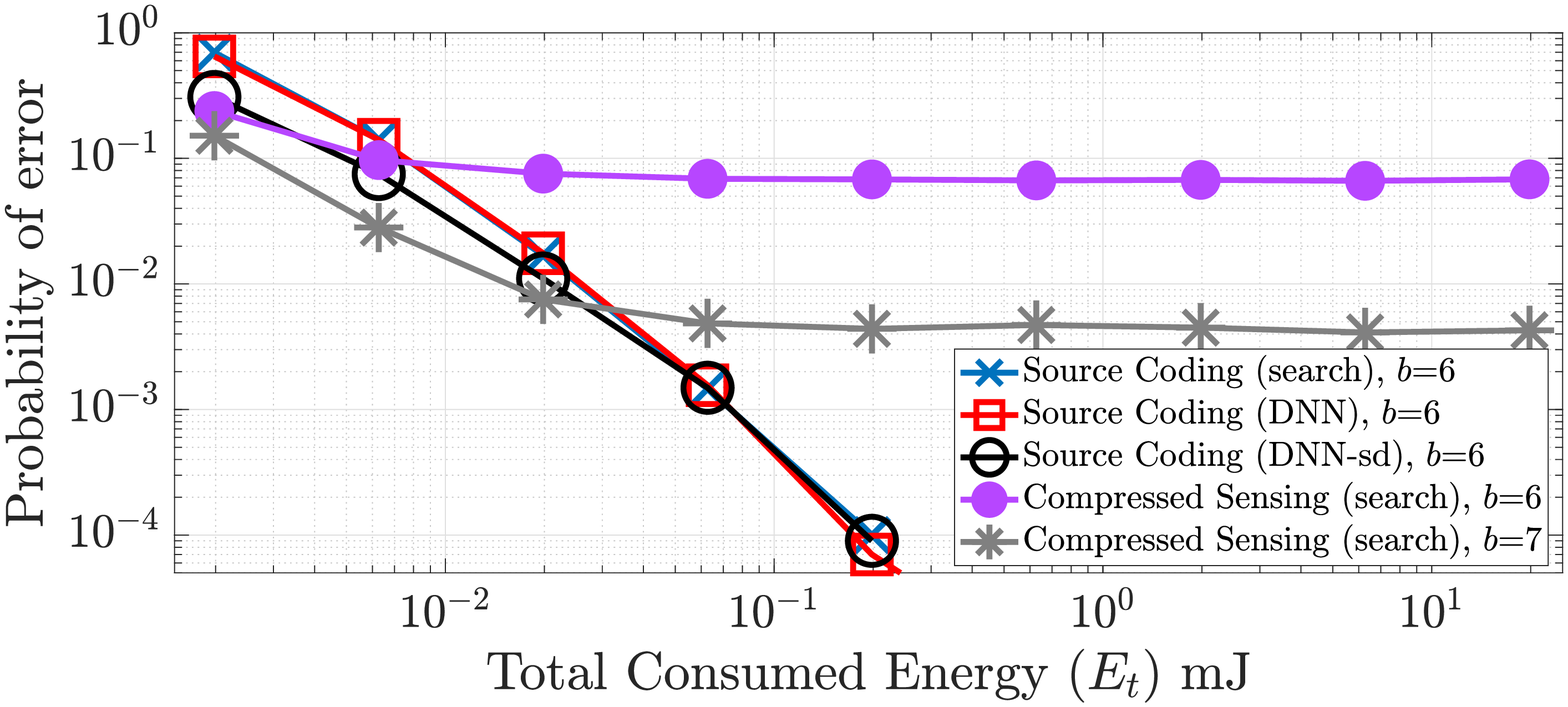}
\vspace{-0.2cm}
\caption{$1 - \mathbb{P}(k \geq 1)$}
\label{fig:23x23_b=6_k_ge1}
\end{subfigure}%
\begin{subfigure}{.33\textwidth}
  \centering
  \captionsetup{justification=centering}
\includegraphics[width=0.95\linewidth]{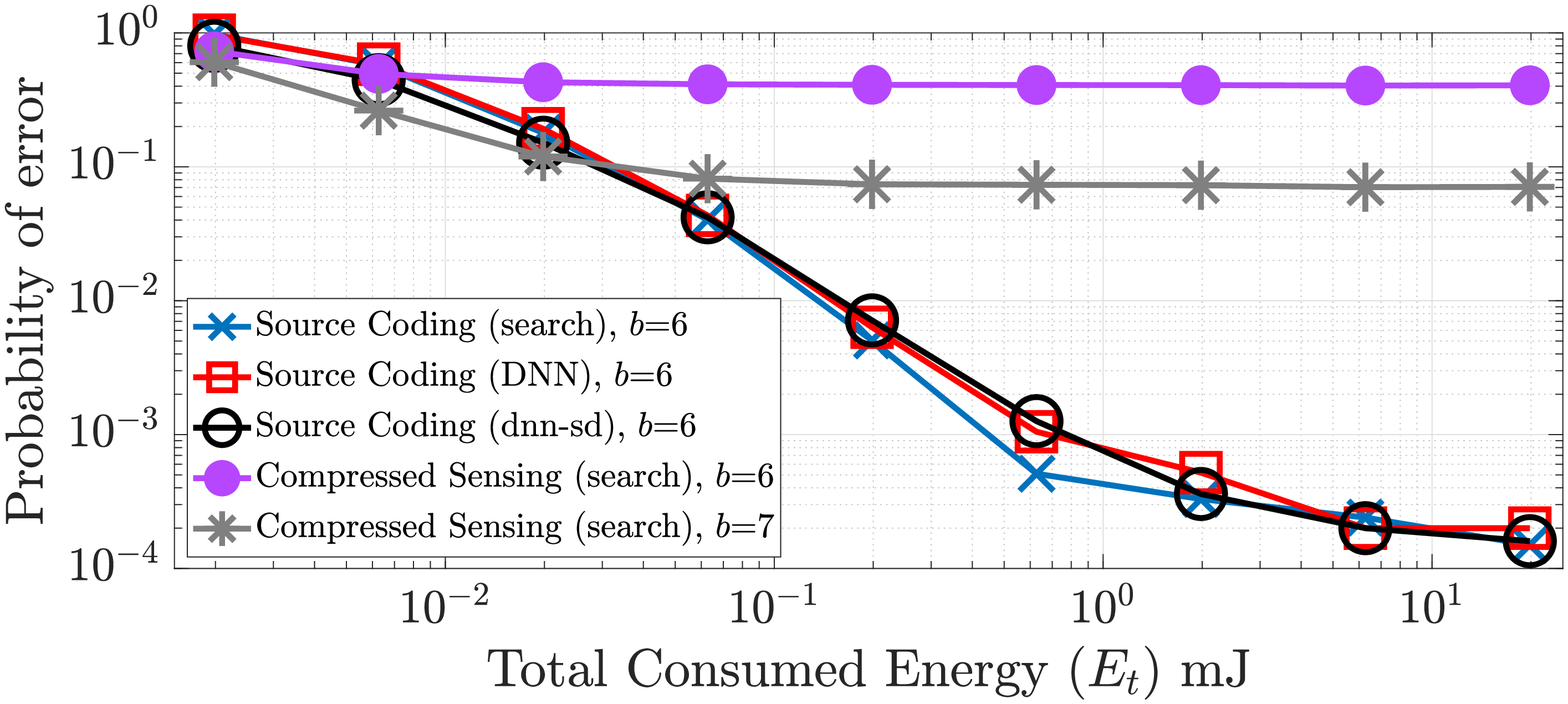}
\vspace{-0.2cm}
\caption{$1 - \mathbb{P}(k \geq 2)$}
\label{fig:23x23_b=6_k_ge2}
\end{subfigure}%
\begin{subfigure}{.33\textwidth}
  \centering
  \captionsetup{justification=centering}
\includegraphics[width=0.95\linewidth]{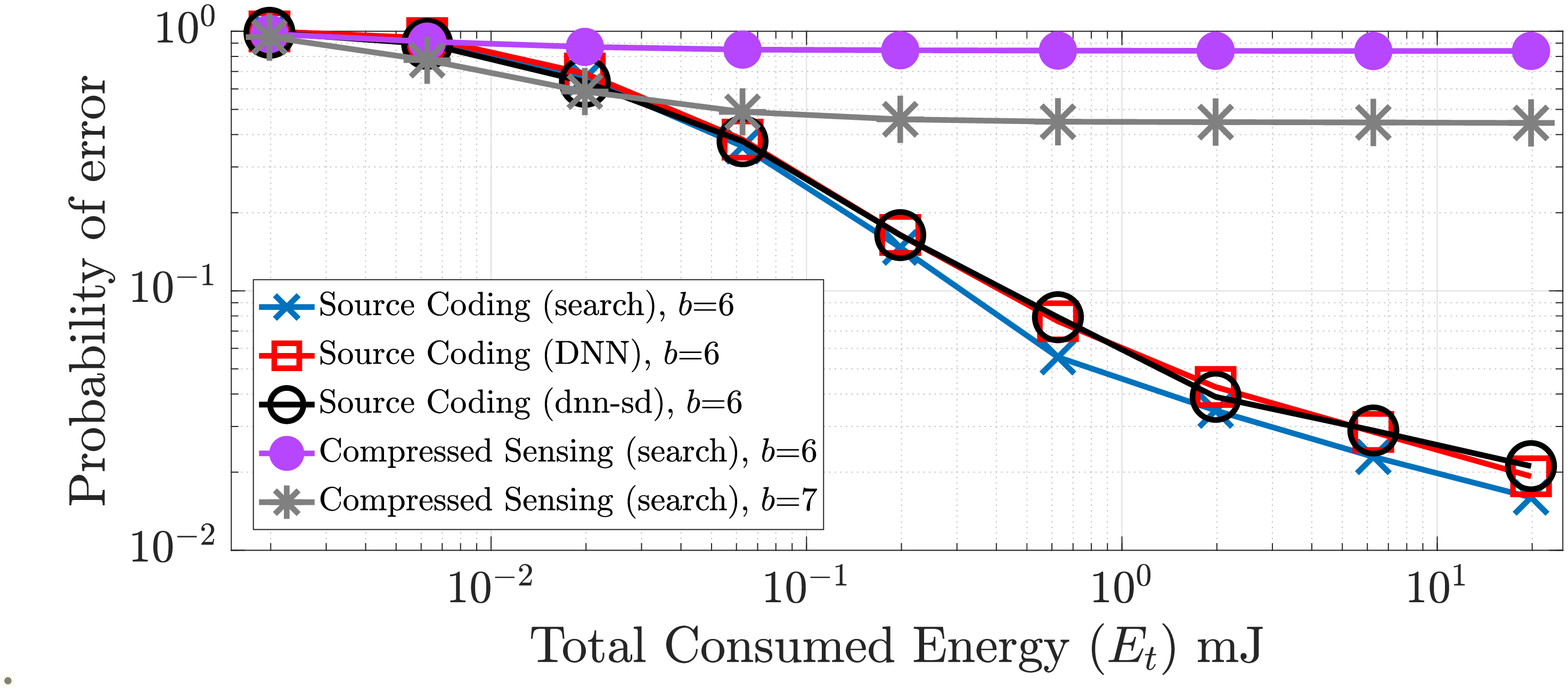}
\vspace{-0.2cm}
\caption{$1 - \mathbb{P}(k \geq 3)$}
\label{fig:23x23_b=6_k_ge3}
\end{subfigure}%
\caption{\small Beam detection probability of error for $23{\times}23$ channels with $L {\leq} 3$.}
\label{fig:23x23_b=6}
\end{figure*}
\begin{figure*}[t]
\centering
\begin{minipage}{.48\textwidth}
\includegraphics[width=0.85\linewidth]{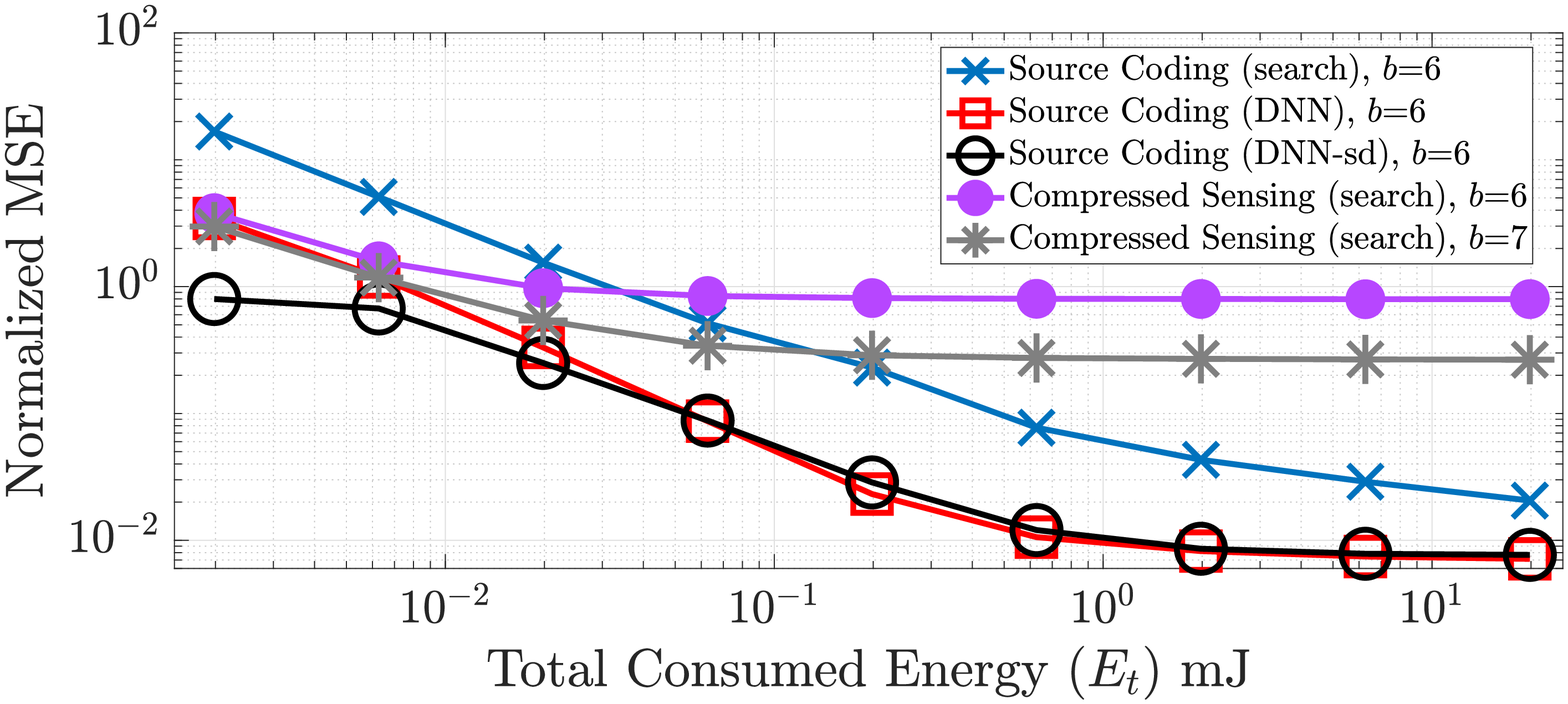}
\caption{\footnotesize Normalized MSE ($23{\times}23$ channels with $L {\leq} 3$)}
\label{fig:23x23_b=6_normalizedMSE} 
\end{minipage}%
\begin{minipage}{.48\textwidth}
\includegraphics[width=0.85\linewidth]{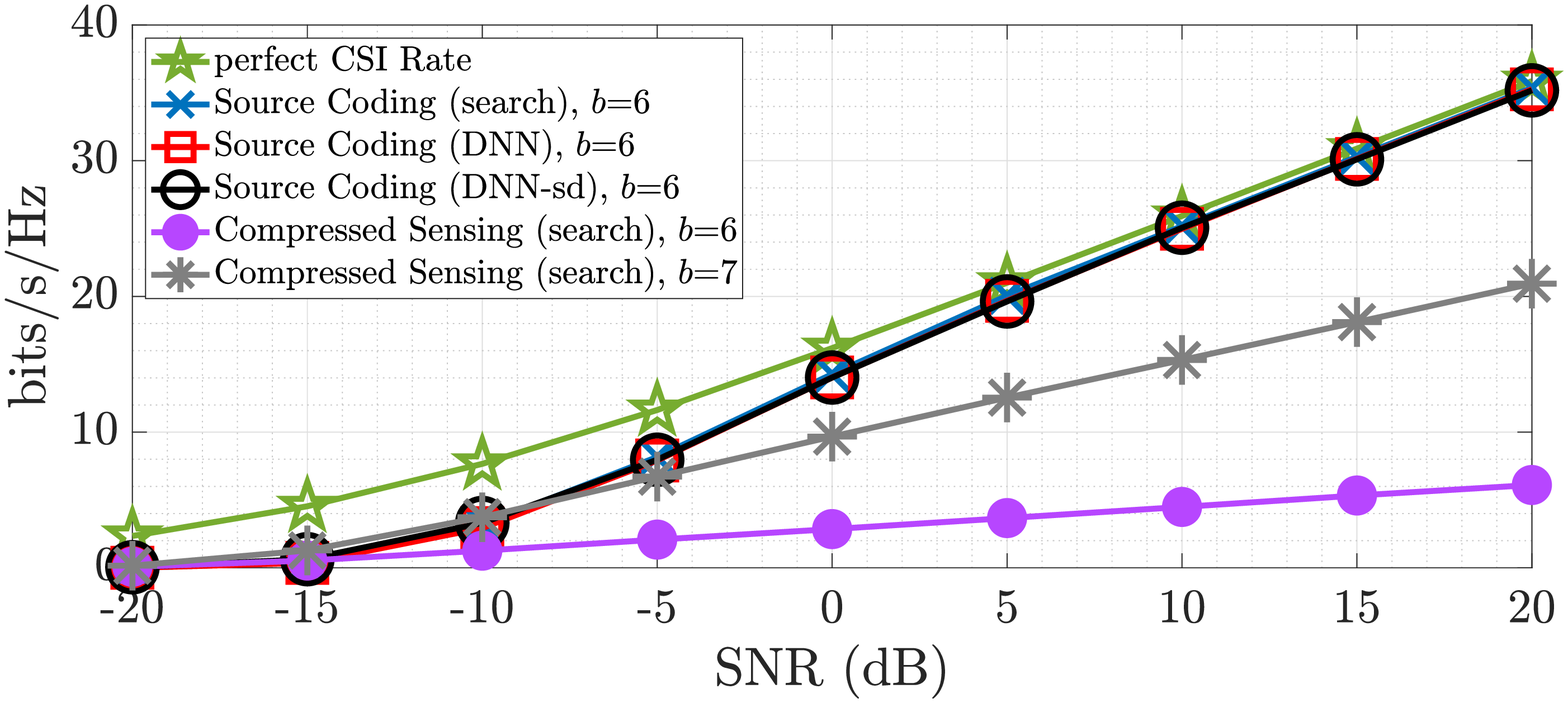}
\caption{\footnotesize Outage Rate ($R_{\text{out}}$, $23{\times}23$ channels with $L {\leq} 3$)}
\label{fig:23x23_b=6_outageRate} 
\end{minipage}
\end{figure*}

\subsection{Implemented Solutions}
\textbf{1- Source Coding:}
We test three different measurement decoding methods, which we integrate with our coding-based solution.
All three methods are used to solve each of the sub-problems depicted in Fig. (\ref{fig:multiTxRx}).
The first is the ``search'' method of \cite{shabaraBeamToN}.
The second and third decoding methods are based on DNNs, but they differ in the way they are trained, i.e., with or without measurement errors. We explain them as follows:
\begin{itemize}
\item \textbf{DNN}: Here, DNN models are trained using pure measurements, with no added noise components.
Since models are not trained with errors, only one model can be used at all SNR and ADC resolution levels.
\item \textbf{DNN-sd}: Since measurement errors tend to degrade the performance of path discovery, we try to overcome this impediment by training DNN models with error-corrupted measurements.
Since errors are dependent on ADC resolution and SNR (see Eq. (\ref{eqn:SNR})), we train multiple DNNs for different values of each. We call such model ``DNN with selective defense'' or ``DNN-sd''.
Note that the DNN-sd model is not dependent on specific path gain values since our SNR definition does not include the effect of individual path gains $\alpha_l$ and because training is done using a wide range of uniformly distributed gain values.
\end{itemize}

The DNN model parameters, including the number of layers, the number of nodes (neurons) per layer and the activation function, are selected using cross-validation.
We also select the DNN model's size such that we have a reasonably good input-output mapping performance while keeping the processing speed fairly fast. We used \texttt{tensorflow} \cite{tensorflow} for creating and using DNN models.
\textbf{\textit{Both types of DNN models are trained offline and stored in memory.}}

\textbf{2- Compressed Sensing:}
We use a similar formulation for the mmWave channel estimation problem as in \cite{Alkhateeb_2015_HowManyMeasurements,lu2019comparison}.
The tx-precoders and rx-combiners are obtained using random, uniformly distributed phase shift values. That is, the components of all $\boldsymbol{f_j}$'s and $\boldsymbol{w_i}$'s are of the form $\exp(j\vartheta)$ where $\vartheta \sim [0, 2\pi)$. Fig. \ref{fig:rx_CS} shows antenna pattern examples for random beamforming with $15$ and $31$ antennas.
For measurement decoding, we use the ``search'' method, which is the optimal $\ell_0$-norm minimization solution \cite{CS4Wireless_TipsAndTricks} for solving each of the sub-problems of channel decoding.
While this may still be too computationally complex to be of practical use, it provides an upper bound on the performance of other sparse recovery algorithms like OMP, $\ell_1$ and $\ell_2$-norm minimization, etc.

\textbf{3- 802.11ad:}
We only consider the Sector Level Sweep stage of the channel estimation scheme of 802.11ad. At this stage, the TX starts by sequentially transmitting packets in all possible transmit AoDs (sectors) while the receiver performs quasi omni-directional reception. Then, the TX and RX switch modes where TX forms a quasi omni-directional pattern while RX sweeps through all possible receive AoAs (sectors). 
The directions that reveal the highest received signal strength is denoted as the AoA and AoD of the strongest channel path.
\begin{figure*}[t]
\centering
\begin{subfigure}{.33\textwidth}
  \centering
  \captionsetup{justification=centering}
\includegraphics[width=0.95\linewidth]{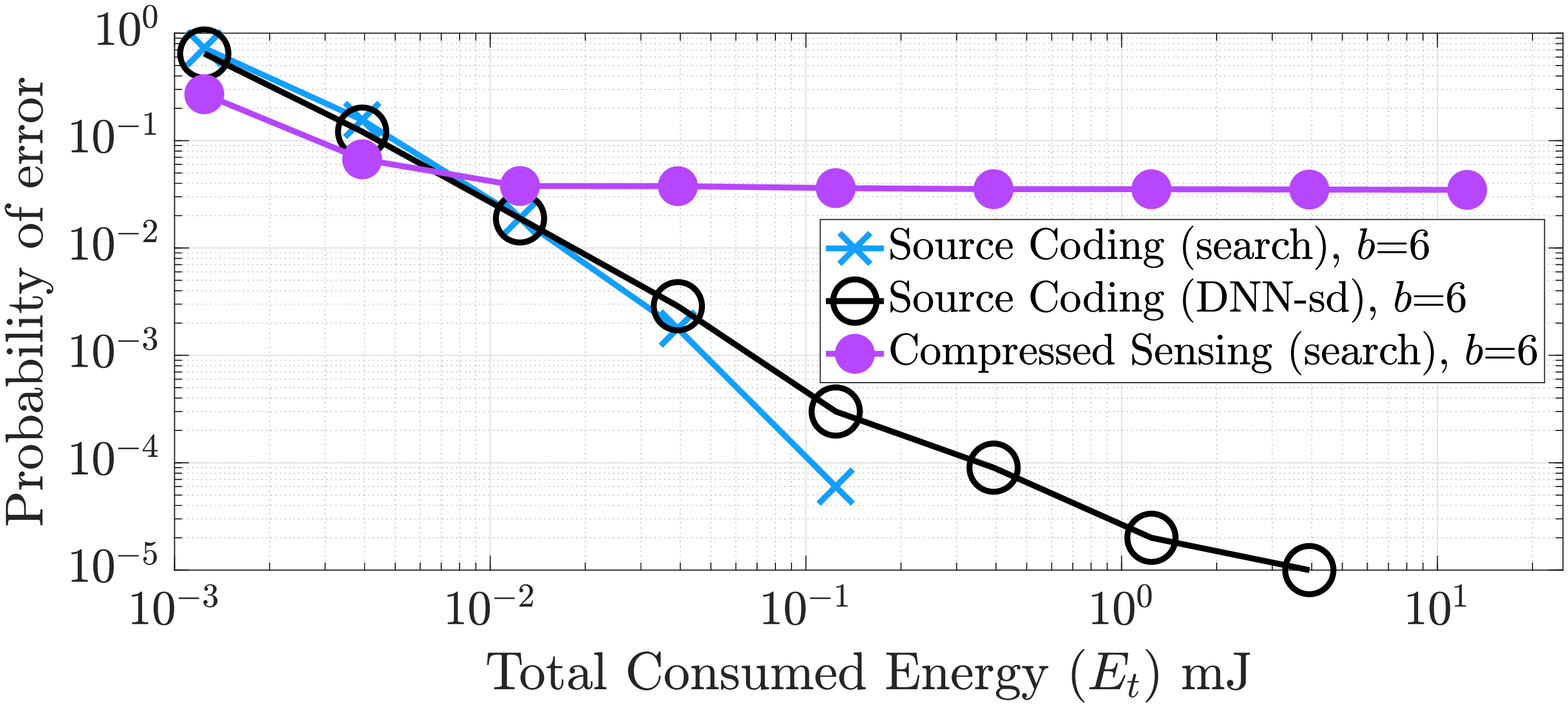}
\vspace{-0.2cm}
\caption{$1 - \mathbb{P}(k \geq 1)$}
\label{fig:15x32_b=6_k_ge1}
\end{subfigure}%
\begin{subfigure}{.33\textwidth}
  \centering
  \captionsetup{justification=centering}
\includegraphics[width=0.95\linewidth]{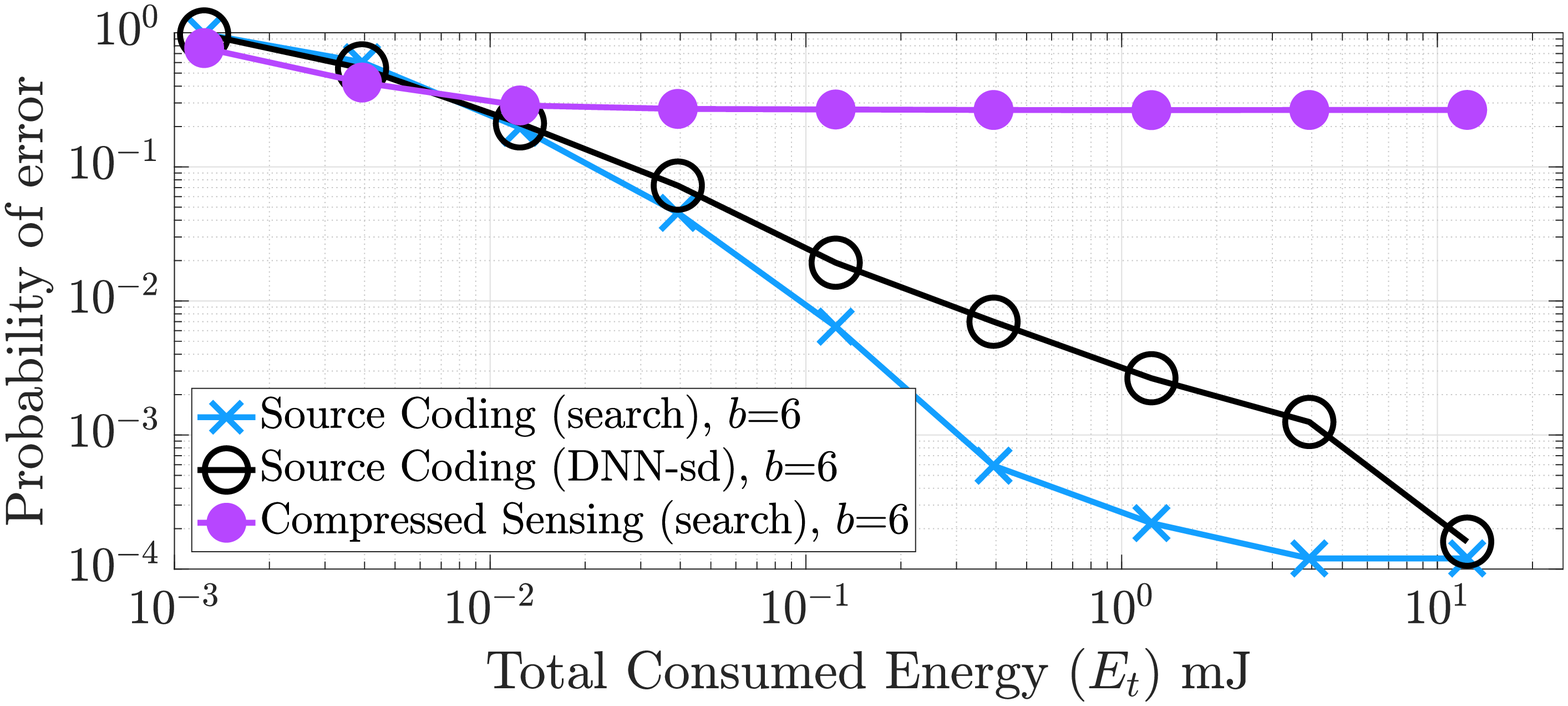}
\vspace{-0.2cm}
\caption{$1 - \mathbb{P}(k \geq 2)$}
\label{fig:15x32_b=6_k_ge2}
\end{subfigure}%
\begin{subfigure}{.33\textwidth}
  \centering
  \captionsetup{justification=centering}
\includegraphics[width=0.95\linewidth]{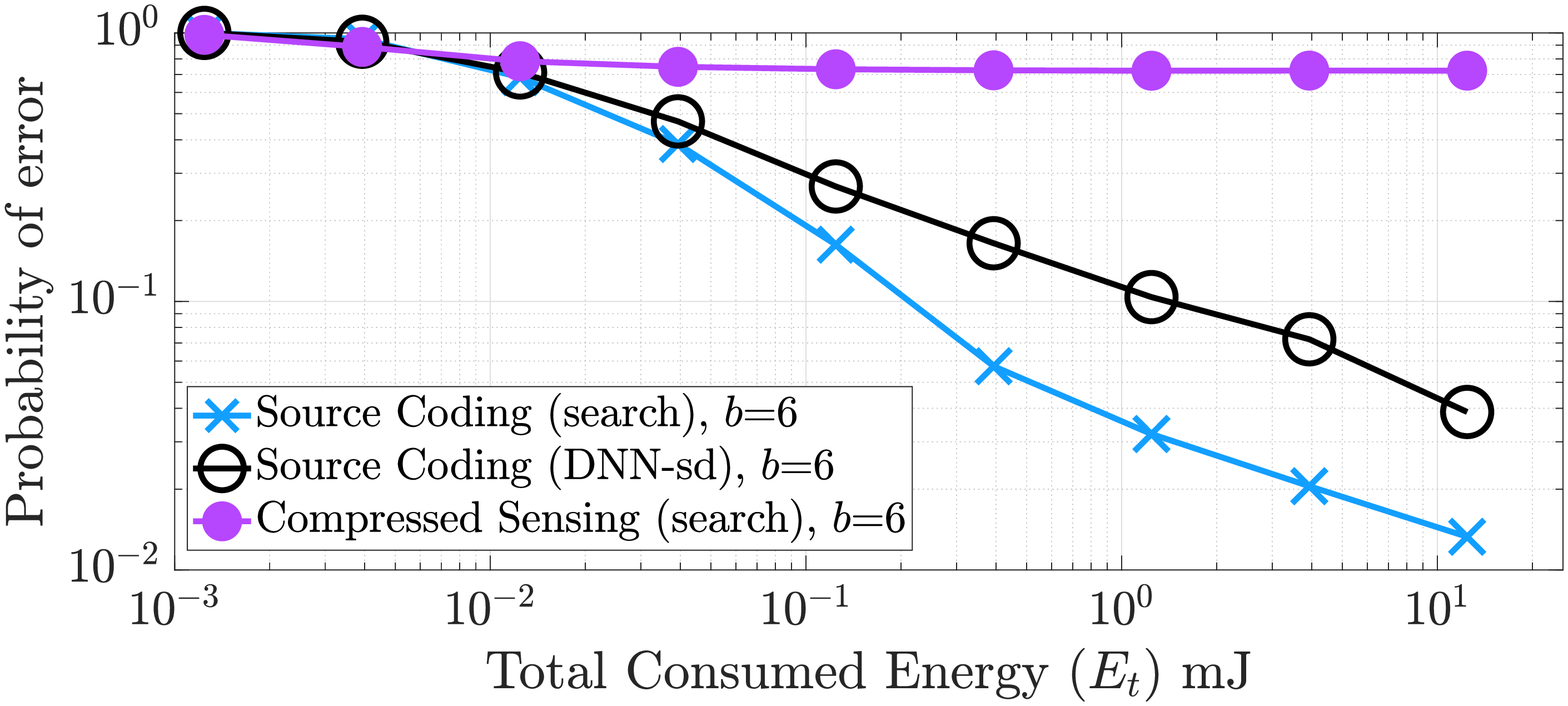}
\vspace{-0.2cm}
\caption{$1 - \mathbb{P}(k \geq 3)$}
\label{fig:15x32_b=6_k_ge3}
\end{subfigure}%
\caption{\small Beam detection probability of error for $15{\times}32$ channels with $L {\leq} 3$.}
\label{fig:15x32_b=6}
\end{figure*}
\begin{figure*}[t]
\centering
\begin{minipage}{.48\textwidth}
\includegraphics[width=0.85\linewidth]{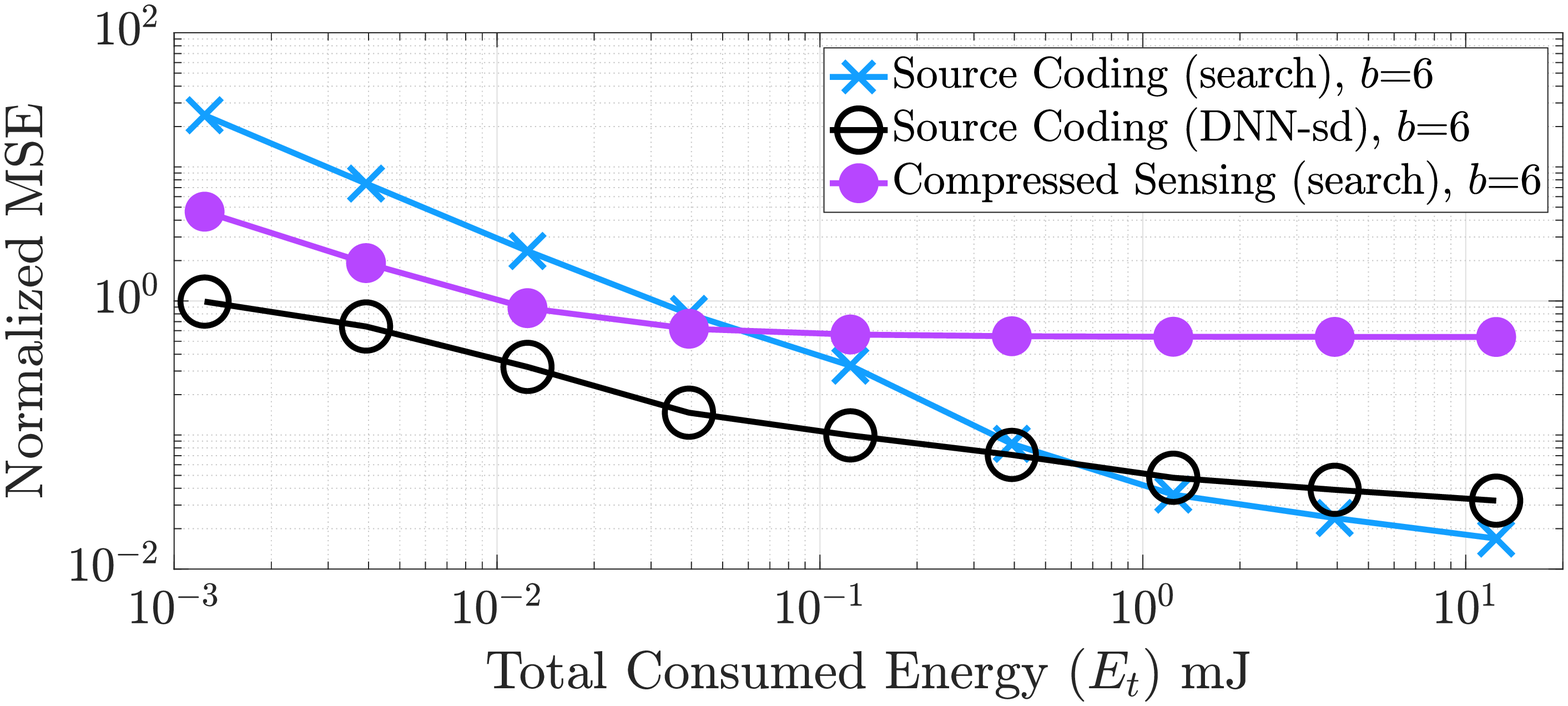}
\caption{\footnotesize Normalized MSE ($15{\times}32$ channels with $L {\leq} 3$)}
\label{fig:15x32_b=6_normalizedMSE} 
\end{minipage}%
\begin{minipage}{.48\textwidth}
\includegraphics[width=0.85\linewidth]{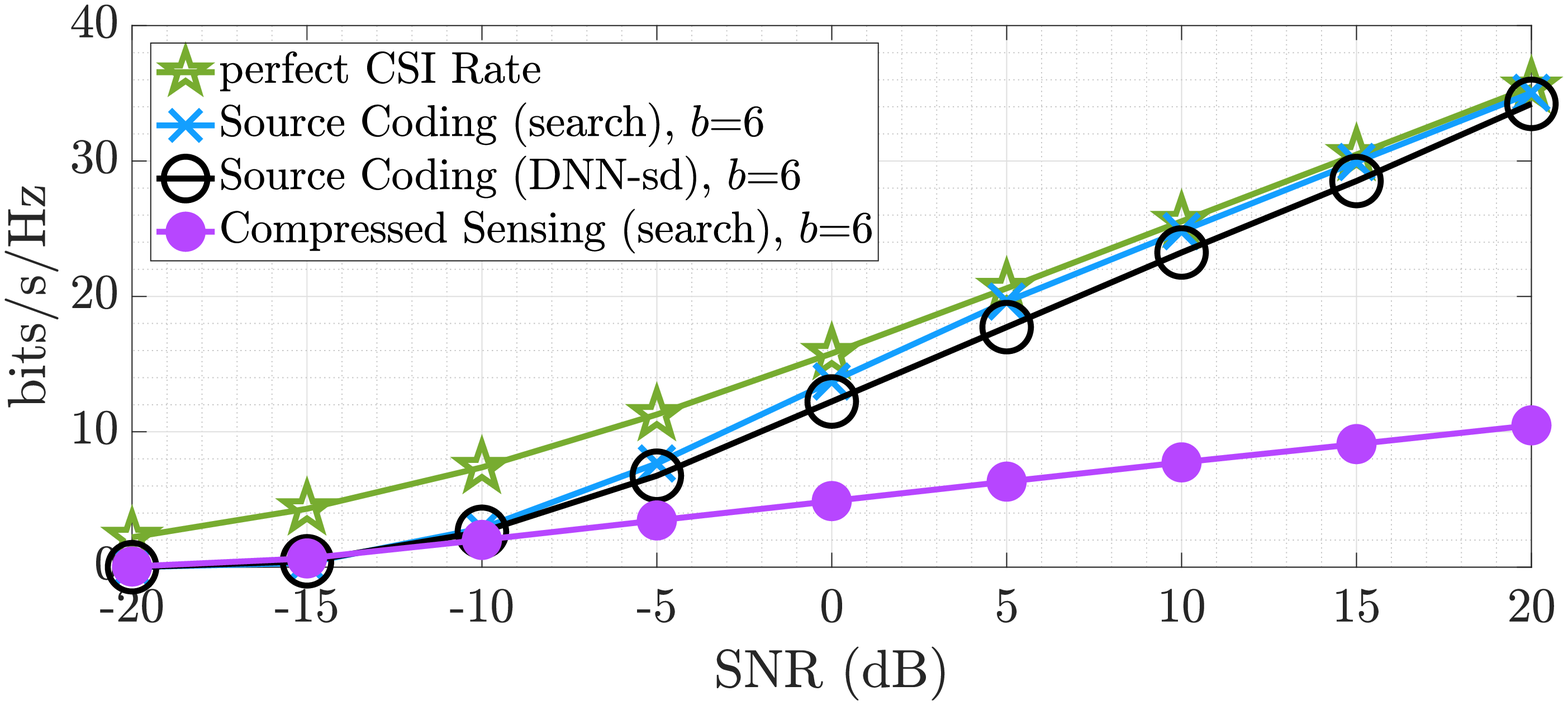}
\caption{\footnotesize Outage Rate ($R_{\text{out}}$, $15{\times}32$ channels with $L {\leq} 3$)}
\label{fig:15x32_b=6_outageRate} 
\end{minipage}
\end{figure*}

\subsection{Equating Energy Consumption}
Various channel estimation solutions may require different number of measurements and may have different beamforming gains. Thus, it would not be fair to compare them at fixed transmission power.
Instead, it is more fair to fix the total energy consumption for the whole channel estimation process of each solution.
Thus, for comparison purposes, we opt to adjust the transmit power of each scheme such that \textbf{\textit{the total amount of energy consumption for the entire measurement process remains the same}}.

\textbf{Energy Calculation:}
The energy consumption, denoted by $E_T$, is given by $E_T = m \times P_T \times \tau$, where $m$ is the number of measurements, $P_T$ is the total transmit power per measurement and $\tau$ is the time duration of one measurement.
Since the antenna patterns at TX/RX of our proposed scheme consist of multiple overlapped beams (recall Fig. \ref{fig:beamforming_demo}), the total power $P_T$ is an integer multiple of the transmit power per direction/beam $P$, which depends on the number of overlapped beams at TX and RX.
Let $o_t$ and $o_r$ denote the number of overlapped beams at the TX and RX, respectively.
Hence, we have that $P_T = o_t {\times} o_r {\times} P$.
We can further write the transmit power per beam as
$P = \text{SNR} \frac{N_0}{\mu}$ (recall Eq. (\ref{eqn:SNR})).
This gives us a total energy consumption (in millijoules (mJ)) for the entire measurement process as: $E_T = m {\times} o_t {\times} o_r {\times} \text{SNR} \frac{N_0}{\mu}{\times} \tau$.
Let $\mu {=} -88$dB and $N_0 {=} 88$dBm\footnote{
To find $\mu$ and $N_0$, we assume a channel operating at a carrier frequency $f_c {=} 60$ GHz with bandwidth $B {=} 100$ MHz and distance between TX/RX of $d{=}10$m.
Further, we assume a receiver system with noise figure NF ${=} 6$ dB and temperature $T_0 {=} 293^{\circ}$ kelvin.
The path loss constitutes both the free space path loss (FSPL) and atmospheric absorption. FSPL is given as:
$\text{FSPL} {=} -10 \log_{10}\left( \frac{4 \pi}{c} df_c \right)^{n_p}$, where $n_p {=} 2$ is the path loss exponent. Atmospheric absorption, however, can be ignored for small distances ($\leq 50$m) \cite{tomkins201560}. Hence, $\mu {=} \text{FSPL} {=} -88$dB.
The noise power (in dBm) can be given as $N_0 {=} 10 \log_{10}\left( k_B T_0 B {\times} 1000\right) {+} \text{NF}$, where $k_B$ is the Boltzmann constant.}. Finally, let $\tau \approx 23\mu$s (from IEEE 802.11ad).

\subsection{Results}

\textbf{$15 {\times} 31$ single-path channels:}
For this scenario, we choose ADCs of resolution $b{=}6$ bits. We provide results for our coding-based solution with both search and DNN-sd decoding. We also provide results for compressed sensing with search-based decoding, and IEEE 802.11ad beam alignment. We plot the results against the consumed energy $E_T$.
\textbf{Source code selection:}
We choose codes which satisfy the requirements in Theorem \ref{thm:uniquenessOfMeasurements} as follows:
At the TX side, we choose the $(31,26)$ Hamming code to design tx-precoders, while at the RX side we choose the $(15,11)$ Hamming code to design rx-combiners.
Both of which operate as syndrome source codes with generator matrices $\boldsymbol{G_t}$ and $\boldsymbol{G_r}$ of sizes $5{\times}31$ and $4{\times}15$, respectively.
Hence, we have $m_t {=} 5$ and $m_r {=} 4$, which gives us a \textbf{total number of $20$ measurements}.
For compressed sensing, we use the same number of measurements (i.e., $m=20$).
An exhaustive search, on the other hand, requires $465$ measurements (our solution represents a measurement $95\%$ reduction), while the IEEE 802.11ad scheme requires $46$ measurements ($57\%$ reduction).

For the source coding method, both the normalized MSE and probability of path misdirection results, shown in Fig. \ref{fig:15x31_b=6}, depict that DNN-sd decoding has a slightly worse performance compared to the search method.
This is a small sacrifice in performance that is traded for a huge advantage in processing speed.
The IEEE 802.11ad's method shows superior performance at low $E_T$ (below $0.1$ mJ). As $E_T$ increases, however, its performance seizes to improve, while our source coding solution keeps approaching perfect channel discovery.
When examining the outage rate, in Fig. \ref{fig:compareToIEEE802.11ad_Rout}, we see that the relatively high MSE error and probability of path misdetection of 802.11ad, does not result in a significant degradation in $R_{\text{out}}$. In fact, it has very close value to the perfect CSI capacity.
Recall that 802.11ad requires almost twice the number of channel measurements.
The compressed sensing method, on the other hand, has the lowest $R_{\text{out}}$ and the highest MSE and probability of path misdetection among all other schemes.

\begin{figure*}[t]
\centering
\begin{subfigure}{.32\textwidth}
  \centering
  \includegraphics[width=0.95\linewidth]{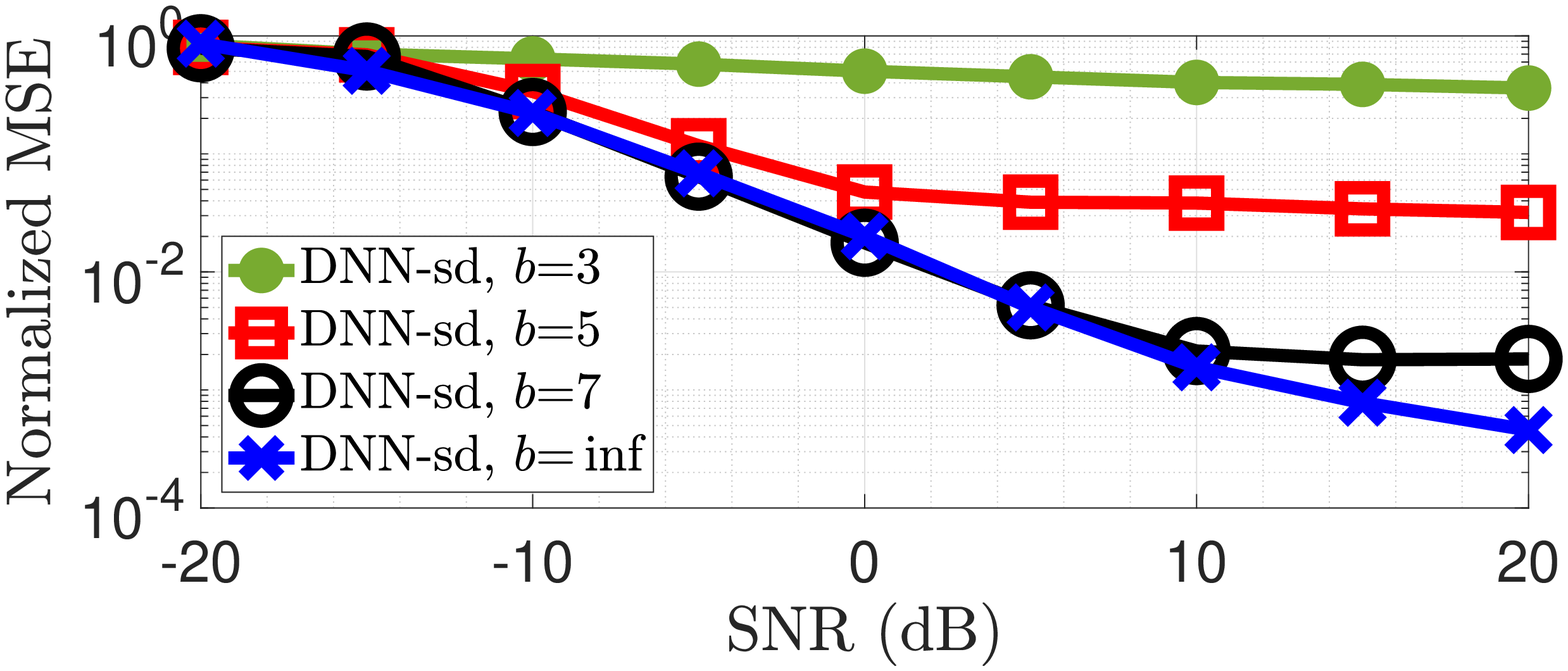}
  \vspace{-0.2cm}
  \caption{\footnotesize Normalized MSE.}
  \label{fig:allL_mse} 
\end{subfigure}%
\begin{subfigure}{.32\textwidth}
  \centering
  \includegraphics[width=0.95\linewidth]{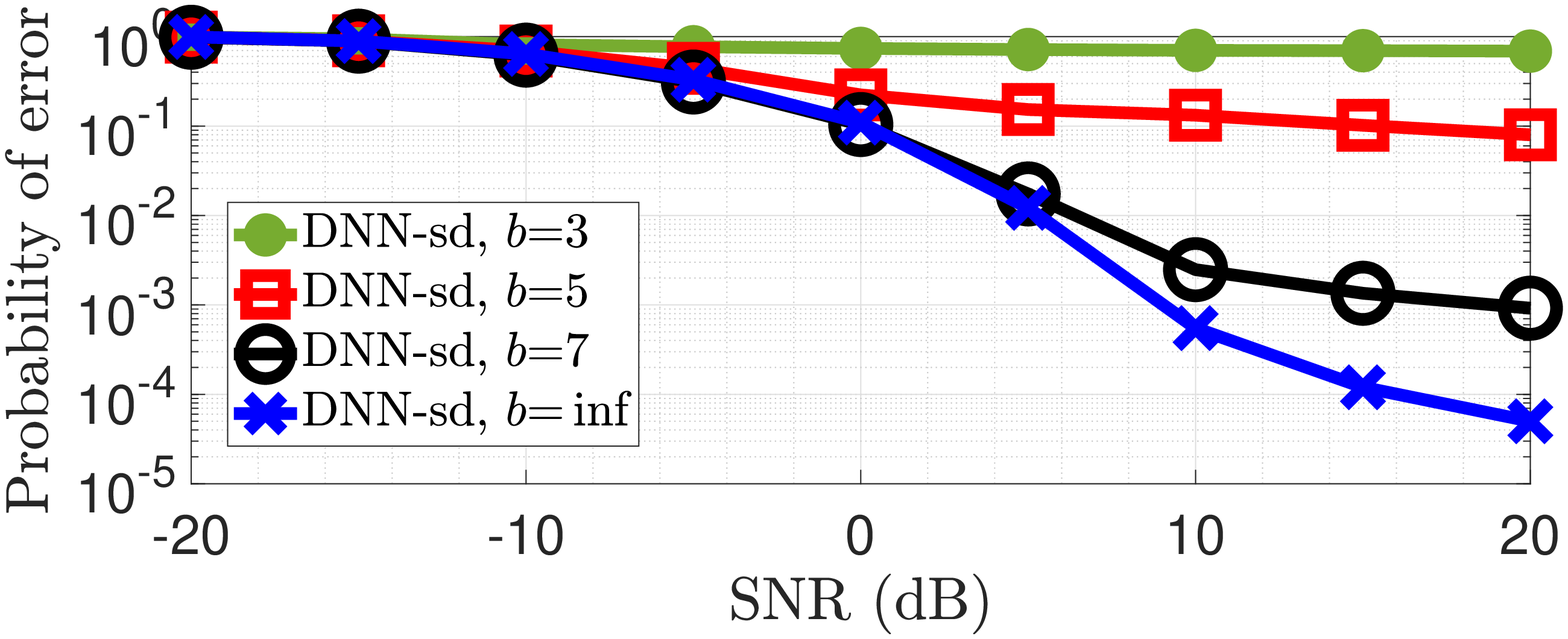}
  \vspace{-0.2cm}
  \caption{\footnotesize $1 {-} \mathbb{P}\left(k \geq 3 \right)$.}
  \label{fig:allL_kge1} 
\end{subfigure}%
\begin{subfigure}{.32\textwidth}
  \centering
  \includegraphics[width=0.95\linewidth]{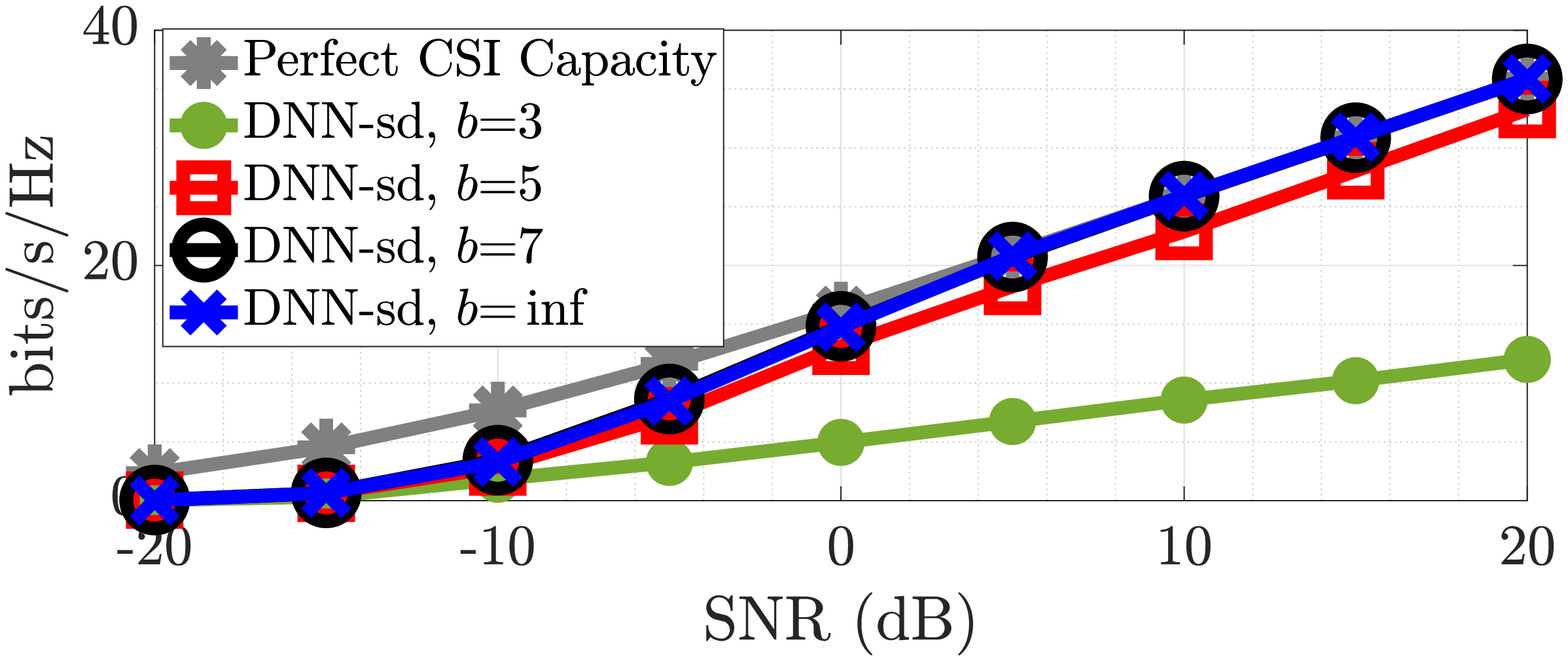}
  \vspace{-0.2cm}
  \caption{\footnotesize Outage Rate.}
  \label{fig:allL_Rout} 
\end{subfigure}
\caption{Performance of various ADC quantizations.}
\end{figure*}

\textbf{$23 \times 23$ multi-path channels:}
This is a more challenging multi-path scenario where, in addition to the previous solutions, we also investigate the performance of DNN decoding for which training is done with pure uncorrupted measurements.
\textbf{Source code selection:}
For this channel, since $n_r = n_t$, the same source code works for designing both tx-precoders and rx-combiners.
The \textbf{\textit{perfect binary Golay code}} used as a syndrome source code is a suitable choice for this problem.
It has a generator matrix of size $11{\times}22$, hence, we have $m_t {=} m_r {=} 11$, and the \textbf{total number of required channel measurements is $m_t{\times}m_r {=} 121$}.
Compared to the exhaustive search approach, which requires scanning all $529$ combinations of TX/RX angular directions, this represents $\boldsymbol{75\%}$ measurement reduction.
For compressed sensing, we use the same number of precoders and combiners, as well.

First, for the probability of path detection, shown in Fig. \ref{fig:23x23_b=6}, we notice very close performance for all three measurement decoding methods (search, DNN and DNN-sd) when integrated with our source coding solution.
This suggests that DNNs are very efficient. And, while DNN-sd has a slight edge over DNN, the improvement is not significant.
Hence, it is possible to user fewer DNNs trained over larger ranges of error components.
Interestingly, however, in Fig. \ref{fig:23x23_b=6_normalizedMSE}, we observe that the search decoding of our source-coding-based measurements tend to have significantly larger MSE. This indicates that DNNs tend to suppress the error components in the estimated channel values, even though the correct channel paths may not be efficiently discovered.
This is an artifact of DNN training which suggests that we may be able to improve the DNN's decoding performance if they were directly trained to discover the channel paths rather than just decrease the MSE of the channel estimates.

Compressed sensing, however has significantly worse performance in terms of path discover, MSE and outage Rate.
At very low $E_T$, CS's performance improves as $E_T$ gets higher, but at $E_T \geq 0.7$ the improvement stops.
This suggests that the CS-based solution is more sensitive to quantification error.
This is verified by using a higher resolution ADC ($b {=} 7$ bits) which shows a significant improvement of performance.

\textbf{$15 \times 32$ multi-path channel:}
The performance results under this scenario is very similar to the $23{\times}23$ channel setting shown above.
Specifically, Fig. \ref{fig:15x32_b=6} shows the probability of path discovery where our coding method with search decoding shows superior performance compared to DNN-sd decoding. On the contrary, search has worse MSE compared to DNN-sd (as shown in Fig.\ref{fig:15x32_b=6_normalizedMSE}). Compressed sensing, has close performance to our proposed solution at low $E_T$. At high $E_T$, however, its performance only sees marginal improvement, unlike our proposed solution which keeps approaching perfect channel discovery.
\textbf{Code selection:}
We choose codes whose $m_r = 11$ and $m_t = 16$. The \textbf{total number of required channel measurements is $176$}, which constitutes $63.3\%$ measurement reduction compared to exhaustive search.

\subsection{Effect of ADC resolution on performance:}
Now, we provide and compare results for $23 {\times} 23$ channels with ADCs of $b {=} 3,5$, and $7-$bit resolution, inas well as the ideal $b = {\infty}$.
We only show results for DNN-sd decoding since the performance of the other two methods compare similarly to the trends of the ideal ADC case shown above.
In Fig. \ref{fig:allL_mse}, we plot the MSE. As expected, we see that MSE is inversely proportional to resolution. We can also see that as SNR increases, higher resolution is required to keep the MSE close to that of ideal ADCs.
For instance, $b {=} 5$ is reasonably good up to SNR $= -5$dB, while $b {=} 7$ is very close to $b{=}\infty$ up to SNR $= 5$dB. Even at high SNR, the $b{=}7$ curve has a gap with ideal ADCs that is smaller than $5\times10^{-3}$.
Similar performance trend is exhibited for the probability of path discovery, shown in Fig. \ref{fig:allL_kge1}.
Finally, the outage rate is depicted in Fig. \ref{fig:allL_Rout}. We see that $b{=}7$ results in $R_{\text{out}}$ that is almost identical to that of the ideal ADC, and that both of which are very close to the perfect CSI capacity. We also see that at low SNR, there is a considerable gap between the perfect CSI capacity and outage rate even for ideal ADCs.

\section{Conclusion}

In this work, we treat the mmWave channel estimation problem a source compression problem.
Our goal is to reconstruct the channel matrix using a small number of measurements.
We exploit linear binary source codes for encoding the channel (do measurements) and a deep neural network based algorithm for measurement decoding (channel reconstruction).
We are able, using a small number of measurements, to obtain high quality channel estimates.
The lower bound on the achievable number of measurements is accurately characterized.
Through simulation, the superiority of our proposed solution is demonstrated, in comparison to compressed-sensing-based solutions and IEEE 802.11ad's beam alignment.

\begin{appendices}

\section{Proof Of Lemma \ref{lemma:vecIndOverFFAndR}}
\begin{proof}
Consider a set of $n-$dimensional linearly independent vectors, $\boldsymbol{v_1}, \dots, \boldsymbol{v_m}$ defined over $\mathbb{F}_2$. Then, construct a matrix $\boldsymbol{M}_{\mathbb{F}_2}$ whose columns are $\boldsymbol{v_1}, \dots, \boldsymbol{v_m}$. Since $\boldsymbol{v_i}$'s are independent, then $\boldsymbol{M}_{\mathbb{F}_2}$ has full column rank, i.e., $\boldsymbol{M}_{\mathbb{F}_2}$ is left-invertible over $\mathbb{F}_2$ ($m {\leq} n$).
Thus $\boldsymbol{M}_{\mathbb{F}_2}$ has an $m{\times}m$ minor, call it $\boldsymbol{A}_{\mathbb{F}_2}$ whose determinant is non-zero.
Now consider the matrix $\boldsymbol{M}$, defined over $\mathbb{C}$, whose elements are the $0$ and $1$ real scalars corresponding to $\mathrm{0}_{\mathbb{F}_2}$ and $\mathrm{1}_{\mathbb{F}_2}$ values of $\boldsymbol{M}_{\mathbb{F}_2}$.
Let $\boldsymbol{A}$ be its minor corresponding to $\boldsymbol{A}_{\mathbb{F}_2}$ of $\boldsymbol{M}_{\mathbb{F}_2}$.
By lemma \ref{lemma:nonZeroDets} (in Appendix \ref{append:SecondLemma}), we have $\det\left(\boldsymbol{A}\right) {\neq} 0$. Thus, $\boldsymbol{M}$ is also left-invertible, hence its columns are linearly independent.
\end{proof}
\label{append:FirstLemmaProof}

\section{Lemma \ref{lemma:nonZeroDets}}
\begin{lemma}\label{lemma:nonZeroDets}
Let $\boldsymbol{A}_{\mathbb{F}}$ and $\boldsymbol{A}$ be $n {\times} n$ matrices defined over $\mathbb{F}_2$ and $\mathbb{R}$, respectively. Let ${a_{\mathbb{F}}}_{i,j}$, the elements of $\boldsymbol{A}_{\mathbb{F}}$, be scalars in $\{\mathsf{0}_{\mathbb{F}_2} , \mathsf{1}_{\mathbb{F}_2} \}$, while $a_{i,j}$ the elements of $\boldsymbol{A}$, be scalars in $\{0 ,1\} {\subseteq} \mathbb{R}$.
Suppose that $\boldsymbol{A}_{\mathbb{F}}$ has non-zero determinant, i.e., $\det\left( \boldsymbol{A}_{\mathbb{F}} \right) {\neq} \mathsf{0}_{\mathbb{F}_2}$.
If we define $\boldsymbol{A}$ such that
$a_{i,j} = 0$ if ${a_{\mathbb{F}}}_{i,j} = \mathsf{0}_{\mathbb{F}_2}$, and $a_{i,j} = 1$ if ${a_{\mathbb{F}}}_{i,j} = \mathsf{1}_{\mathbb{F}_2}$ for all $1 {\leq} i,j {\leq} n$. Then, $\det \left( \boldsymbol{A} \right) \neq 0$.
\end{lemma}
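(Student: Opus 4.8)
The plan is to exploit the fact that the determinant is a fixed polynomial in the matrix entries with \emph{integer} coefficients, combined with the observation that reduction modulo $2$ is a ring homomorphism $\pi : \mathbb{Z} \to \mathbb{F}_2$. Since the entries of $\boldsymbol{A}$ lie in $\{0,1\} \subseteq \mathbb{Z}$ and, by the construction of $\boldsymbol{A}$ from $\boldsymbol{A}_{\mathbb{F}}$, reduce modulo $2$ exactly to the corresponding entries of $\boldsymbol{A}_{\mathbb{F}}$, I expect that the integer-valued $\det\left(\boldsymbol{A}\right)$ satisfies $\det\left(\boldsymbol{A}\right) \equiv \det\left(\boldsymbol{A}_{\mathbb{F}}\right) \pmod{2}$.

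First I would recall the Leibniz formula $\det\left(\boldsymbol{A}\right) = \sum_{\sigma \in S_n} \operatorname{sgn}(\sigma) \prod_{i=1}^{n} a_{i,\sigma(i)}$, which writes the determinant as a single integer-coefficient polynomial in the entries $a_{i,j}$. The key point is that applying $\pi$ commutes with this polynomial expression: evaluating $\det$ over $\mathbb{Z}$ and then projecting modulo $2$ gives the same element of $\mathbb{F}_2$ as first projecting the entries and then evaluating $\det$ over $\mathbb{F}_2$ (the signs $\operatorname{sgn}(\sigma) = \pm 1$ all collapse to $\mathsf{1}_{\mathbb{F}_2}$, since $-1 \equiv 1 \pmod 2$). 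Because $\pi(a_{i,j}) = {a_{\mathbb{F}}}_{i,j}$ for every $i,j$, this yields $\det\left(\boldsymbol{A}\right) \bmod 2 = \det\left(\boldsymbol{A}_{\mathbb{F}}\right)$.

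Next I would invoke the hypothesis $\det\left(\boldsymbol{A}_{\mathbb{F}}\right) \neq \mathsf{0}_{\mathbb{F}_2}$. Since $\mathbb{F}_2$ contains only the two elements $\mathsf{0}_{\mathbb{F}_2}$ and $\mathsf{1}_{\mathbb{F}_2}$, this forces $\det\left(\boldsymbol{A}_{\mathbb{F}}\right) = \mathsf{1}_{\mathbb{F}_2}$, and hence the integer $\det\left(\boldsymbol{A}\right)$ is odd. An odd integer cannot be $0$, so $\det\left(\boldsymbol{A}\right) \neq 0$, which is exactly the claim.

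There is no substantive obstacle here; once the setup is in place, the conclusion is essentially a one-line consequence of the homomorphism property. The only point that genuinely requires care — and the one I would state explicitly rather than leave implicit — is the justification that reduction modulo $2$ interchanges with the determinant, i.e., that $\pi$ is a ring homomorphism and $\det$ is a polynomial map, so the two operations commute. This interchange is the crux of the whole argument, and everything else is bookkeeping.
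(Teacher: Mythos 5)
Your proposal is correct and follows essentially the same route as the paper's proof: both reduce the claim to the observation that the Leibniz formula commutes with reduction modulo $2$, so $\det(\boldsymbol{A}) \equiv \det(\boldsymbol{A}_{\mathbb{F}}) = \mathsf{1}_{\mathbb{F}_2} \pmod 2$, forcing $\det(\boldsymbol{A})$ to be an odd integer and hence nonzero. Your explicit mention of the ring homomorphism and the collapse of the signs $\operatorname{sgn}(\sigma)$ is a slightly cleaner justification of the step the paper states more informally.
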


\begin{proof}
Recall that the determinant of a square matrix defined over a commutative ring is given by the Leibniz formula \cite{grinberg2016notes}.
Since $\mathbb{F}_2$ is a finite field (with $2$ elements), it constitutes a commutative ring. Moreover, $\mathbb{R}$ is a commutative ring \cite{grinberg2016notes}. Therefore, both determinants of $\boldsymbol{A}_{\mathbb{F}}$ and $\boldsymbol{A}$ can be computed using the same exact formula.
Since, finite field arithmetic over the prime field $\mathbb{Z}_2$ is the integers $modulo$ $2$, then we can write $\det (\boldsymbol{A}_{\mathbb{F}}) = \det ( \boldsymbol{A} ) \mod 2 $.
Thus, $\exists q \in \mathbb{Z}$ (the set of integers), such that
$\det ( \boldsymbol{A} )
= q \times 2 +  \det (\boldsymbol{A}_{\mathbb{F}})
= q \times 2 +  1$,
were the latter equation follows from the fact that $\det (\boldsymbol{A}_{\mathbb{F}}) \neq \mathsf{0}_{\mathbb{F}_2} \Longleftrightarrow \det (\boldsymbol{A}_{\mathbb{F}}) {=} \mathsf{1}_{\mathbb{F}_2}$.
Therefore, $\det (\boldsymbol{A})$ \textit{\textbf{is an odd integer}}, which implies that $\det (\boldsymbol{A}) {\neq} 0$, concluding our proof.
\end{proof}
\label{append:SecondLemma}

\section{Proof of Proposition \ref{prop:minSingularValStdForm}}
\begin{proof}
We will prove that adding an extra column $\boldsymbol{p} \in \mathbb{R}^m$ to any \textbf{full rank} matrix $\boldsymbol{M}$ of size $m{\times}k$ with $m \leq k$ (i.e., $\text{rank}\left(\boldsymbol{M}\right) {=} m$) \textbf{\textit{does not reduce}} its singular values.

Let $\boldsymbol{M_p} = \begin{pmatrix}[c|c]
\boldsymbol{M} & \boldsymbol{p}
\end{pmatrix}$ be an $m{\times}k{+}1$ matrix. Then, we can obtain the singular values of $\boldsymbol{M_p}$ as the positive square roots of the eigenvalues of $\boldsymbol{M_p}\boldsymbol{M}^T_{\boldsymbol{p}}$, which can be written as:
\whencolumns
{
\small
\begin{equation}
\boldsymbol{M_p} \boldsymbol{M}^T_{\boldsymbol{p}}
=
\begin{pmatrix}[c|c]
\boldsymbol{M} & \boldsymbol{p}
\end{pmatrix}
\begin{pmatrix}[c|c]
\boldsymbol{M} & \boldsymbol{p}
\end{pmatrix}^T
=
\boldsymbol{M} \boldsymbol{M}^T + \boldsymbol{p} \boldsymbol{p}^T
\end{equation}
}
{
\begin{align}
\boldsymbol{M_p} \boldsymbol{M}^T_{\boldsymbol{p}}
&=
\begin{pmatrix}[c|c]
\boldsymbol{M} & \boldsymbol{p}
\end{pmatrix}
\begin{pmatrix}[c|c]
\boldsymbol{M} & \boldsymbol{p}
\end{pmatrix}^T \\
&=
\boldsymbol{M} \boldsymbol{M}^T + \boldsymbol{p} \boldsymbol{p}^T
\end{align}
}
Since $\boldsymbol{p} \boldsymbol{p}^T {\succeq} 0$ (i.e., positive semidefinite), then we must have
$\boldsymbol{M_p} \boldsymbol{M}^T_{\boldsymbol{p}} - \boldsymbol{M} \boldsymbol{M}^T  \succeq 0$.
Let $\sigma_i\left( \cdot \right)$ denote the $i^{\text{th}}$ largest singular value of a matrix. Then, we have $\sigma_i\left( \boldsymbol{M_p} \boldsymbol{M}^T_{\boldsymbol{p}}\right)
\geq
\sigma_i\left( \boldsymbol{M} \boldsymbol{M}^T\right) \; \forall i = 1, \dots, m$, which implies
\begin{align}
\Longrightarrow
\sigma_{\text{min}}\left( \boldsymbol{M_p} \boldsymbol{M}^T_{\boldsymbol{p}}\right)
&\geq
\sigma_{\text{min}}\left( \boldsymbol{M} \boldsymbol{M}^T\right)\\
\Longrightarrow
\sigma_{\text{min}}\left( \boldsymbol{M_p} \right)
&\geq
\sigma_{\text{min}}\left( \boldsymbol{M} \right) \label{eqn:helpingResult}
\end{align}
Define $\boldsymbol{G}^{(i)} {\triangleq} \begin{pmatrix}[c|c|c|c]
\boldsymbol{g}_1 & \boldsymbol{g}_2 & \dots & \boldsymbol{g}_i
\end{pmatrix}$, where $\boldsymbol{g}_j$ is the $j^{th}$ column of $\boldsymbol{G}$.
Then, by sequentially applying the result shown in Eq. (\ref{eqn:helpingResult})  (by adding columns of $\boldsymbol{P}$ in Eq. (\ref{eqn:minSingularValStdForm})), we obtain
\whencolumns
{
\begin{equation*}
\small
\sigma_{\text{min}}\left( \boldsymbol{G} \right) =
\sigma_{\text{min}}\left( \boldsymbol{G}^{(n)} \right) \geq\sigma_{\text{min}}\left( \boldsymbol{G}^{(n-1)} \right) \geq
\dots
\geq
\sigma_{\text{min}}\left( \boldsymbol{G}^{(m+1)} \right) \geq
\sigma_{\text{min}}\left( \boldsymbol{G}^{(m)} \right)
=\sigma_{\text{min}}\left( \boldsymbol{I}_m \right)
= 1 \qedhere
\end{equation*}
}
{
\begin{equation}
\begin{split}
\sigma_{\text{min}}\left( \boldsymbol{G} \right) =
\sigma_{\text{min}}\left( \boldsymbol{G}^{(n)} \right) \geq\sigma_{\text{min}}\left( \boldsymbol{G}^{(n-1)} \right) \geq
\dots \\
\geq
\sigma_{\text{min}}\left( \boldsymbol{G}^{(m+1)} \right) \geq
\sigma_{\text{min}}\left( \boldsymbol{G}^{(m)} \right)
=\sigma_{\text{min}}\left( \boldsymbol{I}_m \right)
= 1 \qedhere
\end{split}
\end{equation}
}
\end{proof}
\label{append:SecondPropositionProof}

\end{appendices}

\bibliographystyle{IEEEtran}
\linespread{1}
\bibliography{MyRef}

\begin{IEEEbiography}
[{\includegraphics[width=1in,height=1.25in,clip,keepaspectratio]{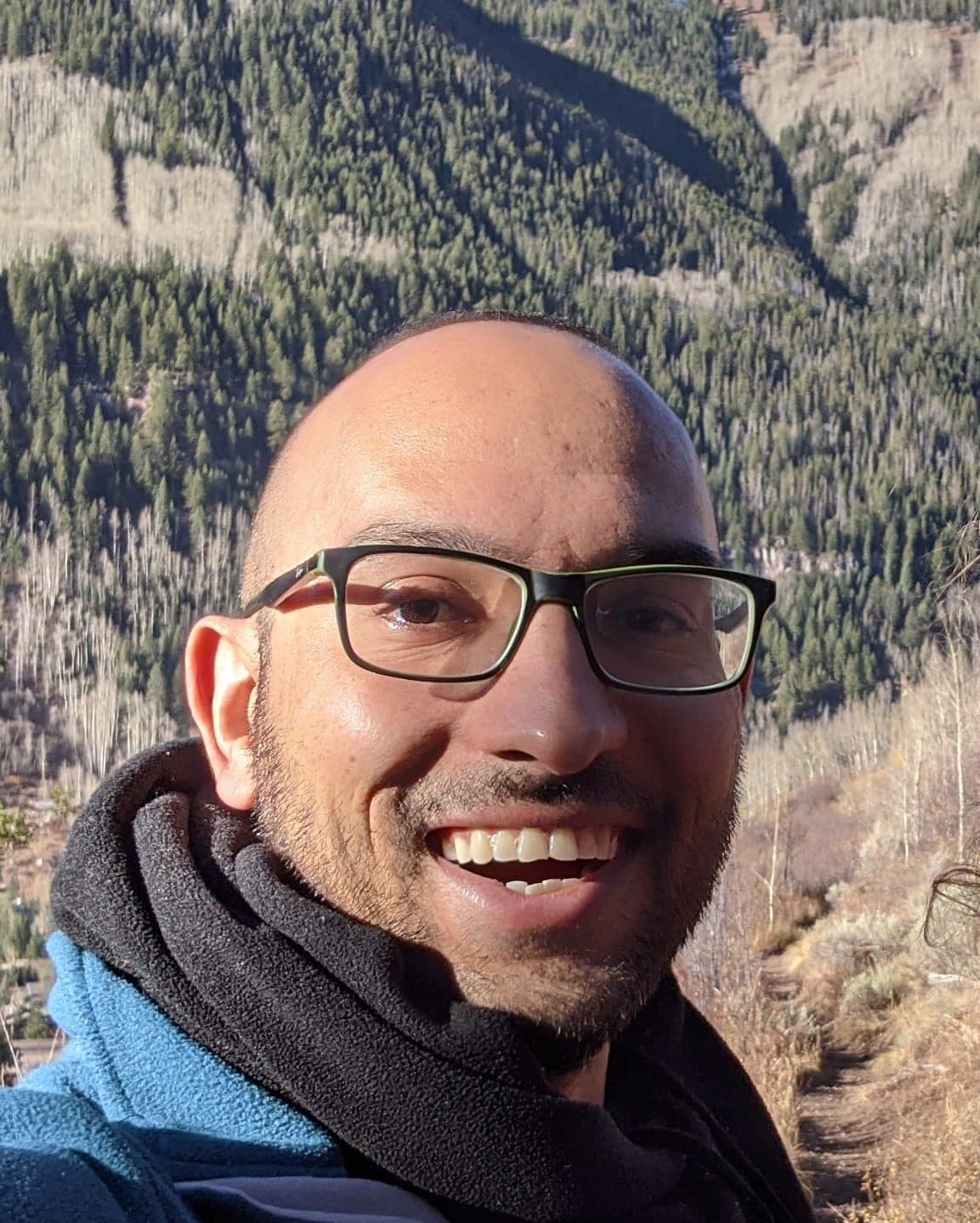}}]{Yahia Shabara}
received his B.Sc. degree in Electrical Engineering from Alexandria University, Alexandria, Egypt, in 2012 and the M.Sc. degree in wireless communications from Nile University, Giza, Egypt, in 2015. He is currently pursuing the Ph.D. degree with the Dept. of Electrical and Computer Engineering, The Ohio State University, Columbus, Ohio. His research interests include wireless communications, computer networks, machine learning, information theory and network security.
\end{IEEEbiography}

\begin{IEEEbiography}[{\includegraphics[width=1in,height=1.25in,clip,keepaspectratio]{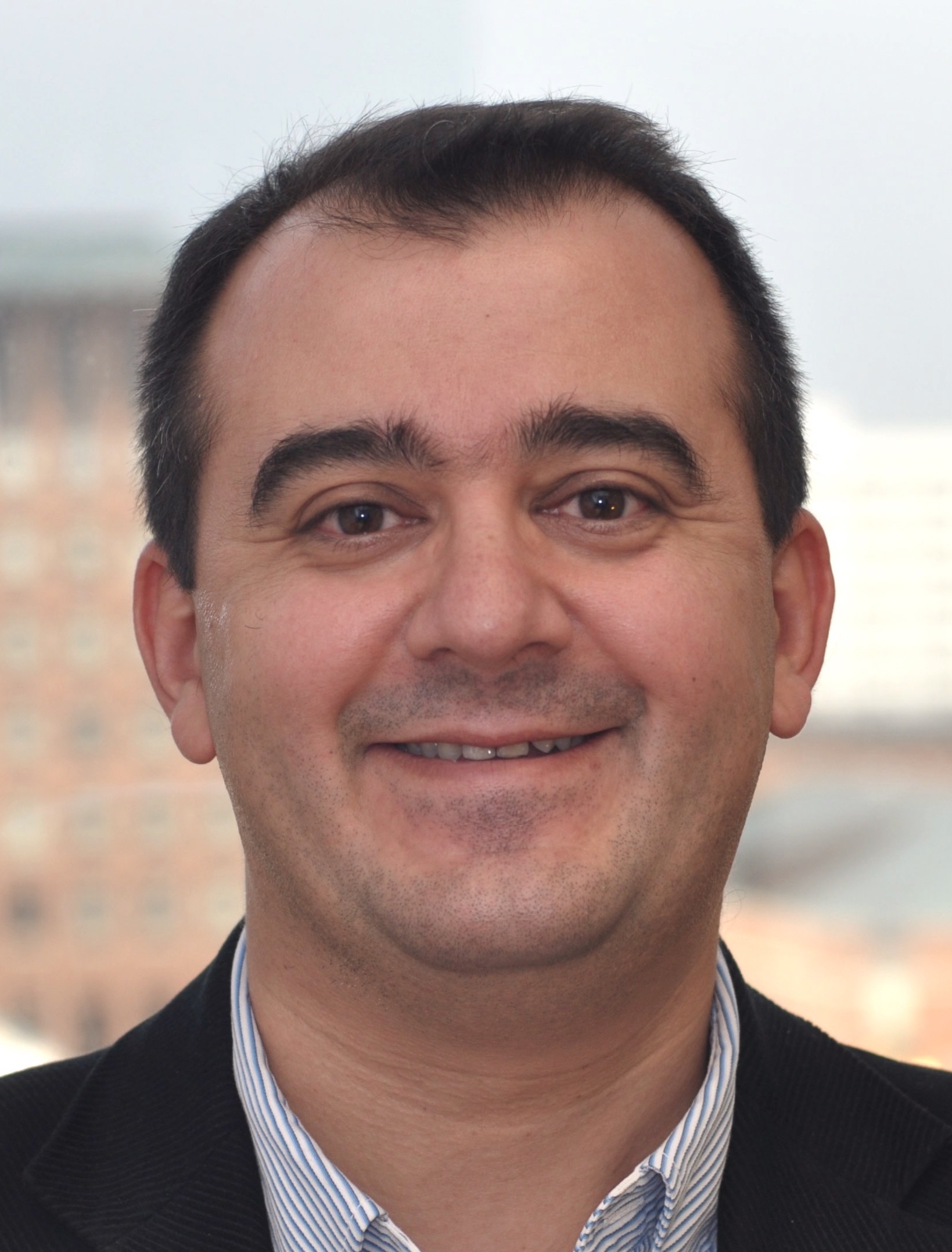}}]{Eylem Ekici}
(S’99-M’02-SM’11-F’17) received his B.S. and M.S. degrees in computer engineering from Bogazici University, Turkey, in 1997 and 1998, respectively, and his Ph.D. degree in electrical and computer engineering from the Georgia Institute of Technology in 2002. Currently, he is a Professor with the Department of Electrical and Computer Engineering, The Ohio State University. His current research interests are in the general area of wireless communication systems and networks, with a focus on algorithm design and resource management for mmWave, dynamic spectrum, and vehicular communication systems. He served as the general Co-Chair of ACM MobiCom 2012 and ACM MobiHoc 2021. He was also the TPC Co-Chair of the IEEE INFOCOM 2017. He is an Associate Editor-in-Chief for the IEEE TRANSACTIONS ON MOBILE COMPUTING, and a former Associate Editor for the IEEE/ACM TRANSACTIONS ON NETWORKING, the IEEE TRANSACTIONS ON MOBILE COMPUTING, and Elsevier Computer Networks.
\end{IEEEbiography}

\begin{IEEEbiography}
[{\includegraphics[width=1in,height=1.25in,clip,keepaspectratio]{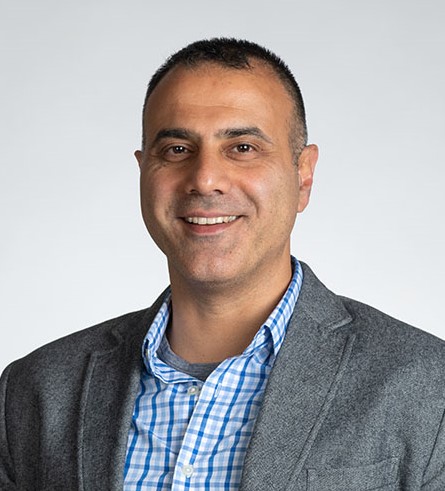}}]
{Can Emre Koksal}
(S’96–M’03–SM’13) received the B.S. degree in Electrical Engineering from the Middle East Technical University in 1996, and the S.M. and Ph.D. degrees from MIT in 1998 and 2002, respectively, in Electrical Engineering and Computer Science. He was a Postdoctoral Fellow at MIT until 2004, and a Senior Researcher at EPFL until 2006. Since then, he has been with the Electrical and Computer Engineering Department at Ohio State University, currently as a Professor. His general areas of interest are wireless communication, cybersecurity, communication networks, information theory, stochastic processes, and financial economics. He is also the Founder and CEO of DAtAnchor and several  technologies he has built have been licensed to a few enterprises and start-ups.

He is the recipient of Columbus Business First - Inventor of the Year Award in 2020, the National Science Foundation CAREER Award in 2011, a finalist of the Bell Labs Prize in 2015, OSU CoE Innovator Award in 2016 and 2020, OSU CoE Lumley Research Award in 2011 and 2017, and the co-recipient of an HP Labs - Innovation Research Award in 2011. Papers he co-authored received the best paper award in IEEE WiOpt 2018 and the best student paper candidate in ACM MOBICOM 2005. He has served as an Associate Editor for IEEE Transactions on Information Theory, IEEE Transactions on Wireless Communications, and Elsevier Computer Networks.
\end{IEEEbiography}
\vfill

\end{document}